\documentclass[11pt]{article}
\usepackage[a4paper,left=17mm,top=20mm,right=17mm,bottom=25mm]{geometry}

\usepackage{authblk}
\usepackage[numbers, sort&compress]{natbib}
\usepackage[margin=30pt,font=small,labelfont=bf]{caption}
\usepackage{amsmath,amsopn,amsthm,amssymb}

\usepackage{graphicx}
\usepackage{newtxtext}
\usepackage{newtxmath}
\usepackage[mathscr]{euscript}
\usepackage{natbib}
\usepackage{mathtools}
\usepackage{enumitem}
\usepackage{xspace}
\usepackage{scalerel}
\usepackage{bold-extra}

\usepackage{tikz}
\usetikzlibrary{positioning,calc}

\usepackage[
 	colorlinks=true,
	urlcolor=black,
	linkcolor=blue,
    citecolor=blue
]{hyperref}

\newcommand{\Hasse}{\mathscr{H}}

\renewcommand{\S}{\mathcal{S}}

\DeclareMathOperator{\CC}{\mathtt{C}}

\DeclareMathOperator{\indeg}{indeg}
\DeclareMathOperator{\outdeg}{outdeg}

\newcommand{\dist}{d_{\ominus}}

\newtheorem{theorem}{Theorem}[section]
\newtheorem{lemma}[theorem]{Lemma} 
\newtheorem{corollary}[theorem]{Corollary}
\newtheorem{proposition}[theorem]{Proposition}

\newtheorem{definition}[theorem]{Definition}

\newtheorem{remark}[theorem]{Remark}

\usepackage{xcolor}

\providecommand{\keywords}[1]{\textbf{\textit{Keywords: }} #1}

\title{The $\ominus$-metric to compare phylogenetic networks}

\author[1,2]{Marc Hellmuth}

\author[3]{Manuel Lafond}

\author[4]{Guillaume E. Scholz}

\affil[1]{Department of Computer Science, Leipzig University,  DE-04109 Leipzig, Germany}

\affil[2]{Department of Mathematics, Faculty of Science,
  Stockholm University, SE-10691 Stockholm, Sweden} 

\affil[3]{Université de Sherbrooke, Canada}

\affil[4]{Institute of Mathematics and Computer Science, Greifswald University, DE-17849 Greifswald, Germany}

\date{\ }

\begin{document}
\sloppy

\maketitle

\begin{abstract}
We introduce two novel distances for comparing rooted phylogenetic networks based on the
$\ominus$-operator, which removes a vertex while preserving the ancestor relations among the
remaining vertices. The distance $\dist$ measures the minimum number of such removals needed to
obtain isomorphic networks, whereas $\dist^-$ ignores shortcut arcs and therefore compares the
induced ancestry structures. We show that $\dist$ is a metric up to leaf-fixing isomorphism and that
$\dist^-$ is a metric up to shortcut-free isomorphism. Moreover, both distances extend the
Robinson--Foulds distance on phylogenetic trees and are bounded below by the hardwired
cluster distances. For several broad network classes, including tree-child, normal, level-$1$, and
regular networks, $\dist^-$ can be computed in polynomial time. In contrast, computing $\dist$ is
NP-hard, W[2]-hard when parameterized by the distance value, and admits no polynomial-time
constant-factor approximation unless $\mathrm{P}=\mathrm{NP}$. Although computing $\dist^-$ is
NP-hard in general, for distinct-cluster networks it reduces to \textsc{Vertex Cover}, yielding a
fixed-parameter algorithm and a polynomial-time $2$-approximation.
\end{abstract}

\smallskip
\noindent
\keywords{phylogenetic networks; network comparison; operational distance;
Robinson-Foulds distance; hardwired clusters; shortcut-free networks; distinct-cluster networks;
vertex cover; fixed-parameter tractability; approximation algorithms}


\section{Introduction}

Reconstructing evolutionary histories is a central task in computational biology. Phylogenetic trees
provide a suitable model when evolution proceeds exclusively through branching events. However,
processes such as hybridization, horizontal gene transfer, and recombination may produce reticulate
patterns of ancestry that cannot be represented faithfully by a tree. Rooted phylogenetic networks
provide a more general framework for describing such evolutionary histories, and a growing number of
methods are available for reconstructing them from biological
data~\cite{solis2016inferring,wen2018inferring,lutteropp2022netrax,kong2025inference}.

The increasing number of reconstruction methods creates a corresponding need for meaningful ways to
compare their outputs. For example, one may wish to compare an inferred network with a simulated or
otherwise known reference network, compare the results produced by different reconstruction methods,
or quantify the variability among networks obtained from different data sets. This is generally more
difficult for networks than for trees. In a phylogenetic tree, features such as clusters, displayed
triplets, and ancestor relations are tightly linked and determine the tree topology
uniquely~\cite{SempleSteel2003}. For phylogenetic networks, however, these features capture
different aspects of the structure, and distinct networks may agree on some of them while differing
on others~\cite{gambette2012encodings,willson2016comparing,Hellmuth2023}. Consequently, there is no single
universally accepted distance for comparing arbitrary phylogenetic
networks~\cite{huson2010phylogenetic,Wang2019}.

Several dissimilarity measures have been proposed. One of the most widely used is the
\emph{hardwired cluster distance}, which counts the clusters that occur in one network but not in
the other~\cite{cardona2008metrics}. This generalizes the classical Robinson-Foulds distance on
phylogenetic trees~\cite{robinson1981comparison}. For general networks, however, the hardwired
cluster distance is only a pseudometric: two non-isomorphic networks may induce the same cluster set
and therefore have distance zero. Other \emph{feature-based} measures rely, for example, on
softwired clusters~\cite{lu2017program,gambette2017challenge}, $\mu$-representations
\cite{cardona2008comparison,cardona2024comparison,maxfield2025dissimilarity,reichling2026metrics}, or displayed rooted
triplets~\cite{gambette2012encodings,jansson2021computing}. In general, these measures are likewise
limited to pseudometrics. On certain restricted classes of networks, however, some of these
representations do give rise to genuine metrics.

A different approach is provided by \emph{operational} distances. Rather than comparing selected
features of two networks, an operational distance asks for a minimum number of modifications needed
to transform one network into another or to reduce both networks to a common structure. Notable
examples include distances based on rooted nearest-neighbor interchange
(rNNI)~\cite{gambette2017rearrangement,erdhos2021rooted}, tail-moves \cite{JanssenEtAl2018} and rooted subtree or subnet
prune-and-regraft operations (rSPR)~\cite{bordewich2017lost,klawitter2018subnet}. Such distances
often distinguish networks more effectively, but many are defined only for restricted classes of
networks or for networks having the same number of reticulations. Other operational measures apply,
for example, to orchard networks or LGT-networks~\cite{landry2022defining,marchand2026comparison}.
The recently introduced contraction distance can compare arbitrary networks, but it is only a
semimetric because it does not satisfy the triangle inequality~\cite{marchand2025finding}. 
Moreover, most operational distances are NP-hard to compute, even on restricted classes of
phylogenetic networks~\cite{JanssenEtAl2018}.

In summary, there is still no universally satisfactory metric for comparing phylogenetic networks.
This motivates the search for novel operational distances that apply to broad classes of networks,
distinguish non-isomorphic structures, satisfy the metric axioms, and retain a transparent
interpretation in terms of ancestry. In this paper, we introduce two such distances based on the
$\ominus$-operator. This operator was introduced in \cite{SCHS:24} and transforms a network $N$
into $N\ominus v$ by deleting a vertex $v$ together with all incident arcs and then connecting each
parent of $v$ with each child of $v$. It can therefore be viewed as a natural generalization of
suppressing vertices in rooted phylogenetic networks. The $\ominus$-operator has since been used to
simplify DAGs while preserving the ancestor relations among the retained vertices
\cite{HL:24,HELLMUTH2026584} and to relate normal networks to a regularization procedure based on
successive vertex removal~\cite{HLM:26-RegNormArxiv}. Since the order in which distinct vertices are
removed is irrelevant, the operation naturally extends to vertex sets: for $W\subseteq V(N)$, we
write $N\ominus W$ for the DAG obtained by successively applying $\ominus$ to all vertices in $W$.
These properties make the $\ominus$-operator a natural basis for operational distances between
phylogenetic networks.

\begin{figure}[t]
\centering
\includegraphics[width=0.9\textwidth]{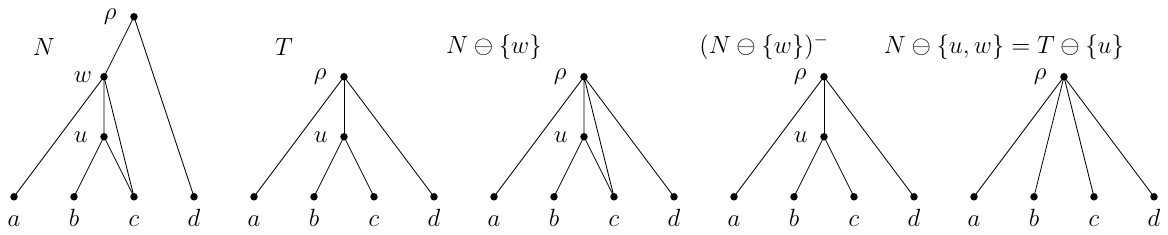}
\caption{
A network $N$, a phylogenetic tree $T$, and several $\ominus$-reductions. The arc $\rho\to c$ in
$N\ominus w$ is a shortcut, since there is also the directed path $\rho\to u\to c$. Here, $
(N\ominus w)^-\simeq T$, and therefore $\dist^-(N,T)=1$. In contrast, $N\ominus w$ is not isomorphic
to $T$. To obtain isomorphic networks without removing shortcuts, we must instead ``suppress'' both $u$
and $w$ from $N$ and  $u$ from $T$. This yields $ N\ominus\{u,w\}\simeq T\ominus u$, where
both reduced networks are star trees. Hence, $\dist(N,T)=3$.
}
\label{fig:dist-vs-dist-minus}
\end{figure}

Let $N$ and $N'$ be networks on the same leaf set. We define the $\ominus$-distance $\dist(N,N')$ as
the minimum total number $|W|+|W'|$ of internal vertices that must be removed from $N$ and $N'$,
respectively, so that the resulting networks are isomorphic: $ N\ominus W\simeq N'\ominus W'$. An
optimal pair $(W,W')$ therefore identifies a common $\ominus$-reduction with the maximum possible
total number of retained vertices. In contrast to a distance that merely counts differing features,
the resulting common reduction also indicates which vertices and ancestry relations are responsible
for the differences between the networks, see Figure~\ref{fig:dist-vs-dist-minus}
for an illustrative example.

Applying $\ominus$ may create shortcuts, that is, arcs whose endpoints are already connected by a
longer directed path. Such arcs are redundant from an ancestry perspective, since removing them does
not change the ancestor relation. For a network $N$, let $N^-$ denote the shortcut-free DAG obtained
from $N$ by removing all shortcuts. The network $N^-$ is the unique inclusion-minimal DAG on $V(N)$
that induces the same ancestor relation as $N$. Passing to shortcut-free reductions therefore
removes representation-dependent redundancy and ensures that networks encoding the same ancestry
structure are treated as equivalent. This motivates the relaxed distance $\dist^-(N,N')$, which
requires only that $ (N\ominus W)^-\simeq (N'\ominus W')^-$. Thus, $\dist$ compares the complete arc
structure of the reduced networks, whereas $\dist^-$ compares their induced ancestor relations. In
particular, $\dist^-$ is insensitive to redundant shortcut arcs that may arise during the
$\ominus$-reduction itself.
Beyond providing a numerical measure of dissimilarity, the $\ominus$-distances also produce
interpretable witnesses for the differences between two networks. An optimal pair $(W,W')$ of deletion sets
identifies vertices whose removal resolves the structural or ancestral disagreements, while the
resulting common $\ominus$-reduction represents the ancestry structure shared by both networks.
Thus, the distances may be used not only to rank or benchmark reconstructed networks, but also to
localize their differences and extract a common evolutionary core \cite{HL:24}.

We establish several structural and algorithmic properties of these distances. In
Sections~\ref{sec:prelim} and~\ref{sec:ominus}, we introduce the necessary definitions and recall
basic properties of the $\ominus$-operator. We then show in Section~\ref{sec:metric} that $\dist$ is
a metric on networks with a fixed leaf set, considered up to leaf-fixing isomorphism, and that
$\dist^-$ is a metric up to shortcut-free isomorphism. Hence, unlike many existing network
dissimilarities, both distances satisfy the triangle inequality and distinguish precisely the
structures they are intended to compare.

In Section~\ref{sec:ominus-hardwired}, we relate $\dist$ and $\dist^-$ to two cluster-based
dissimilarity measures. For a network $N$, let $\mathfrak C_N$ denote its set of clusters and let
$\mathfrak M_N$ denote the corresponding multiset, in which clusters are counted according to their
multiplicities. We show that, for arbitrary networks $N$ and $N'$ on the same leaf set, \[
d_{RF}(N,N') = |\mathfrak C_N\Delta\mathfrak C_{N'}| \leq |\mathfrak M_N\Delta\mathfrak M_{N'}| \leq
\dist^-(N,N') \leq \dist(N,N'), \] where $\Delta$ denotes the symmetric difference. Here, $d_{RF}$ 
is the aforementioned hardwired cluster distance, which extends the Robinson-Foulds distance to networks, 
while $|\mathfrak M_N\Delta\mathfrak M_{N'}|$ is its multiset version. We identify several
broad classes of networks for which some of these bounds are attained. In particular, on
phylogenetic trees, both $\dist$ and $\dist^-$ coincide with the Robinson-Foulds distance. Thus, the
$\ominus$-metrics extend the Robinson--Foulds distance from trees to broad classes of phylogenetic
networks while remaining genuine metrics on arbitrary networks. We further show that $\dist^-$ can
be computed in polynomial time for several important classes, including tree-child networks, normal
networks, binary level-$1$ networks, and semi-regular and regular networks. As a consequence, for
networks $N$ and $N'$ belonging to any of these classes, it can also be decided in polynomial time
whether $N^-\simeq (N')^-$.

Finally, we investigate the computational complexity of the two distances. The preceding results
show that $\dist^-(N,N')$ can be computed in polynomial time for several broad classes of networks.
In contrast, computing $\dist$ remains NP-hard even when restricted to some of these classes
as shown in Section~\ref{sec:dist-hard}. 
Moreover, computing $\dist$ is W[2]-hard when parameterized by the distance value and admits no
polynomial-time constant-factor approximation unless $\mathrm{P}=\mathrm{NP}$.

In Sections~\ref{sec:distMinus-VC} and~\ref{sec:NPHard-dist-minus}, we study the computational
complexity of $\dist^-$. We first consider \emph{distinct-cluster networks}, in which distinct
vertices induce distinct clusters. This is a natural and broad class that includes, for example,
all regular networks and thus all phylogenetic trees.
The key ideas are most transparent for \emph{DC-similar networks}, that is, distinct-cluster
networks with the same cluster set. For two such networks, we construct the \emph{bad ancestry
graph}: its vertices correspond to their canonically identified vertices, and an edge records a
disagreement between the ancestor relations induced by the two networks. We show that computing
$\dist^-$ is equivalent to computing a minimum vertex cover of this graph. Consequently, $\dist^-$
is fixed-parameter tractable when parameterized by its value and admits a polynomial-time
$2$-approximation. Using a reduction to the common cluster set, these results extend from
DC-similar networks to arbitrary distinct-cluster networks.
The vertex-cover formulation is also useful in practice. It allows one to apply established exact,
fixed-parameter, approximation, and integer-programming methods directly to the bad ancestry graph.
Moreover, a vertex cover translates into deletion sets for the original networks. Thus, the
computation provides not only the distance value but also vertices whose removal resolves the
disagreements between the two ancestor relations. 

We close this contribution with a short
summary and outlook in Section~\ref{sec:outlook}.


\section{Preliminaries}
\label{sec:prelim}

\paragraph{Sets and Multisets.} 
In what follows, $X$ will always denote a finite non-empty set.
A \emph{multiset} is a set-like collection in which elements may occur more than once. The
number of occurrences of an element $x$ in a multiset $A$ is called its
\emph{multiplicity} and is denoted by $m_A(x)$.
For two sets $A$ and $B$, we write $ A\Delta B \coloneqq (A\setminus B)\cup(B\setminus A) $ for
their symmetric difference. Hence $ |A\Delta B|=|A\setminus B|+|B\setminus A|$.
We use the same notation for multisets, where elements are counted with
multiplicity. Thus, if an element $x$ has multiplicity $m_A(x)$ in $A$ and
multiplicity $m_B(x)$ in $B$, then $x$ contributes
$|m_A(x)-m_B(x)|$ to $|A\Delta B|$.

\paragraph{Graphs, DAGs and networks.}
A directed graph $G=(V,E)$ consists of a non-empty vertex set $V(G)\coloneqq V$ and an arc set
$E(G)\subseteq V\times V$. 
We sometimes write for arcs $u\to v$ instead of $(u,v)$ in directed graphs $G$.

We of write $u\leadsto v$ to denote a directed $uv$-path in $G$.
If there is a $uv$-path and the arc $v\to u$ in $G$, then $G$
contains a  directed cycle.
A directed graph without directed cycles is called a \emph{directed
acyclic graph}, or \emph{DAG}.

We write $v\preceq_G w$ if and only if there is a directed $wv$-path 
with $v=w$ allowed in $G$. If one of $v\preceq_G w$ or $w\preceq_G v$ holds, then $v$ and $w$ are \emph{$\preceq_G$-comparable}
and, otherwise,  $v$ and $w$ are \emph{$\preceq_G$-incomparable}. If $v\preceq_G w$ and $v\neq w$, we write $v\prec_G w$.

Let $G$ be a DAG.
Then, $G$ is \emph{phylogenetic} if it does not contain a vertex $v$ such that $\outdeg_G(v)\coloneqq\left|\left\{u\in V \colon (v,u)\in
E(G)\right\}\right|=1$ and
$\indeg_G(v)\coloneqq\left|\left\{u\in V \colon (u,v)\in
E(G)\right\}\right|\leq 1$.

If $u\to v$ is an arc in $G$, then $v$ a \emph{child} of $u$ and $u$ a \emph{parent} of $v$. A
vertex $x$ in $G$ without children is called a \emph{leaf} of $G$. We denote by $L(G)\subseteq V(G)$
the set of leaves of $G$. If $L(G)=X$, then $G$ is a \emph{DAG on $X$}. A vertex $v\in V(G)$ that
has no parents, is called a \emph{root} of $G$, and the set of roots of $G$ is denoted by $R(G)$.
Note that $L(G)\neq \emptyset$ and $R(G)\neq \emptyset$ for all DAGs $G$ \cite{HL:24}. A vertex $v$
of $G$ is a \emph{tree-vertex} if $\indeg_G(v) \leq 1$, and a \emph{reticulation-vertex} if
$\indeg_G(v) \geq 2$. Note that a leaf could be a reticulation-vertex. We denote by $V^0(G)$ the set
of all vertices of $G$ that are neither leaves nor roots, that is, $V^0(G)=V(G) \setminus (L(G) \cup
R(G))$. A \emph{(rooted) network} $N$ is a DAG for which $|R(N)|=1$, i.e., $N$ has a unique root
$\rho\in V(N)$. A \emph{(rooted) tree} is a network that does not contain vertices
reticulation-vertices. A \emph{star tree on $X$} is the tree $T$ with vertex set $\{\rho\}\cup X$
and arc set $\{(\rho,x)\mid x\in X\}$.

An arc $e=(u,w)$ in a DAG $G$ is a
\emph{shortcut} if there is a directed $uw$-path that does not contain the arc $e$. A DAG without shortcuts is \emph{shortcut-free}. 
Throughout the paper, we will use the \emph{shortcut-free version} $G^-$ of a DAG $G$
that is obtained from $G$ by removal of all of its shortcuts (without removing the the incidend vertices).    
\begin{lemma}[{\cite[L.~2.5]{HL:24}}]\label{lem:properties-SF-G_NEW}
    Let $G$ be a DAG on $X$. Then, $G^-$ is a shortcut-free DAG on $X$. Moreover, $V(G)=V(G^-)$ and,  for all $u, v \in V(G)$, we have
    $u \preceq_{G} v$ if and only if $u \preceq_{G^-} v$. 
\end{lemma}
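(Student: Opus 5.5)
The plan is to prove the reachability statement first, since the other claims follow from it. The vertex set is unchanged by definition: $G^-$ is obtained from $G$ by deleting arcs only, so $V(G)=V(G^-)$. Since $E(G^-)\subseteq E(G)$, every directed path in $G^-$ is a directed path in $G$. Hence $u\preceq_{G^-} v$ implies $u\preceq_G v$, and $G^-$ is acyclic as a subgraph of the DAG $G$.

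For the converse, suppose $u\preceq_G v$ with $u\neq v$. Among all directed $vu$-paths in $G$, I choose a \emph{longest} one, $P$; this exists because $G$ is finite and acyclic. I claim that no arc of $P$ is a shortcut. Suppose an arc $(a,b)$ of $P$ were a shortcut. Then there is an $ab$-path $Q$ avoiding $(a,b)$, so $Q$ has length at least $2$. Replacing $(a,b)$ by $Q$ yields a walk from $v$ to $u$. This walk is a path: a repeated vertex would give a directed cycle, since a vertex of $Q$ other than $a,b$ lying also on $P$ would be reachable both before $a$ and after $b$ along $P$. The new path is strictly longer than $P$, contradicting maximality. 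So all arcs of $P$ survive in $G^-$, and therefore $u\preceq_{G^-} v$.

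The main subtlety is that shortcuts are defined relative to $G$, not relative to the graph from which they are removed. Removing all of them simultaneously could in principle break the path witnessing some other arc as a shortcut. The longest-path argument sidesteps this, because it produces a witness path using only non-shortcut arcs. As an immediate consequence, $G^-$ is shortcut-free. If $(a,b)\in E(G^-)$ had an alternative $ab$-path in $G^-$, that path would also lie in $G$ and avoid $(a,b)$, so $(a,b)$ would be a shortcut of $G$ and would have been removed.

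Finally, I show that $L(G^-)=X$. A leaf of $G$ has no out-arcs in $G$, so it has none in $G^-$. Conversely, let $w$ be a non-leaf of $G$ with some child $c$. Then $c\prec_G w$, so by the reachability equivalence $c\prec_{G^-} w$. Hence $w$ still has an out-arc in $G^-$ and is not a leaf there. Thus the leaf sets coincide and $G^-$ is a shortcut-free DAG on $X$.
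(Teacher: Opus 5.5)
Your proof is correct and complete. The paper does not prove this lemma itself; it imports it from \cite[L.~2.5]{HL:24}, so there is no in-paper argument to compare yours against.

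Your longest-path argument is the standard transitive-reduction argument. A longest $vu$-path is a maximal chain of $\preceq_G$ between $u$ and $v$, so its consecutive pairs are cover relations. These covers are exactly the non-shortcut arcs, which is the same characterization the paper uses in the proof of Lemma~\ref{lm:sfsf}. Your handling of the simultaneous-removal subtlety, the shortcut-freeness of $G^-$, and the equality of leaf sets are all sound.

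One sentence should be rephrased: the claim that a repeated vertex ``would be reachable both before $a$ and after $b$ along $P$'' is loose. The precise statement is as follows. Any internal vertex $w$ of $Q$ satisfies $b\prec_G w\prec_G a$. It therefore cannot lie on $P$ strictly before $a$, which would give $a\prec_G w$. Nor can it lie strictly after $b$, which would give $w\prec_G b$. Either case produces a directed cycle.
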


\paragraph{Isomorphisms.}
Two DAGs $G$ and $H $ are \emph{graph isomorphic} if there is a graph isomorphism between $G$ and
$H$, i.e., a bijective map $\varphi\colon V(G)\to V(H)$ such that $(u,v)\in E(G)$ if and only if
$(\varphi(u),\varphi(v))\in E(H)$. In this case, we write $G\approx H$. Moreover, if $G$ and $H$ are
DAGs that also have the same leaf set $X$, then $G$ and $H$ are \emph{isomorphic}, in
symbols $G\simeq H$, if there is a graph isomorphism $\varphi$ between $G$ and $H$ that satisfies
$\varphi(x) = x$ for all $x\in X$.

Let $N,N'$ be two networks.
For $V \subseteq V(N)$ and $V' \subseteq V(N')$, we say that a map $\varphi: V \to V'$ is \emph{ancestor-preserving} if $\varphi$ is a bijective map, 
and for all $u,v \in V(N) \setminus W$, 
$u \preceq_N v$ if and only if $\varphi(u) \preceq_{N'} \varphi(v)$. 
The following simple result shows that ancestor-preserving maps on shortcut-free networks are graph isomorphisms.

\begin{lemma}\label{lm:isom}
Let $N, N'$ be two shortcut-free networks. If there exists an ancestor-preserving map $\varphi:V(N) \to V(N')$, then $\varphi$ is a graph isomorphism.
\end{lemma}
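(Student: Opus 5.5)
The plan is to show that $\varphi$ maps arcs to arcs and that $\varphi^{-1}$ does the same. This is enough because $\varphi$ is a bijection by the definition of an ancestor-preserving map. Together the two directions give $(u,v)\in E(N)$ if and only if $(\varphi(u),\varphi(v))\in E(N')$. The key tool is a characterization of arcs in a shortcut-free DAG purely through the ancestor order: in a shortcut-free DAG $G$, $(u,v)\in E(G)$ if and only if $v\prec_G u$ and there is no $w$ with $v\prec_G w\prec_G u$.

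First I prove this characterization. Suppose $(u,v)$ is an arc. Then $v\prec_G u$. If some $w$ satisfied $v\prec_G w\prec_G u$, concatenating a $uw$-path with a $wv$-path would give a directed $uv$-path through $w$. Since $w\neq u,v$, this path differs from the single arc $(u,v)$. It also cannot use the arc $(u,v)$ itself, because acyclicity forces every $uv$-path using that arc to be just that arc. So $(u,v)$ would be a shortcut, contradicting shortcut-freeness.

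Conversely, suppose $v\prec_G u$ and no such $w$ exists. Take any directed $uv$-path. Every internal vertex $w$ of that path satisfies $v\prec_G w\prec_G u$, where strictness comes from acyclicity. By assumption there are no such vertices, so the path has no internal vertices, i.e., it is the arc $(u,v)$.

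Next I transfer the characterization through $\varphi$. Let $(u,v)\in E(N)$. Since $\varphi$ preserves $\preceq$ in both directions and is injective, it also preserves strictness, so $\varphi(v)\prec_{N'}\varphi(u)$. Suppose some $w'$ satisfied $\varphi(v)\prec_{N'}w'\prec_{N'}\varphi(u)$. Because $\varphi$ is surjective, $w'=\varphi(w)$ for some $w$. The "if" direction of ancestor preservation then gives $v\prec_N w\prec_N u$, contradicting the characterization in $N$. Hence $(\varphi(u),\varphi(v))\in E(N')$. The reverse implication follows by the symmetric argument, since $\varphi^{-1}$ is also ancestor-preserving.

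The only delicate point is the "only if" direction of the characterization: I must make sure the alternative path avoids the arc $e$, which acyclicity handles. I also need bijectivity, specifically surjectivity, to pull an intermediate vertex back from $N'$ to $N$. The definition of ancestor-preserving contains a stray set $W$; I read the condition as ranging over all $u,v\in V(N)$.
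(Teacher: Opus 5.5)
Your proof is correct and is essentially the paper's argument: the paper identifies $V(N)$ with $V(N')$ via $\varphi$ and then appeals to the proof of Lemma~7.2 in \cite{willson2016comparing}. That proof rests on your characterization of the arcs of a shortcut-free DAG as exactly the cover relations of $\preceq$, the same fact the paper states explicitly in the proof of Lemma~\ref{lm:sfsf}, so your version simply makes the argument self-contained (and your reading of the stray $W$ as ranging over the whole domain is the intended one).
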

\begin{proof}
Since $\varphi: V \to V'$ is a bijective map, we can without loss of generality assume that $V(N)=V(N')$ in such a way that $\varphi(v)=v$ for all $v\in V(N)$. 
The proof follows now the same argument as used in the proof of Lemma~7.2 in \cite{willson2016comparing}.
\end{proof}

\paragraph{Clusters and related concepts.}

For a DAG $G$ on $X$ and a vertex $v\in V(G)$, the \emph{cluster} of $v$,
denoted $\CC_G(v)$, is the set of all leaves $x\in X$ such that
$x\preceq_G v$. The \emph{cluster set} of $G$ is
$ \mathfrak C_G \coloneqq \{\CC_G(v)\mid v\in V(G)\}$.
The \emph{cluster multiset} of $G$, denoted $\mathfrak M_G$, is the
multiset in which each cluster $C\in\mathfrak C_G$ occurs with multiplicity
$|\{v\in V(G)\mid \CC_G(v)=C\}|$.

Conversely, the Hasse diagram $\Hasse(\mathfrak{C})$ provides a natural way to associate a DAG to a set system $\mathfrak{C}$. 
More precisely,  $\Hasse(\mathfrak{C})$ is the DAG with vertex set
$\mathfrak{C}$ and arcs $(A,B)$ if (i) $B\subsetneq A$ and (ii) there is
no $C\in \mathfrak{C}$ with $B\subsetneq C\subsetneq A$. 
This, in turn, gives rise to the following definition
\begin{definition}[{\cite{BSS:04}}]
  \label{def:regular-N}
  A DAG $G=(V,E)$  is \emph{regular} if the map
  $\varphi\colon V\to V(\Hasse(\mathfrak{C}_G))$ defined by  $v\mapsto \CC_G(v)$ is a graph
 isomorphism between $G$ and $\Hasse(\mathfrak{C}_G)$, i.e., in symbols, $G \approx \Hasse(\mathfrak{C}_G)$.
\end{definition}
Note that, by definition,  $\Hasse(\mathfrak{C}_G)$ and, therefore, 
regular DAGs are shortcut-free. Moreover, one easily verifies that, for  
regular DAGs $G$ and $H$  it holds that   $\mathfrak C_G=\mathfrak C_H$ implies 
$ G\simeq H$.

Following \cite{willson2016comparing}, a network $N$ is \emph{distinct-cluster (DC)} if, for all $u, v
\in V(N)$, it holds that $C_N(u) = C_N(v)$ if and only if $u = v$. Thus, regular networks are
distinct-cluster. Note that distinct-cluster networks have also a close connection to so-called
LCA-relevant networks, see \cite[Thm 4.4.]{HL:24}. 

Two distinct-cluster networks $N, N'$ are \emph{DC-similar} if $\mathfrak C(N) =
\mathfrak C(N')$.  A direct consequence of this definition is that  
DC-similar networks $N$ and $N'$ must be defined on the same leaf set. 
Moreover, for DC-similar networks, there is a canonical bijection between their vertex sets: each vertex $v\in
V(N)$ is mapped to the unique vertex $v'\in V(N')$ satisfying $\CC_N(v)=\CC_{N'}(v')$ which leads to the following
\begin{remark}\label{rem:similar}
Throughout this paper, whenever $N$ and $N'$ are DC-similar, we can identify their vertex sets via
this canonical bijection. Thus we may write $V(N)=V(N')$ and assume that $\CC_N(v)=\CC_{N'}(v)$ for
every vertex $v\in V(N)$ In this case, we say that $N$ and $N'$ have \emph{canonically identified
vertex sets}.
\end{remark}

\paragraph{Further classes of networks.}
The following list provides several networks classes that have attracted attention in the
past years. To keep the paper focused, we only present here the formal definitions. For more details
on these classes, their properties, and their relevance in the field of phylogenetics, we invite the
interested reader to have a look at the references indicated.

\smallskip\noindent
A network $N$ on $X$ \dots
\begin{itemize}[noitemsep,nolistsep]
\item[\dots] satisfies  \emph{path-cluster-comparability (PCC)} if for all $u,v \in V(N)$, $\CC_N(v) \subseteq \CC_N(u)$ implies that $u$ and $v$ are $\preceq_N$-comparable \cite{Hellmuth2023}.
\item[\dots] is \emph{binary} if every tree vertex $v$ is
								either a leaf or has $\outdeg_N (v) = 2$, and every retriculation vertex
								$v$ satisfies $\indeg_N (v) = 2$ and $\outdeg_N(v) = 1$.
\item[\dots] is \emph{tree-child} if for all non-leaf vertices $v$ of $N$, there is a child $v'$ of $v$ that is a tree-vertex \cite{CRV:07}.
\item[\dots] is \emph{normal} if $N$ is tree-child and shortcut-free \cite{Willson2010}.
\item[\dots] is \emph{semi-regular} if shortcut-free and satisfies (PCC) \cite{Hellmuth2023}. 

						Note that, by \cite[Thm~2]{Hellmuth2023}, regular networks are precisely the 
						semi-regular networks that have no vertex with outdegree 1.

\item[\dots] is level-1, if each inclusion-maximal biconnected subgraph of $N$ contains at most one 
						reticulation-vertex distinct from its $\preceq_N$-maximal vertex \cite{Hellmuth2023}.
\end{itemize}


\section{The $\ominus$-operator}
\label{sec:ominus}

The $\ominus$-operator was introduced in \cite{HL:24,SCHS:24} as a generalization  of ``suppressing'' vertices in a network, while preserving key structural properties of that network.

\begin{definition}[{\cite{HL:24}}]\label{def:ominus}
  Let $N=(V,E)$ be a network and $v\in V$. Then $N\ominus v=(V',E')$ is the
  directed graph with vertex set $V'=V\setminus\{v\}$ and arcs $(p,q)\in E'$ 
  precisely if $v\ne p$,
  $v\ne q$ and $(p,q)\in E$, or if $(p,v)\in E$ and $(v,q)\in E$. 
\end{definition} 

Intuitively, applying $N\ominus v$ removes the
vertex $v$ and reconnects each parent of $v$ directly to each child of
$v$. In case $v$ is a leaf or a root,  $v$ and its incident edges are just deleted, 
see Figure~\ref{fig:ominus-simple} for an illustrative example.

The $\ominus$-operation is order-independent on sets of vertices, i.e., 
for distinct vertices $v,w\in V(N)$,
$(N\ominus v)\ominus w=(N\ominus w)\ominus v$ (cf.\ \cite[L~2.3]{HLM:26-RegNormArxiv}).
Hence, for a given set non-empty subset $W = \{w_1,\dots,w_\ell\} \subsetneq V(N)$, 
we can, without loss of generality, define 
  \[N\ominus W \coloneqq (\dots ((N \ominus w_1) \ominus w_2) \dots)\ominus w_\ell.\]
For notational reasons, put $N\ominus\emptyset=N$.

To recall, in a network $N$ on $X$ and with root $\rho$ we have $V^0(N) =V(N)\setminus (X\cup\{\rho\})$.
As shown next, the operator $\ominus$ can be used to transform any network into an
another one preserving key structural properties of the original network.

\begin{lemma}\label{lem:network-ominus}
Let $N$ be a network on $X$ with root $\rho$ and let $W\subseteq V^0(N)$. 
Then, $N'\coloneqq N\ominus W$ is a network on $X$ with root $\rho$ such that 
$u \preceq_N w$ if and only if $u \preceq_{N'} w$ for all $u, w \in V(N')$.
In particular, $\CC_N(u)=\CC_{N'}(u)$ for all $u\in V(N')$.
\end{lemma}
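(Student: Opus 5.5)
The plan is to prove the statement for a single vertex, $W=\{v\}$ with $v\in V^0(N)$, and then extend to arbitrary $W$ by induction on $|W|$. The induction is possible because the $\ominus$-operation is order-independent (cf.\ \cite[L~2.3]{HLM:26-RegNormArxiv}), so $N\ominus W=(N\ominus (W\setminus\{w\}))\ominus w$ for any $w\in W$. For the inductive step we need $w\in V^0(N\ominus(W\setminus\{w\}))$. This will follow from the single-vertex case: there the leaf set stays $X$ and the root stays $\rho$, and every vertex other than $\rho$ and the leaves keeps at least one parent and at least one child. Hence a vertex of $V^0(N)$ that has not yet been removed remains in $V^0$ of the reduced network.

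\emph{Single-vertex case, reachability.} Let $N'=N\ominus v$ with $v\neq\rho$ and $v\notin X$, so $v$ has at least one parent and at least one child. The first step is to show that for $u,w\in V(N)\setminus\{v\}$, $u\preceq_N w$ holds if and only if $u\preceq_{N'} w$.
\begin{itemize}
\item[($\Leftarrow$)] Every arc of $N'$ is either an arc of $N$ or a new arc $p\to q$ with $p\to v\to q$ in $N$. Replacing each new arc by the corresponding two-arc path turns any $wu$-path in $N'$ into a $wu$-path in $N$.
\item[($\Rightarrow$)] Take a $wu$-path $P$ in $N$. If $P$ avoids $v$, it is a path in $N'$. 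Otherwise $v$ is an interior vertex of $P$, because $v\ne u,w$. Its predecessor $p$ on $P$ is a parent of $v$ and its successor $q$ is a child of $v$, so the subpath $p\to v\to q$ can be replaced by the arc $p\to q$ of $N'$.
\end{itemize}
Acyclicity of $N'$ follows directly: a directed cycle in $N'$ would lift, via ($\Leftarrow$), to a closed directed walk in $N$.

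\emph{Single-vertex case, root and leaves.} Next we check that $N'$ is a network on $X$ with root $\rho$.
\begin{itemize}
\item A vertex $x\neq v$ gains a new outgoing arc only if $x$ is a parent of $v$, and such an $x$ already had an out-arc in $N$. So leaves of $N$ stay leaves. Conversely, if $x$ is not a leaf of $N$, then either a child $c\neq v$ of $x$ survives, or $v$ was a child of $x$, in which case $x$ acquires arcs to the children of $v$, of which there is at least one. So $x$ is not a leaf of $N'$. Hence $L(N')=X$.
\item Symmetrically, a vertex $x\ne\rho$ that lost its parent $v$ receives the parents of $v$, and $v$ has at least one parent because $v\neq\rho$. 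So every $x\neq\rho$ still has a parent, $\rho$ has none, and $R(N')=\{\rho\}$.
\end{itemize}

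\emph{Clusters.} The statement $\CC_N(u)=\CC_{N'}(u)$ is then immediate from the reachability equivalence applied with $w=u$ and with leaves in the role of the lower vertex, since $X\subseteq V(N')$ and the leaf sets coincide.

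The only real subtlety is the bookkeeping for the induction, namely ensuring that each later removed vertex still lies in $V^0$ of the intermediate network. It is handled by the invariants just established: the leaves, the root, and the property that every vertex other than $\rho$ and the leaves has both a parent and a child are all preserved by each single-vertex step.
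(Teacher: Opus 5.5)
Your proof is correct, but it takes a different route from the paper's. The paper does not reprove the DAG-level facts. It cites \cite[Obs~5.3]{HL:24} for the statement that $N\ominus W$ is a DAG on $X$ whose ancestor relation, and hence whose clusters, agree with those of $N$ on the remaining vertices. It then adds only a global argument for the root: every remaining vertex $w$ satisfies $w\preceq_N\rho$, hence $w\preceq_{N\ominus W}\rho$ by ancestor preservation, so $\rho$ is the unique root.

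You instead give a self-contained argument. You use path surgery for the single-vertex reachability equivalence and lift cycles to show acyclicity. A local parent/child count gives $L(N\ominus v)=X$ and $R(N\ominus v)=\{\rho\}$, and induction on $|W|$ finishes the proof. Your route removes the dependence on the external observation. It also checks explicitly that each vertex of $W$ still lies in $V^0$ of the intermediate network when it is removed, a point the paper's short proof leaves to the citation.

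That check is in fact automatic. Once the induction hypothesis says the intermediate graph is a network on $X$ with root $\rho$, its $V^0$ is simply everything except $\rho$ and $X$, so no separate invariant is needed. The paper's root argument is shorter than your local parent-counting, and you could use it once reachability is established. Both arguments are valid.
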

\begin{proof}
Let $N$ be a network on $X$ with root $\rho$, and let $W\subseteq V^0(N)$.
It has been argued in \cite[Obs~5.3]{HL:24} that the statement holds for DAGs, 
i.e.,  $N'\coloneqq N\ominus W$ is a DAG on $X$ such that 
$u \preceq_N w$ if and only if $u \preceq_{N'} w$ for all $u, w \in V(N')$.
In particular, $\CC_N(u)=\CC_{N'}(u)$ for all $u\in V(N')$.
It remains to show that $N'$ is a network. 
Since $N$ is a network with root $\rho$, every vertex $w\in V(N')$ satisfies $ w\preceq_N
\rho$.  By preservation of the ancestor relation, this implies $ w\preceq_{N'} \rho $ for every $w\in
V(N')$. Thus $\rho$ is the unique root of $N'$.
\end{proof}

\begin{figure}
	\centering
	\includegraphics[width = 0.9\textwidth]{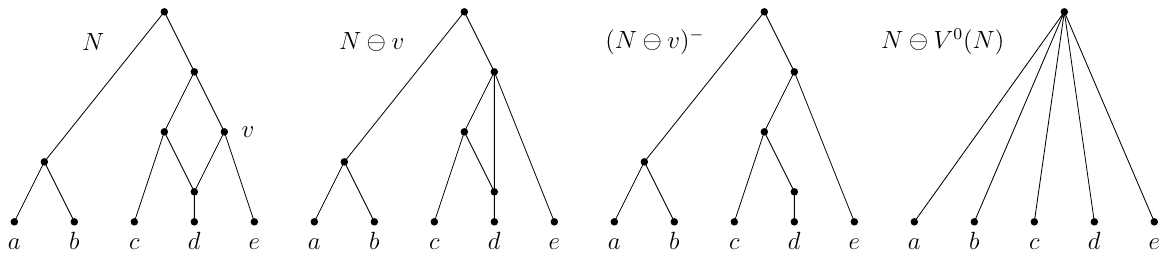}
	\caption{A network $N$, the $\ominus$-reductions $N\ominus v$, the shortcut-free $\ominus$-reduction $(N \ominus v)^-$, and the star-tree $N\ominus V^0(N)$.
	}
\label{fig:ominus-simple}
\end{figure}

The following result shows that, when computing $N \ominus W$, shortcuts can be removed at any stage of the sequence of $\ominus$-operations.

\begin{lemma}\label{lm:sfsf}
Let $N$ be a network and let $W\subseteq V(N)$. Then
$   (N\ominus W)^-  =  (N^-\ominus W)^-$.
\end{lemma}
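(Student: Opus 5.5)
The identity follows from the fact that a shortcut-free DAG is determined by its ancestor relation. Both $(N\ominus W)^-$ and $(N^-\ominus W)^-$ have the same vertex set $V(N)\setminus W$. So it suffices to show that $N\ominus W$ and $N^-\ominus W$ induce the same relation $\preceq$ on $V(N)\setminus W$. Once this holds, Lemma~\ref{lem:properties-SF-G_NEW} shows that the two shortcut-free versions also induce this common relation. The identity map is then ancestor-preserving between two shortcut-free DAGs, and by Lemma~\ref{lm:isom} (whose argument, following \cite{willson2016comparing}, applies verbatim to DAGs) it is a graph isomorphism. Since the vertex sets coincide and the map is the identity, the two DAGs have the same arc set, so $(N\ominus W)^- = (N^-\ominus W)^-$ holds as an equality, not merely up to isomorphism.

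The remaining task is to show that $u\preceq_{N\ominus W} w$ if and only if $u\preceq_{N^-\ominus W} w$ for $u,w\notin W$. The key observation is that $\ominus$ preserves the ancestor relation among retained vertices. For $W\subseteq V^0(N)$ this is Lemma~\ref{lem:network-ominus}. For arbitrary $W$, I would prove it directly for a single vertex $v$, since $W$ may contain roots or leaves.

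\emph{Claim:} for any DAG $G$ and $v\in V(G)$, and $u,w\neq v$, we have $u\preceq_{G\ominus v} w$ if and only if $u\preceq_G w$.

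For the forward direction, replace each arc $p\to q$ of $G\ominus v$ that is not in $G$ by the path $p\to v\to q$. For the backward direction, take a directed $wu$-path in $G$. If it passes through $v$, it does so via some $p\to v\to q$ with $p,q\neq v$, because the endpoints differ from $v$. By Definition~\ref{def:ominus}, $p\to q$ is an arc of $G\ominus v$, so the path can be shortcut there. If $v$ is a root or a leaf, no path between two other vertices passes through it, so the claim holds trivially in that case.

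Iterating the claim over the elements of $W$, using order-independence, gives $\preceq_{N\ominus W}=\preceq_N$ and $\preceq_{N^-\ominus W}=\preceq_{N^-}$ on $V(N)\setminus W$. Lemma~\ref{lem:properties-SF-G_NEW} gives $\preceq_{N^-}=\preceq_N$, which establishes the required equality of relations.

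The main obstacle is the final step, passing from equal ancestor relations to equal shortcut-free DAGs. Lemma~\ref{lm:isom} is stated for networks, while $N\ominus W$ may fail to be a network if $W$ contains the root. I would therefore argue directly: in a shortcut-free DAG, $(a,b)$ is an arc if and only if $b\prec a$ and there is no $c$ with $b\prec c\prec a$. Indeed, an arc with such an intermediate $c$ would be a shortcut. Conversely, a covering pair $b\prec a$ must be joined by an arc, since any path of length at least two from $a$ to $b$ would provide an intermediate vertex. Hence the arc set of a shortcut-free DAG is the covering relation of its ancestor order, which is the same for both DAGs. This also yields the equality of arc sets directly.
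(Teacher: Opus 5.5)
Your proof is correct and follows the paper's argument. Both show that $N\ominus W$ and $N^-\ominus W$ induce the same ancestor relation on $V(N)\setminus W$, and then use that the arc set of a shortcut-free DAG is exactly the covering relation of its ancestor order. Your direct single-vertex claim for arbitrary $v$ is a small improvement: the paper invokes Lemma~\ref{lem:network-ominus}, which only covers $W\subseteq V^0(N)$, whereas the statement allows any $W\subseteq V(N)$.
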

\begin{proof}
Both $(N\ominus W)^-$ and $(N^-\ominus W)^-$ have vertex set
$V(N)\setminus W$. 
Moreover, removing shortcuts does not change the ancestor relation. Hence $N$ and
$N^-$ have the same ancestor relation, i.e., $\preceq_N = \preceq_{N^-}$. Moreover, by
Lemma~\ref{lem:network-ominus}, $N\ominus W$
preserves the ancestor relation of $N$ among the vertices in
$V(N)\setminus W$, and $N^-\ominus W$ preserves the ancestor relation of
$N^-$ among the same vertices. Therefore, $N\ominus W$ and
$N^-\ominus W$ have the same ancestor relation on $V(N)\setminus W$.

It remains to observe that, in a shortcut-free DAG, the arc set is uniquely determined by the ancestor
relation. More precisely, for vertices $u$ and $v$, the arc $(u,v)$ is present if and only if
$v\prec u$ and there is no vertex $z$ with $v\prec z\prec u$. Indeed, if such a vertex $z$
existed, then  $(u,v)$ would be a shortcut. Conversely, if $v\prec u$ and no such vertex $z$
exists, then every directed path from $u$ to $v$ has no internal vertex and therefore must consist
of the single arc $(u,v)$. Since $N\ominus W$ and $N^-\ominus W$ have the same ancestor relation on
the same vertex set, their shortcut-free reductions have the same vertex
set and the same arcs. Hence, $ (N\ominus W)^-= (N^-\ominus W)^-$,
\end{proof}

The following simple result will be used throughout this paper. 

\begin{lemma}\label{lm:stt}
 	For all networks $N$ on $X$, $N\ominus V^0(N)$ is a star-tree on $X$.
\end{lemma}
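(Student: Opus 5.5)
We apply Lemma~\ref{lem:network-ominus} with $W\coloneqq V^0(N)$, which is admissible since trivially $V^0(N)\subseteq V^0(N)$. This yields that $N'\coloneqq N\ominus V^0(N)$ is a network on $X$ with root $\rho$. Its vertex set is
\[
V(N)\setminus V^0(N)=\{\rho\}\cup X .
\]
Moreover, for all $u,w\in\{\rho\}\cup X$ we have $u\preceq_N w$ if and only if $u\preceq_{N'} w$. It therefore remains to determine the arc set of $N'$ and to compare it with the arc set $\{(\rho,x)\mid x\in X\}$ of the star tree on $X$.

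First, we show that no arc joins two leaves. Leaves have no children in $N$, so $x\preceq_N y$ for $x,y\in X$ forces $x=y$. By the preservation of the ancestor relation, distinct leaves are $\preceq_{N'}$-incomparable, and hence no arc $(x,y)$ exists in $N'$. Since $N'$ is a DAG, there are no loops. Since $\rho$ is the unique root of $N'$, no arc enters $\rho$.

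Second, we show that every arc $(\rho,x)$ with $x\in X$ is present. Since $N$ is rooted at $\rho$, we have $x\preceq_N\rho$ and thus $x\preceq_{N'}\rho$. As $x\neq\rho$, there is a directed $\rho x$-path in $N'$. Every vertex of $N'$ other than $\rho$ is a leaf, so it has no children in $N'$. Note that it is a leaf of $N'$ because $N'$ is a DAG on $X$. A directed $\rho x$-path therefore cannot pass through any internal vertex, and so it must consist of the single arc $(\rho,x)$.

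Combining the two steps, $E(N')=\{(\rho,x)\mid x\in X\}$, so $N'$ is the star tree on $X$. The identity on $\{\rho\}\cup X$ is then an isomorphism that fixes the leaves. The step that needs the most care is the second one. There we use that the vertices of $X$ remain exactly the leaves of $N'$, and this follows from $N'$ being a network on $X$ by Lemma~\ref{lem:network-ominus}. The only other subtlety is a degenerate case in which $\rho$ itself is a leaf, which happens when $|V(N)|=1$. In that case $X=\{\rho\}$ and the statement holds trivially.
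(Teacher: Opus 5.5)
Your proof is correct and follows the paper's route: apply Lemma~\ref{lem:network-ominus} with $W=V^0(N)$ to obtain a network on $X$ with root $\rho$ and vertex set $\{\rho\}\cup X$, and read off that it is the star tree. The paper only asserts that this ``immediately follows'' because $\rho$ is the only non-leaf vertex, whereas you determine the arc set explicitly. Your closing remark about the one-vertex network is not needed and is not quite right as stated: under the literal definition, the star tree on $X=\{\rho\}$ would need the loop $(\rho,\rho)$, so that case is better treated as tacitly excluded (the paper ignores it) than as trivial.
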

\begin{proof} 
 	 Let $N$ be a network on $X$ and let $\rho$ denote the unique root of $N$.
By definition, $V(N\ominus V^0(N)) = X\cup \{\rho\}$. 
	By Lemma~\ref{lem:network-ominus}, $N\ominus V^0(N)$ is a network on $X$
	with root $\rho$. 	Since, $\rho$ is the only non-leaf vertex and since the leaf
	set of $N\ominus V^0(N)$ is $X$ it immediately follows that 
	$N\ominus V^0(N)$ is a star-tree on $X$.
\end{proof}

The following lemma, which will be useful later, shows that the $\ominus$-operator can be used to
remove precisely those vertices whose clusters occur in only one of two distinct-cluster networks.
In particular, it reduces the comparison of two distinct-cluster networks to the comparison of a
pair of DC-similar networks.

\begin{lemma}\label{lem:dc-to-dc-similar}
Let $N$ and $N'$ be distinct-cluster networks on the same leaf set $X$, and put $ D_N\coloneqq
\{v\in V^0(N):\CC_N(v)\notin\mathfrak C_{N'}\} $ and $ D_{N'}\coloneqq \{v'\in
V^0(N'):\CC_{N'}(v')\notin\mathfrak C_N\}$. Then $N\ominus D_N$ and $N'\ominus D_{N'}$ are
DC-similar and it holds that
\[ \mathfrak C_{N\ominus D_N} = \mathfrak C_N\cap\mathfrak C_{N'} = \mathfrak C_{N'\ominus D_{N'}}
\quad \text{and} \quad |D_N|+|D_{N'}| = |\mathfrak C_N\mathbin{\Delta}\mathfrak C_{N'}|.
 \] 
\end{lemma}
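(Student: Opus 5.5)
The plan is to apply Lemma~\ref{lem:network-ominus} to each side and then do a careful bookkeeping of clusters. Since $D_N\subseteq V^0(N)$, Lemma~\ref{lem:network-ominus} shows that $M\coloneqq N\ominus D_N$ is a network on $X$ with the same root as $N$. Its vertex set is $V(N)\setminus D_N$, and $\CC_M(u)=\CC_N(u)$ for every $u\in V(M)$. In particular, distinct vertices of $M$ still have distinct clusters, because they already did in $N$, so $M$ is distinct-cluster. The same holds for $M'\coloneqq N'\ominus D_{N'}$.

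Next I will compute $\mathfrak C_M=\{\CC_N(u)\mid u\in V(N)\setminus D_N\}$ and show it equals $\mathfrak C_N\cap\mathfrak C_{N'}$. For the inclusion $\subseteq$, take $u\notin D_N$. If $u\in V^0(N)$, then $\CC_N(u)\in\mathfrak C_{N'}$ directly from the definition of $D_N$. The main point needing care is the vertices outside $V^0(N)$, namely the leaves and the root. For a leaf $x$, the cluster is $\{x\}$, and $\{x\}=\CC_{N'}(x)\in\mathfrak C_{N'}$ because $N'$ also has leaf set $X$. For the root $\rho$, the cluster is $X$, and $X=\CC_{N'}(\rho')$ for the root $\rho'$ of $N'$, since every leaf lies below the unique root. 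For the inclusion $\supseteq$, take $C\in\mathfrak C_N\cap\mathfrak C_{N'}$ with $C=\CC_N(u)$. Then $u\notin D_N$, because either $u\notin V^0(N)$ or $\CC_N(u)\in\mathfrak C_{N'}$. The argument for $M'$ is symmetric. Together these give $\mathfrak C_M=\mathfrak C_{M'}$, so $M$ and $M'$ are DC-similar. They also share the leaf set $X$.

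For the counting statement, I use that $N$ is distinct-cluster, so the map $v\mapsto \CC_N(v)$ is injective. Restricted to $D_N$, it is therefore a bijection onto $\{C\in\mathfrak C_N\mid C\notin\mathfrak C_{N'}\}=\mathfrak C_N\setminus\mathfrak C_{N'}$. This is because, as shown above, every cluster in $\mathfrak C_N\setminus\mathfrak C_{N'}$ must come from a vertex in $V^0(N)$. Hence $|D_N|=|\mathfrak C_N\setminus\mathfrak C_{N'}|$, and symmetrically $|D_{N'}|=|\mathfrak C_{N'}\setminus\mathfrak C_N|$. Summing gives $|\mathfrak C_N\Delta\mathfrak C_{N'}|$.

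I do not expect a real obstacle. The only subtle step is the treatment of leaves and roots, and one small check is needed there: a cluster like $\{x\}$ or $X$ could also be the cluster of some vertex in $V^0(N)$. The distinct-cluster property rules this out, so no vertex of $V^0(N)$ has cluster $\{x\}$ or $X$.
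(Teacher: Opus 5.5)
Your proposal is correct and follows the paper's proof essentially step by step. It uses cluster preservation via Lemma~\ref{lem:network-ominus}, the case split into internal vertices, leaves and the root for both inclusions, the inheritance of the distinct-cluster property, and the bijection $D_N\to\mathfrak C_N\setminus\mathfrak C_{N'}$ given by $v\mapsto\CC_N(v)$. Your closing check is harmless but not needed: an internal vertex with cluster $\{x\}$ or $X$ would have its cluster in $\mathfrak C_{N'}$ and so would lie outside $D_N$ anyway.
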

\begin{proof}
By Lemma~\ref{lem:network-ominus}, applying the $\ominus$-operator to vertices in $D_N$ preserves
the clusters of all remaining vertices. Hence, $ \mathfrak C_{N\ominus D_N} = \{\CC_N(v):v\in
V(N)\setminus D_N\}. $ We first show that $ \mathfrak C_{N\ominus D_N} = \mathfrak C_N\cap\mathfrak
C_{N'}. $

Let $C\in\mathfrak C_{N\ominus D_N}$. Then $C=\CC_N(v)$ for some $v\in V(N)\setminus D_N$. If $v\in
V^0(N)$, then $v\notin D_N$ implies $\CC_N(v)\in\mathfrak C_{N'}$. If $v$ is a leaf, then $C=\{x\}$
for some $x\in X$, and this cluster also occurs in $N'$. If $v$ is the root of $N$, then $C=X$,
which is also the cluster of the root of $N'$. Thus $ C\in\mathfrak C_N\cap\mathfrak C_{N'}. $
Consequently, $ \mathfrak C_{N\ominus D_N} \subseteq \mathfrak C_N\cap\mathfrak C_{N'}. $

Conversely, let $C\in\mathfrak C_N\cap\mathfrak C_{N'}$. Since $N$ is distinct-cluster, there is a
unique vertex $v\in V(N)$ satisfying $\CC_N(v)=C$. If $v\in V^0(N)$, then $C\in\mathfrak C_{N'}$
implies $v\notin D_N$. Roots and leaves do not belong to $D_N$ by definition. Hence, in every case,
$v\in V(N)\setminus D_N$. Lemma~\ref{lem:network-ominus} now implies $ C=\CC_N(v)=\CC_{N\ominus
D_N}(v), $ and therefore $C\in\mathfrak C_{N\ominus D_N}$. Thus $ \mathfrak C_N\cap\mathfrak C_{N'}
\subseteq \mathfrak C_{N\ominus D_N}. $ We conclude that $ \mathfrak C_{N\ominus D_N} = \mathfrak
C_N\cap\mathfrak C_{N'}. $ By similar arguments, $ \mathfrak C_{N'\ominus D_{N'}} = \mathfrak
C_N\cap\mathfrak C_{N'}. $

Now we show that the two $\ominus$-reduced networks are DC-similar. Let $u,v\in V(N)\setminus D_N$
and suppose that $ \CC_{N\ominus D_N}(u)=\CC_{N\ominus D_N}(v). $ Again by
Lemma~\ref{lem:network-ominus}, $ \CC_N(u)=\CC_N(v). $ Since $N$ is distinct-cluster, this implies
$u=v$. Hence $N\ominus D_N$ is distinct-cluster. The same argument shows that $N'\ominus D_{N'}$ is
distinct-cluster. Since the two reduced networks have the same cluster set, they are DC-similar.

Finally, because $N$ is distinct-cluster, the map $ v\mapsto \CC_N(v) $ is a bijection from $V(N)$
to $\mathfrak C_N$. Moreover, every cluster in $\mathfrak C_N\setminus\mathfrak C_{N'}$ is
represented by an internal vertex of $N$: the root cluster $X$ and all singleton leaf clusters occur
in both networks. Therefore, $ |D_N| = |\mathfrak C_N\setminus\mathfrak C_{N'}| $. Similarly, $
|D_{N'}| = |\mathfrak C_{N'}\setminus\mathfrak C_N| $. Consequently, $ |D_N|+|D_{N'}| = |\mathfrak
C_N\setminus\mathfrak C_{N'}| + |\mathfrak C_{N'}\setminus\mathfrak C_N|\ = |\mathfrak
C_N\mathbin{\Delta}\mathfrak C_{N'}|$.
\end{proof}


\section{The $\ominus$-metrics $\dist$ and $\dist^-$}
\label{sec:metric}

The aim of this section is to define a metric based on the $\ominus$-operator.
To this end, we define first a distance between two networks based on this operator as follows. 

\begin{definition}
Let $N$ and $N'$ be networks on $X$. 
Put
\[ \mathtt W(N,N') \coloneqq \{(W,W')\mid W\subseteq V^0(N), W'\subseteq V^0(N') \text{ and }
N\ominus W\simeq N'\ominus W'\}.     \] 

Then, the \emph{$\ominus$-distance} between $N$ and $N'$ is defined as 
\[\dist(N,N') \coloneqq \min_{(W,W') \in \mathtt W(N,N')}  \{|W|+|W'|\}. \]
\end{definition}

Observe first that whenever, $(W,W')\in \mathtt W(N,N')$ then since, $W\subseteq V^0(N)$ and $W'\subseteq V^0(N')$, 
Lemma~\ref{lem:network-ominus} implies that the two isomorphic graphs $N\ominus W$ and $N'\ominus W'$
are network on $X$.
Thus,  distance $\dist(N,N')$ is the minimum total number of vertices that have to be removed, via the
    $\ominus$-operator, from $N$ and $N'$ in order to obtain isomorphic networks on $X$.
    The next simple consequence of Lemma~\ref{lm:stt} shows that $\dist(N,N')$ is well-defined, as we can always reduce two networks on $X$ to star trees.

\begin{corollary}\label{cor:dist-well-defined}
Let $N$ and $N'$ be networks on the same leaf set $X$. Then
$\dist(N,N')$ is well-defined.
\end{corollary}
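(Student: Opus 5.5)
To show that $\dist(N,N')$ is well-defined, it suffices to show that the minimum in its definition is taken over a non-empty finite set of values. Finiteness is immediate. Since $V^0(N)$ and $V^0(N')$ are finite, there are only finitely many pairs $(W,W')$ with $W\subseteq V^0(N)$ and $W'\subseteq V^0(N')$. Hence $\mathtt W(N,N')$ is finite, and each value $|W|+|W'|$ is a non-negative integer. So the only real task is to exhibit one element of $\mathtt W(N,N')$.

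For this, I take the pair $(V^0(N),V^0(N'))$, which is admissible because $V^0(N)\subseteq V^0(N)$ and $V^0(N')\subseteq V^0(N')$. By Lemma~\ref{lm:stt}, $N\ominus V^0(N)$ is a star tree on $X$, and likewise $N'\ominus V^0(N')$ is a star tree on $X$. The remaining step is to check that any two star trees on the same leaf set $X$ are isomorphic in the leaf-fixing sense $\simeq$. Let $\rho$ and $\rho'$ be the respective roots. Define $\varphi$ by $\varphi(\rho)=\rho'$ and $\varphi(x)=x$ for all $x\in X$. This map is a bijection between the vertex sets $\{\rho\}\cup X$ and $\{\rho'\}\cup X$. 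It sends the arc set $\{(\rho,x)\mid x\in X\}$ exactly onto $\{(\rho',x)\mid x\in X\}$, and it fixes every leaf.

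Hence $N\ominus V^0(N)\simeq N'\ominus V^0(N')$, so $(V^0(N),V^0(N'))\in\mathtt W(N,N')$ and the minimum exists. As a by-product we obtain the bound $\dist(N,N')\le |V^0(N)|+|V^0(N')|$. There is no real obstacle here. The one point that needs a moment's care is that the roots of the two networks may be different vertices, which is why $\varphi$ must explicitly send $\rho$ to $\rho'$. This is allowed, since $\simeq$ only requires that the leaves be fixed.
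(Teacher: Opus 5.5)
Your proposal is correct and follows essentially the same route as the paper: exhibit $(V^0(N),V^0(N'))\in\mathtt W(N,N')$ via Lemma~\ref{lm:stt}, then use finiteness of the candidate pairs. The only difference is that you write out the leaf-fixing isomorphism $\rho\mapsto\rho'$ between the two star trees, which the paper leaves as ``readily follows''.
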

\begin{proof}
Let $N$ and $N'$ be networks on the same leaf set $X$. 
By Lemma~\ref{lm:stt}, $N\ominus V^0(N)$ 
and $N'\ominus V^0(N')$ are both star trees on $X$
and it readily follows that $N\ominus V^0(N)\simeq N'\ominus V^0(N')$. 
Hence,  $(V^0(N),V^0(N'))\in\mathtt W(N,N')$ and thus, $W(N,N')\neq \emptyset$.
Since $V^0(N)$ and $V^0(N')$ are finite, there are only finitely many choices for $W\subseteq V^0(N)$ and $W'\subseteq
V^0(N')$. Thus $\mathtt W(N,N')$ is a non-empty finite set. Consequently, the minimum $
\min_{(W,W')\in\mathtt W(N,N')} (|W|+|W'|)$ exists, and so $\dist(N,N')$ is well-defined.
\end{proof}

We are now in the position to show that $\dist$ is a metric. Since
$\dist(N,N')=0$ holds precisely when $N$ and $N'$ are 
isomorphic, this metric is naturally defined on networks considered up to
isomorphism. Equivalently, in the identity axiom below,
equality "=" of networks is understood as $\simeq$.

\begin{theorem}\label{thm:distmetric}
Let $\mathcal N_X$ denote the class of networks on $X$. Then $\dist$ is a metric on $\mathcal N_X$
up to $\simeq$.
 More precisely,
for all networks $N,N',N''\in \mathcal N_X$, the following hold:
\begin{enumerate}[nolistsep]
    \item $\dist(N,N')\geq 0$. 
    \item $\dist(N,N')=0$ if and only if $N\simeq N'$.
    \item $\dist(N,N')=\dist(N',N)$.
    \item $\dist(N,N')\leq \dist(N,N'')+\dist(N'',N')$.
\end{enumerate} 
\end{theorem}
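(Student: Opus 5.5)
Items 1 and 3 are immediate: $\dist$ is a minimum of sums of cardinalities, and $\simeq$ is symmetric, so $(W,W')\in\mathtt W(N,N')$ if and only if $(W',W)\in\mathtt W(N',N)$. For item 2, if $N\simeq N'$ then $(\emptyset,\emptyset)\in\mathtt W(N,N')$. Conversely, $\dist(N,N')=0$ forces $W=W'=\emptyset$, hence $N=N\ominus\emptyset\simeq N'\ominus\emptyset=N'$. Well-definedness is Corollary~\ref{cor:dist-well-defined}.

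The real work is the triangle inequality. Let $(W_1,W_1'')$ be optimal for $(N,N'')$ and $(W_2'',W_2')$ optimal for $(N'',N')$. Fix leaf-fixing isomorphisms $\varphi\colon N\ominus W_1\to N''\ominus W_1''$ and $\psi\colon N''\ominus W_2''\to N'\ominus W_2'$. The plan is to delete, in the middle network $N''$, the union $U\coloneqq W_1''\cup W_2''$, and to transport the extra deletions back to $N$ and $N'$. Concretely, set
\[
W\coloneqq W_1\cup \varphi^{-1}(W_2''\setminus W_1'') \quad\text{and}\quad W'\coloneqq W_2'\cup\psi(W_1''\setminus W_2'').
\]
Here $\varphi^{-1}$ is applied to vertices of $N''\ominus W_1''$, i.e.\ to $V(N'')\setminus W_1''$, and similarly for $\psi$. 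Then
\[
|W|+|W'|\le |W_1|+|W_2''|+|W_2'|+|W_1''|=\dist(N,N'')+\dist(N'',N').
\]
We also need $W\subseteq V^0(N)$ and $W'\subseteq V^0(N')$. This holds because $\varphi$ and $\psi$ fix leaves and must map the unique root to the unique root, since the graphs are networks by Lemma~\ref{lem:network-ominus}. Hence they map $V^0$-vertices to $V^0$-vertices, and $W_2''\subseteq V^0(N'')$.

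The key auxiliary fact is that $\ominus$ commutes with isomorphisms: if $\alpha\colon G\to H$ is a leaf-fixing graph isomorphism and $Z\subseteq V^0(G)$, then $\alpha$ restricts to an isomorphism $G\ominus Z\simeq H\ominus\alpha(Z)$. This follows directly from Definition~\ref{def:ominus}, since the arcs of $G\ominus z$ are described purely in terms of the arcs of $G$, and by induction using order-independence. Applying it with $\alpha=\varphi$ and $Z=\varphi^{-1}(W_2''\setminus W_1'')$ gives
\[
N\ominus W=(N\ominus W_1)\ominus Z\simeq (N''\ominus W_1'')\ominus(W_2''\setminus W_1'')=N''\ominus U.
\]
Symmetrically, $N'\ominus W'\simeq N''\ominus U$. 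By transitivity of $\simeq$, $(W,W')\in\mathtt W(N,N')$, so $\dist(N,N')\le |W|+|W'|$, which yields item~4.

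The main obstacle is to state the commutation-with-isomorphism step cleanly and to check the bookkeeping. Specifically, one must verify that $(N\ominus W_1)\ominus Z=N\ominus(W_1\cup Z)$ with $Z$ disjoint from $W_1$, which follows from order-independence of $\ominus$. One must also check that $\varphi^{-1}$ is defined on $W_2''\setminus W_1''$ because these vertices survive in $N''\ominus W_1''$. I would first state the commutation fact as a small claim, proved by checking a single vertex removal and then inducting.
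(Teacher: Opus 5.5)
Your proposal is correct and takes essentially the same approach as the paper. You delete the union $W_1''\cup W_2''$ in the middle network $N''$, transport the extra deletions back to $N$ and $N'$ through the two leaf-fixing isomorphisms (which preserve $V^0$), and bound $|W|+|W'|$ exactly as the paper does; stating the commutation of $\ominus$ with isomorphisms as a separate claim is just a slightly cleaner justification of a step the paper asserts directly.
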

\begin{proof}
	Condition (1) is clear, since $\dist$ is defined as a minimum of
	cardinalities. Condition (3) follows immediately from the fact
	that $\simeq$ is symmetric.

	For Condition (2), suppose first that $\dist(N,N')=0$. Then there are
	$(W,W')\in\mathtt W(N,N')$ with $|W|+|W'|=0$. Hence
	$W=W'=\emptyset$, and so $N\simeq N'$. Conversely, if $N\simeq N'$, then
	$(\emptyset,\emptyset)\in\mathtt W(N,N')$, and therefore
	$\dist(N,N')=0$.

	We show now that Condition (4) holds. 	In the following, we define for a map $f\colon M \to M'$
	and a subset $A\subseteq M$ the map
	$f(A) \coloneqq \{f(a) \colon a\in A\}$. Moreover, we put $f(\emptyset)=\emptyset$. 
	Note that, if $f$ is a bijection, then 
	\begin{equation}\label{eq:f}
		|f(A)|=|A| \text{ for any subset } A\subseteq M. 
	\end{equation}	
	We use the latter arguments for $f=\varphi$ or $f=\varphi'$ below.
	
	Now, let $N, N', N''$ be networks on $X$. 
	Let $W\subseteq V^0(N)$ and $W_1\subseteq V^0(N'')$
	 be two sets satisfying
	$N\ominus W \simeq N''\ominus W_1$ and $\dist(N,N'') = |W| + |W_1|$. 
	Let \[\varphi \colon V(N'')\setminus W_1 \to V(N)\setminus W\] be an isomorphism between
	$ N''\ominus W_1$ and $N\ominus W $. 	
	Similarily, let $W'\subseteq V^0(N')$ and $W_2\subseteq V^0(N'')$  be two sets satisfying
	$N'\ominus W' \simeq N''\ominus W_2$ and $\dist(N',N'') = |W'| + |W_2|$. 
	Let 
	\[\varphi' \colon V(N'')\setminus W_2 \to V(N')\setminus W'\]
	be an isomorphism between $ N''\ominus W_2$ and $N'\ominus W'$. 
	Since $N''\ominus W_1  \simeq N\ominus W$ and $\varphi$ is an isomorphism between the two graphs
	it follows that 
	\[(N''\ominus W_1) \ominus v \simeq (N\ominus W) \ominus \varphi(v)\]
	for all $v\in V(N'')\setminus W_1$.
	Since $W_2\setminus W_1\subseteq V^0(N'')\setminus W_1$, we can conclude that $\varphi(W_2\setminus W_1)$ is well-defined. 
	The latter two arguments imply that 
	\begin{equation}\label{eq:a}
	 (N''\ominus W_1) \ominus (W_2\setminus W_1) \simeq (N\ominus W) \ominus \varphi(W_2\setminus W_1).
	 \end{equation}
	By analogous argumentation, we obtain 
	\begin{equation}\label{eq:b}
	(N''\ominus W_2) \ominus (W_1\setminus W_2) \simeq (N'\ominus W') \ominus \varphi'(W_1\setminus W_2).
	 \end{equation}
	Since $(N''\ominus W_1) \ominus (W_2\setminus W_1) =  N'' \ominus (W_1\cup W_2) = (N''\ominus W_2) \ominus (W_1\setminus W_2)$
	it follows that the RHS of Eq.~\eqref{eq:a} and \eqref{eq:b} satisfy
	\begin{equation}\label{eq:1}
	  (N\ominus W) \ominus \varphi(W_2\setminus W_1)  \simeq  (N'\ominus W') \ominus \varphi'(W_1\setminus W_2) .
	 \end{equation}	 
	 Note that $W\cap \varphi(W_2\setminus W_1) = \emptyset$ since all vertices $\varphi(v)$
	 in the image of $\varphi$ satisfy   $\varphi(v)\in V(N)\setminus W$.
	 Thus, ${|W\cup \varphi(W_2\setminus W_1)|  =|W| + |\varphi(W_2\setminus W_1)|}$. 
	 Similarly, $|W'\cup \varphi'(W_1\setminus W_2)| = |W'|+ |\varphi'(W_1\setminus W_2)|$. 
	 
	 	Moreover, since $\varphi$ and $\varphi'$ are isomorphisms between networks on $X$, they map
	 	    roots to roots and fix all leaves. Hence $\varphi(W_2\setminus W_1)\subseteq V^0(N)$ and
	 	    $\varphi'(W_1\setminus W_2)\subseteq V^0(N')$. Thus the sets $ W\cup\varphi(W_2\setminus
	 	    W_1) $ and $ W'\cup\varphi'(W_1\setminus W_2) $ are admissible deletion sets
	 	    for $\dist(N,N')$.
In particular, Eq.~\ref{eq:1} can be rewritten as 
	\begin{equation}\label{eq:2}
	  N\ominus (W \cup\varphi(W_2\setminus W_1))  \simeq  N'\ominus (W' \cup \varphi'(W_1\setminus W_2)) .
	 \end{equation}
	 Taking the latter arguments together we obtain
	\begin{align*} 
		\dist(N,N') & \leq |W\cup \varphi(W_2\setminus W_1)| + |W'\cup \varphi'(W_1\setminus W_2)|\\
						& = |W| + |\varphi(W_2\setminus W_1)| + |W'|+ |\varphi'(W_1\setminus W_2)| 	 \\
		            & = |W| + |W_2\setminus W_1| +  |W'|+ |W_1\setminus W_2| \quad \text{(by Eq.~\ref{eq:f})} \\ 
		            & \leq |W| + |W_2|  +  |W'| + |W_1| \\
		            & = \dist(N,N'') + \dist(N',N'')
		             = \dist(N,N'') + \dist(N'',N')
	\end{align*}
Thus Condition (4) holds.
\end{proof}

\begin{figure}
	\centering
	\includegraphics[width = 0.9\textwidth]{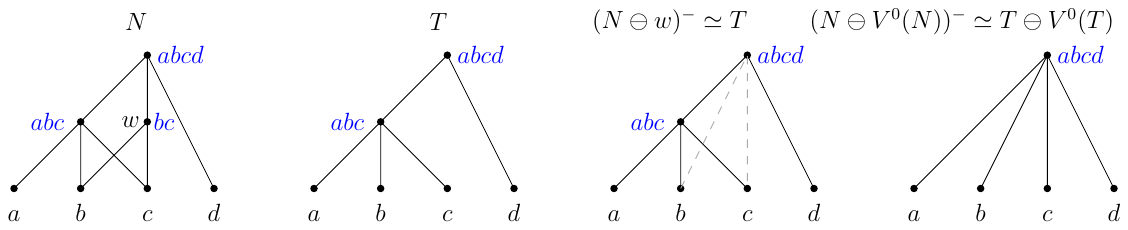}
	\caption{A network $N$ and a phylogenetic tree $T$. The cluster of non-leaf vertices are
	indicated next to the respective vertex. Both $N$ and $T$ are distinct-cluster networks.
	In addition, the networks $(N\ominus w)^- \simeq T$ (where shortcuts of $ N\ominus w$ are highlighted by
	gray-dashed arcs) and the star-tree $N\ominus V^0(N) \simeq T\ominus V^0(T)$ are shown. 
	Here, $
	|\mathfrak C_N \Delta \mathfrak C_{N'}| = |\mathfrak M_N \Delta \mathfrak M_{N'}| = |\{\{b,c\}\}| =1=	\dist^-(T,N) = |\{w\}|+|\emptyset|
	< \dist(T,N)=|V^0(N)|+|V^0(T)|=3$.}
\label{fig:TN1}
\end{figure}

As pointed out in \cite{HL:24}, the $\ominus$-operator may create
shortcuts, see for example the network $N\ominus w$
in Figure~\ref{fig:TN1}.
  Since shortcuts do not change the ancestor relation, it is
natural to ignore them when comparing two $\ominus$-reductions. We
therefore consider the following relaxed version of the distance.

For networks $N$ and $N'$ on $X$, put
\[ \mathtt W^-(N,N') \coloneqq \{(W,W')\mid W\subseteq V^0(N),\ W'\subseteq V^0(N') \text{ and } (N\ominus W)^-\simeq (N'\ominus W')^-\}. \]
Then define
\[
    \dist^-(N,N')
    \coloneqq
    \min_{(W,W')\in\mathtt W^-(N,N')}
    \{|W|+|W'|\}.
\]

Since Lemma~\ref{lm:stt} implies that, for all networks $N$ on $X$, 
$N\ominus V^0(N)$ is a star-tree on $X$ we can conclude, 
by the same arguments as used in the proof of Cor~\ref{cor:dist-well-defined}, 
that $\dist^-(N,N')$ is well-defined.  
Note also that, by Lemma~\ref{lm:sfsf}, $\dist^-(N,N')= \dist^-(N^-,N'^-)$ holds for all networks $N,N'$ on $X$.

\begin{theorem}\label{thm:dist-minus-metric}
Let $N\simeq^- N'$ precisely if $N^-\simeq (N')^-$ for any two networks $N$ and $N'$. 
Then, the distance $\dist^-$ is a metric on networks on $X$ up to
$\simeq^-$. More precisely, for all networks $N,N',N''$ on $X$,
the following hold:
\begin{enumerate}[nolistsep]
    \item $\dist^-(N,N')\geq 0$.
    \item $\dist^-(N,N')=0$ if and only if $N\simeq^- N'$.
    \item $\dist^-(N,N')=\dist^-(N',N)$.
    \item $\dist^-(N,N')\leq \dist^-(N,N'')+\dist^-(N'',N')$.
\end{enumerate}
\end{theorem}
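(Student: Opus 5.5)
We follow the structure of the proof of Theorem~\ref{thm:distmetric}, working throughout with shortcut-free reductions. Condition (1) holds trivially, since $\dist^-$ is a minimum of cardinalities, and condition (3) holds because $\simeq$ is symmetric, so $(W,W')\in\mathtt W^-(N,N')$ if and only if $(W',W)\in\mathtt W^-(N',N)$. For condition (2), if $\dist^-(N,N')=0$, then $(\emptyset,\emptyset)\in\mathtt W^-(N,N')$. Hence $N^-=(N\ominus\emptyset)^-\simeq(N'\ominus\emptyset)^-=(N')^-$, that is, $N\simeq^- N'$. The converse is the same statement read backwards.

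The main work is the triangle inequality (4). Choose optimal pairs $(W,W_1)\in\mathtt W^-(N,N'')$ and $(W',W_2)\in\mathtt W^-(N',N'')$. Let $\varphi\colon V(N'')\setminus W_1\to V(N)\setminus W$ be an isomorphism between $(N''\ominus W_1)^-$ and $(N\ominus W)^-$, and let $\varphi'$ be the analogous isomorphism for $N'$. We want to show
\[ \bigl(N\ominus (W\cup\varphi(W_2\setminus W_1))\bigr)^-\simeq\bigl(N'\ominus(W'\cup\varphi'(W_1\setminus W_2))\bigr)^-. \]
Once this holds, the counting argument from the proof of Theorem~\ref{thm:distmetric} carries over verbatim. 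It uses disjointness, Eq.~\eqref{eq:f}, and the fact that $\varphi$ and $\varphi'$ fix leaves and map the root to the root, so the added sets lie in $V^0$. This gives $\dist^-(N,N')\le \dist^-(N,N'')+\dist^-(N'',N')$.

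The key step, and the expected obstacle, is transporting $\ominus$ through an isomorphism that holds only after shortcuts are removed. We prove an auxiliary claim: if $\psi$ is an isomorphism $(A\ominus U)^-\simeq(B\ominus U')^-$ and $S\subseteq V^0(A)\setminus U$, then
\[ (A\ominus(U\cup S))^-\simeq(B\ominus(U'\cup\psi(S)))^-, \]
and $\psi$ restricted to the remaining vertices is an isomorphism. To prove it, apply Lemma~\ref{lm:sfsf} to the network $A\ominus U$. This gives
\[ (A\ominus(U\cup S))^-=((A\ominus U)\ominus S)^-=((A\ominus U)^-\ominus S)^-. \]
Since $\ominus$ commutes with graph isomorphisms, $(A\ominus U)^-\ominus S\simeq (B\ominus U')^-\ominus\psi(S)$ via $\psi$. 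Taking shortcut-free versions, which is also compatible with isomorphism, and applying Lemma~\ref{lm:sfsf} again on the $B$ side yields the claim. Alternatively, one can argue via ancestor relations: by Lemmas~\ref{lem:properties-SF-G_NEW} and~\ref{lem:network-ominus}, $\psi$ preserves $\preceq$ on the surviving vertices, and shortcut-free DAGs are determined by their ancestor relation, as in the proof of Lemma~\ref{lm:sfsf}.

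Applying the auxiliary claim with $\varphi^{-1}$ and $S=\varphi(W_2\setminus W_1)$ gives
\[ \bigl(N\ominus(W\cup\varphi(W_2\setminus W_1))\bigr)^-\simeq (N''\ominus(W_1\cup W_2))^-. \]
Applying it with $\varphi'^{-1}$ and $S=\varphi'(W_1\setminus W_2)$ gives
\[ \bigl(N'\ominus(W'\cup\varphi'(W_1\setminus W_2))\bigr)^-\simeq (N''\ominus(W_1\cup W_2))^-. \]
Here we use order-independence of $\ominus$ on vertex sets. Composing the two isomorphisms yields the required isomorphism and completes (4). Leaves are fixed throughout, because every map involved fixes leaves.
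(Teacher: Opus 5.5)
Your proposal is correct and follows the paper's proof: you run the triangle-inequality argument of Theorem~\ref{thm:distmetric} on shortcut-free reductions, and your auxiliary claim is exactly the compatibility $((N\ominus W)^-\ominus U)^- = (N\ominus(W\cup U))^-$ that the paper takes from Lemma~\ref{lm:sfsf}. You make the transport of $\ominus$ through the isomorphism, including the check that the transported sets lie in $V^0$, more explicit than the paper does.
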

\begin{proof}
Conditions (1) and (3) follow immediately from the definition of
$\dist^-$. For (2), note that $\dist^-(N,N')=0$ holds if and only if
$(\emptyset,\emptyset)\in\mathtt W^-(N,N')$, that is, if and only if
$   (N\ominus\emptyset)^-\simeq (N'\ominus\emptyset)^-.$
Equivalently, $N^-\simeq (N')^-$, i.e., $N\simeq^- N'$.

The triangle inequality is proved by the same argument as for Theorem~\ref{thm:distmetric}. The only
difference is that, instead of requiring a common $\ominus$-reduction $ N\ominus W \simeq N'\ominus
W'$, we require the corresponding shortcut-free reductions to be isomorphic, that is, $ (N\ominus
W)^- \simeq (N'\ominus W')^-$ .
Thus, throughout the proof of Theorem~\ref{thm:distmetric}, the admissible set $\mathtt W$ is
replaced by $\mathtt W^-$, and every network obtained by $\ominus$-editing is replaced by its
shortcut-free reduction. Lemma~\ref{lm:sfsf} ensures that these shortcut-removal
steps are compatible with the additional $\ominus$-edits used in the triangle argument; more
precisely, whenever $W\subseteq V^0(N)$ and $U\subseteq V^0(N)\setminus W$, we have $ ((N\ominus
W)^-\ominus U)^- \simeq (N\ominus (W\cup U))^-$ .
Hence the same construction as in the proof of  Theorem~\ref{thm:distmetric} yields
$ \dist^-(N,N') \leq \dist^-(N,N'')+\dist^-(N'',N')$.
Hence Condition (4) holds.
\end{proof}

Lemma~\ref{lem:dc-to-dc-similar} shows that, by removing all vertices whose clusters occur in only
one of two distinct-cluster networks, one obtains a pair of DC-similar networks. The next
proposition describes precisely how this reduction affects the $\ominus$-distances. The removed vertices
contribute exactly $|\mathfrak C_N\Delta\mathfrak C_{N'}|$, while the remaining contribution is the
corresponding $\ominus$-distance between the two reduced networks. Consequently, the 
$\ominus$-distances between
arbitrary distinct-cluster networks are completely determined by the symmetric difference of their
cluster sets and the distances between the associated DC-similar networks. This allows us to focus
primarily on the DC-similar case in the remainder of the paper.

\begin{proposition}\label{prop:dc-to-dc-similar}
Let $N$ and $N'$ be distinct-cluster networks on $X$. Let $ D_N \coloneqq \{v\in V^0(N)\mid \CC_N(v)\notin
\mathfrak C_{N'}\} $ and $ D_{N'} \coloneqq \{v'\in V^0(N')\mid \CC_{N'}(v')\notin \mathfrak C_N\}$.
Then, 
\[\dist^-(N,N') = |\mathfrak C_N\Delta \mathfrak C_{N'}| + \dist^-(N\ominus D_N, N'\ominus D_{N'}) = |D_N|+|D_{N'}| + \dist^-(N\ominus D_N,N'\ominus D_{N'}) ,\]
\[\dist(N,N') = |\mathfrak C_N\Delta \mathfrak C_{N'}| + \dist(N\ominus D_N, N'\ominus D_{N'}) = |D_N|+|D_{N'}| + \dist(N\ominus D_N,N'\ominus D_{N'}).\]
\end{proposition}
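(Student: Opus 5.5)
By Lemma~\ref{lem:dc-to-dc-similar}, $|D_N|+|D_{N'}|=|\mathfrak C_N\Delta\mathfrak C_{N'}|$, so only the first equality in each line needs proof. Write $M\coloneqq N\ominus D_N$ and $M'\coloneqq N'\ominus D_{N'}$; these are DC-similar by Lemma~\ref{lem:dc-to-dc-similar}. The argument is the same for $\dist$ and $\dist^-$, so I write it for a generic $d\in\{\dist,\dist^-\}$ with admissible pairs taken from $\mathtt W$ or $\mathtt W^-$, respectively.

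\emph{Upper bound.} Let $(U,U')$ be an optimal admissible pair for $(M,M')$. Then $U\subseteq V^0(M)\subseteq V^0(N)\setminus D_N$, since $M$ has the same root and leaves as $N$ by Lemma~\ref{lem:network-ominus}. Order-independence of $\ominus$ gives $N\ominus(D_N\cup U)=M\ominus U$, and similarly for $N'$. Hence $(D_N\cup U, D_{N'}\cup U')$ is admissible for $(N,N')$. This yields $d(N,N')\le |D_N|+|D_{N'}|+d(M,M')$.

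\emph{Lower bound.} This is the main step. Take an optimal admissible pair $(W,W')$ for $(N,N')$ with isomorphism $\varphi$ between the two reductions, either $N\ominus W\simeq N'\ominus W'$ or the shortcut-free versions. By Lemma~\ref{lem:network-ominus}, the reductions preserve clusters, and any isomorphism fixing the leaves preserves clusters. Thus $\CC_N(v)=\CC_{N'}(\varphi(v))$ for every $v\in V(N)\setminus W$. (For $\dist^-$, Lemma~\ref{lem:properties-SF-G_NEW} shows that shortcut removal does not change the ancestor relation, so this still holds.) Consequently every surviving vertex of $N$ has a cluster in $\mathfrak C_{N'}$, which forces $D_N\subseteq W$; likewise $D_{N'}\subseteq W'$.

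Set $U\coloneqq W\setminus D_N\subseteq V^0(M)$ and $U'\coloneqq W'\setminus D_{N'}$. By order-independence, $M\ominus U=N\ominus W$ and $M'\ominus U'=N'\ominus W'$, so $(U,U')$ is admissible for $(M,M')$. Therefore
\[
d(M,M')\le |U|+|U'| = |W|+|W'|-|D_N|-|D_{N'}|,
\]
which gives the reverse inequality. The one point needing care is the cluster-preservation argument under $\varphi$, in particular that $\varphi$ fixes leaves so that clusters computed in the reductions agree with those in $N$ and $N'$. Everything else is bookkeeping using Lemmas~\ref{lem:network-ominus} and~\ref{lem:dc-to-dc-similar}.
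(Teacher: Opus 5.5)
Your proposal is correct and follows essentially the same route as the paper. The lower bound forces $D_N\subseteq W$ and $D_{N'}\subseteq W'$ because leaf-fixing isomorphisms preserve clusters, and then passes to $(W\setminus D_N, W'\setminus D_{N'})$ by order-independence. The upper bound lifts an optimal pair $(U,U')$ for the reduced networks to $(D_N\cup U, D_{N'}\cup U')$, and $|D_N|+|D_{N'}|=|\mathfrak C_N\Delta\mathfrak C_{N'}|$ then yields the remaining equality.
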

\begin{proof}
Put $ \widehat N\coloneqq N\ominus D_N$ and $\widehat N'\coloneqq N'\ominus D_{N'}$. 
We start to prove
the statement for $\dist$ and focus first on the stated equalities involving $|D_N|+|D_{N'}|$.

We first  Let $(W,W')\in\mathtt W(N,N')$. Thus, $N\ominus W\simeq N'\ominus W'$. 
We claim that $ D_N\subseteq W$ and $D_{N'}\subseteq W'$.
Suppose, for contradiction, that there is some $v\in D_N\setminus W$. By
Lemma~\ref{lem:network-ominus}, the vertex $v$ remains in $N\ominus W$ with cluster $ \CC_{N\ominus
W}(v)=\CC_N(v) $. Let $ \varphi\colon N\ominus W\to N'\ominus W' $ be an isomorphism.
Since $\varphi$ fixes all leaves, it preserves clusters. Hence $ \CC_N(v) = \CC_{N\ominus W}(v) =
\CC_{N'\ominus W'}(\varphi(v)) = \CC_{N'}(\varphi(v)). $ Thus $\CC_N(v)\in\mathfrak C_{N'}$,
contradicting $v\in D_N$. Therefore $D_N\subseteq W$. By similar arguments, $D_{N'}\subseteq W'$.

Since the $\ominus$-operation is order-independent, $ N\ominus W = (N\ominus D_N)\ominus(W\setminus
D_N) = \widehat N\ominus(W\setminus D_N), $ and similarly $ N'\ominus W' = \widehat
N'\ominus(W'\setminus D_{N'}). $ It follows that $ (W\setminus D_N,W'\setminus D_{N'}) \in \mathtt
W(\widehat N,\widehat N')$. Therefore, $|W\setminus D_N|+|W'\setminus D_{N'}|
\geq \dist(\widehat N,\widehat N')$.
Consequently, \[ |W|+|W'| = |D_N|+|D_{N'}| + |W\setminus
D_N|+|W'\setminus D_{N'}|\ \geq |D_N|+|D_{N'}| + \dist(\widehat N,\widehat N'). \] As the latter
holds for all $(W,W')\in\mathtt W(N,N')$, we can conclude that $ \dist(N,N') \geq |D_N|+|D_{N'}| +
\dist(\widehat N,\widehat N')$.

For the converse, let $U\subseteq V^0(\widehat N)$ and $U'\subseteq V^0(\widehat N')$ be chosen such
that $\widehat N\ominus U \simeq \widehat N'\ominus U'$ and $\dist(\widehat N,\widehat
N')=|U|+|U'|$. Since only internal vertices were removed when constructing $\widehat N$ and
$\widehat N'$, Lemma~\ref{lem:network-ominus} implies that $V^0(\widehat N)=V^0(N)\setminus D_N$ and
$V^0(\widehat N')=V^0(N')\setminus D_{N'}$. Hence $U\subseteq V^0(N)\setminus D_N$ and $U'\subseteq
V^0(N')\setminus D_{N'}$. Order-independence of $\ominus$ gives $ N\ominus(D_N\cup U) = \widehat
N\ominus U \simeq \widehat N'\ominus U' = N'\ominus(D_{N'}\cup U') $. Hence $ (D_N\cup U,D_{N'}\cup
U')\in\mathtt W(N,N'), $ and therefore
\[ \dist(N,N') \leq |D_N\cup U|+|D_{N'}\cup U'|\ = |D_N|+|D_{N'}|+|U|+|U'|\ = |D_N|+|D_{N'}| + \dist(\widehat N,\widehat N'). \]
In summary, we have shown that $ \dist(N,N') = |D_N|+|D_{N'}| + \dist(\widehat N,\widehat N'). $
Lemma~\ref{lem:dc-to-dc-similar} now yields $ \dist(N,N') = |\mathfrak C_N\mathbin{\Delta}\mathfrak
C_{N'}| + \dist(N\ominus D_N,N'\ominus D_{N'})$.

The proof for $\dist^-$ is analogous. Let $(W,W')\in\mathtt W^-(N,N')$, so that $ (N\ominus W)^-
\simeq (N'\ominus W')^-. $ Shortcut removal does not change clusters. Hence the same
cluster-preservation argument as above shows that every vertex in $D_N$ must belong to $W$ and every
vertex in $D_{N'}$ must belong to $W'$. Thus $ D_N\subseteq W$ and $D_{N'}\subseteq W'. $ By
order-independence of $\ominus$, $ (N\ominus W)^- = (\widehat N\ominus(W\setminus D_N))^- $ and $
(N'\ominus W')^- = (\widehat N'\ominus(W'\setminus D_{N'}))^-. $ Therefore $ (W\setminus
D_N,W'\setminus D_{N'}) \in \mathtt W^-(\widehat N,\widehat N'), $ which implies $ \dist^-(N,N')
\geq |D_N|+|D_{N'}| + \dist^-(\widehat N,\widehat N'). $ Conversely, let $U$ and $U'$ realize
$\dist^-(\widehat N,\widehat N')$. Then $ (\widehat N\ominus U)^- \simeq (\widehat N'\ominus U')^-.
$ Using order-independence once more gives $ (N\ominus(D_N\cup U)\bigr)^- \simeq
\bigl(N'\ominus(D_{N'}\cup U')\bigr)^-. $ Hence $ \dist^-(N,N') \leq |D_N|+|D_{N'}| +
\dist^-(\widehat N,\widehat N'). $ Thus, $ \dist^-(N,N') = |D_N|+|D_{N'}| + \dist^-(\widehat
N,\widehat N'). $ Applying Lemma~\ref{lem:dc-to-dc-similar} completes the proof.
\end{proof}

We will make frequent use of the following simple result when dealing with DC-similar networks.
\begin{lemma}\label{lem:simw} Let $N$ and $N'$ be DC-similar networks with canonically
       identified vertex sets and let $W\subseteq V^0(N)$ and $W'\subseteq V^0(N')$. 
       If  $ N\ominus W \simeq N'\ominus W'$, then $W=W'$. 
       Moreover, if  $
       (N\ominus W)^- \simeq (N'\ominus W')^-$, then $W=W'$. 
\end{lemma}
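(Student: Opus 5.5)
The argument rests on one observation: an isomorphism fixing all leaves preserves clusters, and in DC-similar networks a vertex is determined by its cluster. The second claim, about $(N\ominus W)^-\simeq (N'\ominus W')^-$, implies the first, because $N\ominus W\simeq N'\ominus W'$ gives $(N\ominus W)^-\simeq (N'\ominus W')^-$: a graph isomorphism maps shortcuts to shortcuts. It therefore suffices to prove the second claim. Even so, the argument is identical in both cases, so I would run it once.

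Let $\varphi$ be a leaf-fixing isomorphism from $(N\ominus W)^-$ to $(N'\ominus W')^-$, with vertex sets $V(N)\setminus W$ and $V(N')\setminus W'$ respectively. Since $\varphi$ is a graph isomorphism fixing $X$, it preserves directed paths, hence the ancestor relation. So for every $v\in V(N)\setminus W$ and $x\in X$, we have $x\preceq v$ in $(N\ominus W)^-$ if and only if $x\preceq \varphi(v)$ in $(N'\ominus W')^-$, i.e. $\CC_{(N\ominus W)^-}(v)=\CC_{(N'\ominus W')^-}(\varphi(v))$. By Lemma~\ref{lem:properties-SF-G_NEW}, shortcut removal does not change clusters, and by Lemma~\ref{lem:network-ominus}, $\ominus$-removal of internal vertices does not change clusters either. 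Hence $\CC_N(v)=\CC_{N'}(\varphi(v))$.

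Under the canonical identification of Remark~\ref{rem:similar}, we have $\CC_{N'}(\varphi(v))=\CC_N(v)=\CC_{N'}(v)$. Since $N'$ is distinct-cluster, this gives $\varphi(v)=v$. So $\varphi$ is the identity on $V(N)\setminus W$, and therefore $V(N)\setminus W = \varphi(V(N)\setminus W)=V(N')\setminus W'$. Because $V(N)=V(N')$ under the identification, it follows that $W=W'$.

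The only delicate point is bookkeeping. One must check that the canonical identification is compatible with the vertex sets of the reduced networks, i.e. that vertex names survive $\ominus$ and shortcut removal. This holds because both operations keep the surviving vertices unchanged, which is exactly what Lemmas~\ref{lem:network-ominus} and~\ref{lem:properties-SF-G_NEW} guarantee.
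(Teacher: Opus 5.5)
Your proof is correct and takes essentially the same route as the paper. Both arguments use three facts: $\ominus$ on internal vertices and shortcut removal preserve clusters (Lemmas~\ref{lem:network-ominus} and~\ref{lem:properties-SF-G_NEW}), a leaf-fixing isomorphism preserves clusters, and distinct-cluster injectivity then forces $V(N)\setminus W=V(N')\setminus W'$. The only difference is cosmetic: you argue pointwise and show that the isomorphism is the identity, whereas the paper directly compares the cluster sets $\{\CC_N(v)\mid v\in V(N)\setminus W\}$ and $\{\CC_N(v)\mid v\in V(N)\setminus W'\}$.
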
 
\begin{proof} 
Let $N,N'$ and $W,W'$ be as stated.
By Lemma~\ref{lem:network-ominus}, the $\ominus$-operation preserves the clusters of all
vertices that remain. Hence, $ \mathfrak C_{N\ominus W} = \{\CC_N(v)\mid v\in V(N)\setminus W\}$ and
$ \mathfrak C_{N'\ominus W'} = \{\CC_{N'}(v)\mid v\in V(N')\setminus W'\} $. 
Suppose first that $N\ominus W\simeq N'\ominus W'$. Then these two
networks have the same clustering system.  Using the canonically identified
vertex sets and the fact that $\CC_N(v)=\CC_{N'}(v)$ for every vertex $v$, we obtain $
\{\CC_N(v)\mid v\in V(N)\setminus W\} = \{\CC_N(v)\mid v\in V(N)\setminus W'\}. $ Since $N$ is
distinct-cluster, the map $v\mapsto \CC_N(v)$ is injective. Therefore $V(N)\setminus W=V(N)\setminus
W'$, and hence $W=W'$.

Now suppose that $ (N\ominus W)^- \simeq (N'\ominus W')^-. $ Since shortcut removal preserves
clusters, we have $ \mathfrak C_{(N\ominus W)^-} = \mathfrak C_{N\ominus W} $ and $ \mathfrak
C_{(N'\ominus W')^-} = \mathfrak C_{N'\ominus W'}. $ Thus the same argument as above applies and
again yields $W=W'$.
\end{proof}

Let $N$ be a distinct-cluster network on $X$. Since every vertex of $N$ is uniquely determined by
its cluster, we may identify each vertex $v\in V(N)$ with $\CC_N(v)$. Under this identification, $
V(N)=\mathfrak C_N$, and we say that $N$ is \emph{cluster-canonically represented}.

\begin{corollary}\label{cor:dist-DC-equality}
The distances $\dist$ and $\dist^-$ are genuine metrics on the class of cluster-canonically
represented distinct-cluster networks on $X$. More precisely, the isomorphism relations $\simeq$
and $\simeq^-$ in Theorems~\ref{thm:distmetric} and~\ref{thm:dist-minus-metric}, respectively, can
be replaced by equality.
In particular, this holds for every class of pairwise DC-similar networks on $X$ whose vertex sets
are canonically identified.
\end{corollary}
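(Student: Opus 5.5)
Theorems~\ref{thm:distmetric} and~\ref{thm:dist-minus-metric} already establish non-negativity, symmetry and the triangle inequality; none of these depends on how the identity axiom is phrased. It therefore suffices to show the following for cluster-canonically represented distinct-cluster networks $N,N'$ on $X$: the relations $N\simeq N'$ and $N=N'$ coincide, and so do $N\simeq^- N'$ and $N^-=(N')^-$. Here equality of cluster-canonically represented networks means equality of vertex sets $V(N)=\mathfrak C_N$ and of arc sets. The implication from equality to $\simeq$ (resp.\ $\simeq^-$) is trivial because the identity map fixes leaves. The real work is the converse.

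First I will show that any isomorphism $\varphi$ between distinct-cluster networks on $X$ is cluster-preserving. Since $\varphi$ fixes all leaves and preserves arcs, and hence $\preceq$, we have $x\preceq_N v$ if and only if $x\preceq_{N'}\varphi(v)$ for every leaf $x$. Thus $\CC_{N'}(\varphi(v))=\CC_N(v)$, exactly as in the proof of Proposition~\ref{prop:dc-to-dc-similar}. In the cluster-canonical representation the vertex $v$ \emph{is} the set $\CC_N(v)$, and the vertex $\varphi(v)$ is the set $\CC_{N'}(\varphi(v))$. Hence $\varphi(v)=v$ for all $v$, so $V(N)=V(N')$ and $\varphi$ is the identity. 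Since $\varphi$ preserves arcs, $E(N)=E(N')$, that is, $N=N'$.

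For $\dist^-$, suppose $N^-\simeq (N')^-$. Shortcut removal changes neither the vertex set nor, by Lemma~\ref{lem:properties-SF-G_NEW}, the ancestor relation, so it also preserves clusters. Hence $N^-$ and $(N')^-$ are again distinct-cluster and cluster-canonically represented with the same vertex labels. The argument of the previous paragraph then gives $N^-=(N')^-$. So $\dist^-(N,N')=0$ exactly when the shortcut-free versions are equal, which is the sense in which $\simeq^-$ is replaced by equality. If one insists on the literal identity $N=N'$, this fails in general, since $N$ and $N^-$ have distance $0$. I would therefore state the result as a metric on shortcut-free representatives, equivalently on ancestor relations, and check which reading the authors intend.

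For the final sentence, pairwise DC-similar networks with canonically identified vertex sets are precisely cluster-canonically represented ones up to a fixed relabelling $v\leftrightarrow\CC(v)$ common to all of them. The same argument therefore applies verbatim. Alternatively, Lemma~\ref{lem:simw} with $W=W'=\emptyset$ together with the cluster-preservation argument shows that any isomorphism is the identity on the identified vertex set. The main obstacle is only the interpretational point about $\simeq^-$; everything else follows immediately from cluster preservation under leaf-fixing isomorphisms.
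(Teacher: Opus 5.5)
Your argument is correct and is essentially the paper's: leaf-fixing isomorphisms preserve clusters, so under the cluster-canonical identification every isomorphism is the identity, and the remaining axioms are inherited from Theorems~\ref{thm:distmetric} and~\ref{thm:dist-minus-metric}. The paper applies this to all reductions $N\ominus W$, $N'\ominus W'$, so that the defining conditions of $\mathtt W$ and $\mathtt W^-$ themselves become equalities, whereas you only need the case $W=W'=\emptyset$. Your reading of the $\dist^-$ case as $N^-=(N')^-$ is exactly what the paper's proof establishes, and your caveat is justified: literal equality fails, e.g.\ $\dist^-(N,\widehat N)=0$ but $N\neq\widehat N$ for the networks in the proof of Theorem~\ref{thm:ominus-hard-similar-normal}.
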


\begin{proof}
Let $N$ and $N'$ be cluster-canonically represented distinct-cluster networks on $X$, and let
$W\subseteq V^0(N)$ and $W'\subseteq V^0(N')$. The $\ominus$-operation preserves the clusters of all
remaining vertices. Moreover, every isomorphism between networks on $X$ fixes the leaves and
therefore preserves the cluster of every vertex. Since every remaining vertex is identified with its
cluster, any isomorphism between $N\ominus W$ and $N'\ominus W'$ fixes every remaining vertex.
Consequently, $ N\ominus W\simeq N'\ominus W'$ if and only if $N\ominus W=N'\ominus W'$. Shortcut
removal also preserves the clusters of all vertices. By the same argument, $ (N\ominus W)^-\simeq
(N'\ominus W')^-$ if and only if $(N\ominus W)^-=(N'\ominus W')^-$. Hence, the isomorphism
conditions in the definitions of $\dist$ and $\dist^-$ can be replaced by equality. The assertion
now follows from Theorems~\ref{thm:distmetric} and~\ref{thm:dist-minus-metric}.

If the networks are pairwise DC-similar, then their cluster sets, and thus their canonically
represented vertex sets, coincide. This is just a special case of the preceding statement.
\end{proof}


\section{The connection between $\dist$ and $\dist^-$ and hardwired cluster distances}
\label{sec:ominus-hardwired}

Recall that, for a network $N$, $\mathfrak C_N$ denotes the set of
clusters of $N$, whereas $\mathfrak M_N$ denotes the corresponding
multiset of clusters. Thus, a cluster $C$ occurs in $\mathfrak M_N$ with
multiplicity  $|\{v\in V(N)\mid \CC_N(v)=C\}|$.

We now relate $\dist$ and $\dist^-$ to two  clusters distances of networks $N$ and $N'$. 
First,  the \emph{Robinson-Foulds (RF) distance} $d_{RF} \coloneqq|\mathfrak C_N \Delta \mathfrak C_{N'}|$ 
has been initially introduced to compare trees \cite{robinson1981comparison}.  
Its extension to networks~\cite{cardona2008metrics} is often called the \emph{hardwired clusters distance}~\cite[Chapter 6]{huson2010phylogenetic}.
Second, we consider  the symmetric difference of cluster multiset $|\mathfrak M_N \Delta \mathfrak M_{N'}|$, which we may call the \emph{multiset hardwired clusters distance}.  
It is well-known that neither $|\mathfrak C_N \Delta \mathfrak C_{N'}|$ nor $|\mathfrak M_N \Delta \mathfrak M_{N'}|$ yields a metric, as they can both be zero on non-isomorphic networks~\cite[Figure 11]{cardona2008tripartitions}.

\begin{lemma}\label{lm:ineqs}
For all networks $N$ and $N'$ on $X$, it holds that  \[d_{RF} =|\mathfrak C_N \Delta \mathfrak C_{N'}| \leq |\mathfrak M_N \Delta \mathfrak M_{N'}|  \leq \dist^-(N,N') \leq \dist(N,N').\]
\end{lemma}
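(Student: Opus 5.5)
We prove the three inequalities separately, from right to left.

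\emph{Third inequality.} For $\dist^-(N,N')\le\dist(N,N')$, note that $\mathtt W(N,N')\subseteq\mathtt W^-(N,N')$. Indeed, any leaf-fixing isomorphism between $N\ominus W$ and $N'\ominus W'$ maps shortcuts to shortcuts, since it maps directed paths to directed paths. Hence it restricts to an isomorphism $(N\ominus W)^-\simeq(N'\ominus W')^-$. A minimum over a larger set can only be smaller.

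\emph{Second inequality.} This is the main step, though it is essentially a counting argument. Fix $(W,W')\in\mathtt W^-(N,N')$ attaining $\dist^-(N,N')$, and let $\varphi$ be an isomorphism $(N\ominus W)^-\to(N'\ominus W')^-$.

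\begin{itemize}
\item By Lemma~\ref{lem:network-ominus} and Lemma~\ref{lem:properties-SF-G_NEW}, every remaining vertex $v$ keeps its cluster, so $\CC_{(N\ominus W)^-}(v)=\CC_N(v)$.
\item Since $\varphi$ fixes leaves and preserves $\preceq$, it preserves clusters: $\CC_{N'}(\varphi(v))=\CC_N(v)$.
\item Hence the multiset $\mathfrak M_{(N\ominus W)^-}$ of clusters of the remaining vertices of $N$ equals the corresponding multiset for $N'$. Call it $\mathfrak R$.
\item We have $\mathfrak R\subseteq\mathfrak M_N$ as multisets, and $\mathfrak M_N$ minus $\mathfrak R$ is exactly the multiset of clusters of vertices in $W$, so $|\mathfrak M_N\setminus\mathfrak R|=|W|$. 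Likewise $|\mathfrak M_{N'}\setminus\mathfrak R|=|W'|$.
\end{itemize}

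Now, for each cluster $C$, the multiplicity of $C$ in $\mathfrak R$ is at most $\min(m_{\mathfrak M_N}(C),m_{\mathfrak M_{N'}}(C))$. Therefore
\[
|m_{\mathfrak M_N}(C)-m_{\mathfrak M_{N'}}(C)|\le \bigl(m_{\mathfrak M_N}(C)-m_{\mathfrak R}(C)\bigr)+\bigl(m_{\mathfrak M_{N'}}(C)-m_{\mathfrak R}(C)\bigr).
\]
Summing over all $C$ gives $|\mathfrak M_N\Delta\mathfrak M_{N'}|\le|W|+|W'|=\dist^-(N,N')$.

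\emph{First inequality.} For $|\mathfrak C_N\Delta\mathfrak C_{N'}|\le|\mathfrak M_N\Delta\mathfrak M_{N'}|$, compare the contributions cluster by cluster. A cluster in $\mathfrak C_N\setminus\mathfrak C_{N'}$ has multiplicity $m\ge1$ in $\mathfrak M_N$ and $0$ in $\mathfrak M_{N'}$, so it contributes $m\ge1$ to the multiset difference versus $1$ to the set difference. Symmetrically for $\mathfrak C_{N'}\setminus\mathfrak C_N$. Clusters common to both sets contribute $0$ to the set difference and a nonnegative amount to the multiset difference. Summing yields the inequality.

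The only delicate point is the cluster preservation used in the second step: that shortcut removal and $\ominus$ on internal vertices keep clusters, and that a leaf-fixing isomorphism preserves them. All of this follows from the cited lemmas.
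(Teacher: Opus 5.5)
Your proof is correct and follows the paper's argument essentially step by step. The last inequality comes from $\mathtt W(N,N')\subseteq\mathtt W^-(N,N')$. The middle one uses cluster preservation under $\ominus$, under shortcut removal and under leaf-fixing isomorphisms, so that the retained vertices carry the same cluster multiset, and the first is a per-cluster multiplicity comparison. Your explicit bound $m_{\mathfrak R}(C)\le\min\bigl(m_{\mathfrak M_N}(C),m_{\mathfrak M_{N'}}(C)\bigr)$ is simply a more careful rendering of the paper's remark that every surplus cluster occurrence must be contributed by a deleted vertex.
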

\begin{proof}
Let $N$ and $N'$ be two DAGs on $X$. In what follows, we use sets $W$ and $W'$
and always assume that $W\subseteq V^0(N)$ and $W\subseteq V^0(N')$.

To verify that $\dist^-(N,N') \leq \dist(N,N') $, let $W$ and $W'$
be such that $N \ominus W \simeq N'
\ominus W'$ and $\dist(N,N')=|W|+|W'|$. Then, we have $(N \ominus W)^- \simeq (N' \ominus W')^-$, so
$\dist^-(N,N') \leq |W|+|W'|=\dist(N,N')$.

We now show that $|\mathfrak M_N \Delta \mathfrak M_{N'}| \leq \dist^-(N,N')$. Let $W$ and $W'$ be such that $(N \ominus W)^- \simeq (N'
\ominus W')^-$ and $\dist^-(N,N')=|W|+|W'|$. Let $\varphi\colon V(N)\setminus W \to
V(N')\setminus W'$ be an isomorphism between $(N\ominus W)^-$ and $(N'\ominus W')^-$. By
Lemma~\ref{lem:properties-SF-G_NEW} and Lemma~\ref{lem:network-ominus},
 both $(N \ominus W)^-$ and $(N' \ominus W')^-$ are networks on $X$. In particular,
$\varphi(x)=x$ holds for all $x\in X$. Moreover, since by Lemma~\ref{lem:properties-SF-G_NEW} and Lemma~\ref{lem:network-ominus}, the
$\ominus$-operation preserves clusters of the vertices that remain and shortcut removal does not
change such clusters, we obtain, for all $v\in
V(N)\setminus W$, \[ \CC_N(v) = \CC_{N\ominus W}(v) = \CC_{(N\ominus W)^-}(v) = \CC_{(N'\ominus
W')^-}(\varphi(v)) = \CC_{N'\ominus W'}(\varphi(v)) = \CC_{N'}(\varphi(v)).
\]
Hence the vertices that remain after deleting $W$ from $N$ and $W'$ from
$N'$ contribute the same multiset of clusters.
Therefore, every occurrence of a cluster in
$\mathfrak M_N\setminus \mathfrak M_{N'}$ must be contributed by a vertex
$v\in W$. Similarly, every occurrence of a cluster in
$\mathfrak M_{N'}\setminus \mathfrak M_N$ must be contributed by a vertex
$v'\in W'$. 
Therefore, $|W| \geq |\mathfrak M_N \setminus \mathfrak M_{N'}|$ and $|W'| \geq |\mathfrak M_{N'}
\setminus \mathfrak M_N|$ holds. It follows that $|W|+|W'| \geq |\mathfrak M_N \setminus
\mathfrak M_{N'}|+|\mathfrak M_{N'} \setminus \mathfrak M_N|=|\mathfrak M_N \Delta \mathfrak
M_{N'}|$. This together with $\dist^-(N,N')=|W|+|W'|$ implies that 
$|\mathfrak M_N \Delta \mathfrak M_{N'}| \leq \dist^-(N,N')$.

We finally show that $|\mathfrak C_N \Delta \mathfrak C_{N'}| \leq |\mathfrak M_N \Delta
\mathfrak M_{N'}|$. 
By definition, a cluster belongs to $\mathfrak C_N$ if and only if it
occurs with positive multiplicity in $\mathfrak M_N$, and analogously for
$N'$. Hence every cluster in
$\mathfrak C_N \Delta \mathfrak C_{N'}$ occurs with positive multiplicity
in exactly one of $\mathfrak M_N$ and $\mathfrak M_{N'}$. Therefore, each
cluster in $\mathfrak C_N \Delta \mathfrak C_{N'}$ contributes at least one occurrence to the  symmetric
difference $\mathfrak M_N \Delta \mathfrak M_{N'}$. 
Consequently, $|\mathfrak C_N \Delta \mathfrak C_{N'}| \leq |\mathfrak M_N
\Delta \mathfrak M_{N'}|$ follows.
\end{proof}

Using Lemma~\ref{lm:ineqs}, there is a simple way to determine the largest possible value of
$\dist$ and $\dist^-$ when the numbers of removable vertices are fixed. This is useful, for
example, if one wants to normalize these distances.

\begin{lemma}\label{lem:diameter}
Let $X$ be a finite set with $|X|\geq 2$. For $m\geq 0$, let
$\mathcal N_X(m)$ be the class of networks on $X$ for which
$|V^0(N)|=m$.
Then, for all $m_1,m_2\geq 0$,
\[
\mathrm{diam}_X(m_1,m_2) \coloneqq \max_{\substack{N\in\mathcal N_X(m_1)\ N'\in\mathcal N_X(m_2)}} \dist(N,N')
= m_1+m_2\] and
\[\mathrm{diam}^-_X(m_1,m_2)  \coloneqq  \max_{\substack{N\in\mathcal N_X(m_1)\ N'\in\mathcal N_X(m_2)}} \dist^-(N,N') = m_1+m_2.\]
\end{lemma}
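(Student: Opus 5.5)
The plan is to prove matching upper and lower bounds. Since both are sandwiched by Lemma~\ref{lm:ineqs}, it suffices to show two things: $\dist(N,N')\le m_1+m_2$ for all $N\in\mathcal N_X(m_1)$ and $N'\in\mathcal N_X(m_2)$, and that there exist such $N,N'$ with $|\mathfrak M_N\Delta\mathfrak M_{N'}|=m_1+m_2$. Combining these with $|\mathfrak M_N\Delta\mathfrak M_{N'}|\le\dist^-(N,N')\le\dist(N,N')$ pins both maxima to exactly $m_1+m_2$.

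\emph{Upper bound.} By Lemma~\ref{lm:stt} (as in Corollary~\ref{cor:dist-well-defined}), $(V^0(N),V^0(N'))\in\mathtt W(N,N')$. Hence $\dist^-(N,N')\le\dist(N,N')\le |V^0(N)|+|V^0(N')|=m_1+m_2$.

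\emph{Lower bound.} We need networks in which every internal vertex of $N$ carries a cluster that no vertex of $N'$ has, and vice versa. Then no internal cluster multiplicities cancel. The root cluster $X$ and the singleton leaf clusters appear exactly once in each network and therefore cancel.

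The natural choice is to give all internal vertices of $N$ the cluster $X$ and all internal vertices of $N'$ a cluster different from $X$. Concretely, let $N$ have a path $\rho\to u_1\to\dots\to u_{m_1}$ with $u_{m_1}$ adjacent to every leaf. Then $\CC_N(u_i)=X$ for all $i$. The vertices $u_i$ with $i<m_1$ have outdegree $1$, but nothing in the statement requires phylogenetic networks, so this is allowed.

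For $N'$, fix a leaf $x_0$ (possible since $|X|\ge2$). Put $\rho'\to x$ for all $x\ne x_0$, and add a path $\rho'\to w_1\to\dots\to w_{m_2}\to x_0$. Each $w_j$ then has cluster $\{x_0\}\ne X$, while $\rho'$ is still the unique root and the leaf set is $X$.

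With this choice, $\mathfrak M_N$ contains $X$ with multiplicity $m_1+1$ and each singleton once. $\mathfrak M_{N'}$ contains $X$ once, $\{x_0\}$ with multiplicity $m_2+1$, and every other singleton once. Therefore $|\mathfrak M_N\Delta\mathfrak M_{N'}|=m_1+m_2$.

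The degenerate cases $m_1=0$ or $m_2=0$ are covered by the same construction, where the corresponding path is simply empty. For $m_1=0$, $N$ is the star tree on $X$.

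The only step needing care is verifying that these constructions really are networks on $X$ with exactly the prescribed number of internal vertices. This requires a unique root, the correct leaf set, and $|V^0|=m_i$, all of which are immediate from the construction.
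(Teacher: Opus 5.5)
Your proof is correct and follows essentially the same route as the paper. The upper bound comes from $(V^0(N),V^0(N'))\in\mathtt W(N,N')$ via the star-tree reduction. Sharpness comes from one network whose internal vertices all have cluster $X$ and another whose internal vertices all have cluster $\{x_0\}$, combined with the chain $|\mathfrak M_N\Delta\mathfrak M_{N'}|\le\dist^-(N,N')\le\dist(N,N')$ from Lemma~\ref{lm:ineqs}. Your path $\rho\to u_1\to\dots\to u_{m_1}$ also gets the internal-vertex count exactly right. The paper's wording (inserting an $r\rho_T$-path with $m_1+1$ arcs) would actually give $m_1+1$ internal vertices, since $\rho_T$ becomes internal.
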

\begin{proof}
Let $N\in\mathcal N_X(m_1)$ and $N'\in\mathcal N_X(m_2)$. By Lemma~\ref{lm:stt},
$N\ominus V^0(N)$ and $N'\ominus V^0(N')$ are both star trees on $X$. 
This and Lemma~\ref{lm:ineqs} implies $
\dist^-(N,N') \leq \dist(N,N')\leq |V^0(N)|+|V^0(N')|=m_1+m_2$. 

It remains to show that this bound is sharp. Let $N$ be obtained from the star tree $T$ on $X$ with
root $\rho_T$ by inserting a directed $r\rho_T$-path of with $m_1+1$ arcs. Hence, $N$ is a network
with root $r$ and all vertices in $V^0(N)$ have cluster $X$. Let $N'$ be obtained from the star tree
on $X$ by subdividing the arc to the leaf $x$ with $m_2$ vertices. Hence, all vertices in $V^0(N')$
have cluster $\{x\}$. Since $|X|\geq 2$, we have $\{x\}\neq X$. Thus the occurrences in $\mathfrak
M_N$ contributed by the vertices in $V^0(N)$ are $m_1$ additional occurrences of $X$, while the
occurrences in $\mathfrak M_{N'}$ contributed by the vertices in $V^0(N')$ are $m_2$ additional
occurrences of $\{x\}$. The root occurrences of $X$ and the leaf occurrences of the singleton
clusters occur in both networks. This together with Lemma~\ref{lm:ineqs} implies $m_1+m_2 =
|\mathfrak M_N\Delta \mathfrak M_{N'}| \leq \dist^-(N,N') \leq \dist(N,N') \leq m_1+m_2. $
Therefore $\dist^-(N,N')=\dist(N,N')=m_1+m_2$, and the bound is sharp.
\end{proof}

Lemma~\ref{lm:ineqs} shows that the multiset hardwired cluster distance is always a lower bound for
$\dist^-$. We now identify a general situation in which this lower bound is tight. The essential
requirements are that the class ``behaves well'' under the shortcut-free $\ominus$-reductions used in the
definition of $\dist^-$, and for which the multiset of clusters determines the network up to
shortcut removal.

\begin{definition} 
Let $\Gamma$ be a class of networks on $X$. We say that $\Gamma$ is 
\emph{$\ominus$-shortcut-closed}
 if, for every $N\in\Gamma$ and every $W\subseteq V^0(N)$, 
 it holds that $ (N\ominus W)^-\in\Gamma$. 
 
Moreover, we say that $\Gamma$ is \emph{multicluster-shortcut-encoded}
 if, for all $N,N'\in\Gamma$, the following implication holds:
$ \mathfrak M_N=\mathfrak M_{N'} \implies N^-\simeq (N')^-$.
\end{definition}

The following theorem shows that the latter two properties 
are sufficient to make the lower bound $|\mathfrak M_N\Delta \mathfrak M_{N'}|$
of $\dist^-(N,N')$  from Lemma~\ref{lm:ineqs} tight.

\begin{proposition}\label{prop:symdif} 
Let $\Gamma$ be a class of networks on $X$ that is $\ominus$-shortcut-closed and
multicluster-shortcut-encoded. Then, for all $N,N'\in \Gamma$ it holds that
\[ \dist^-(N,N')=|\mathfrak M_N\Delta \mathfrak M_{N'}|.\] 
Moreover, the distance $\dist^-(N,N')$ and sets
$W\subseteq V^0(N)$ and $W'\subseteq V^0(N')$ with
$(N\ominus W)^-\simeq (N'\ominus W')^-$
and $|W|+|W'|=\dist^-(N,N')$ 
can be computed in polynomial time.
\end{proposition}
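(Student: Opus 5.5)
Lemma~\ref{lm:ineqs} already gives $|\mathfrak M_N\Delta \mathfrak M_{N'}|\le \dist^-(N,N')$. So it remains to exhibit explicit sets $W\subseteq V^0(N)$ and $W'\subseteq V^0(N')$ with $|W|+|W'|=|\mathfrak M_N\Delta\mathfrak M_{N'}|$ and $(N\ominus W)^-\simeq (N'\ominus W')^-$.

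The construction works cluster by cluster. For each cluster $C$, put $m(C)\coloneqq m_{\mathfrak M_N}(C)$ and $m'(C)\coloneqq m_{\mathfrak M_{N'}}(C)$. If $m(C)>m'(C)$, we place in $W$ exactly $m(C)-m'(C)$ vertices $v$ of $N$ with $\CC_N(v)=C$. Symmetrically, if $m'(C)>m(C)$, we place in $W'$ exactly $m'(C)-m(C)$ vertices of $N'$ with cluster $C$. We must check that these vertices can be taken from $V^0(N)$ and $V^0(N')$.

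\begin{itemize}
\item The root cluster $X$ occurs once at the root of each network. If $m(X)>m'(X)$, there are $m(X)-1\ge m(X)-m'(X)$ non-root vertices with cluster $X$. Each of them is internal, since it is not the root and is not a leaf because $|X|\ge 2$, or trivially when $X$ is a singleton.
\item For a singleton $\{x\}$, the leaf $x$ occurs in both networks. So any surplus of $\{x\}$ can be realized by non-leaf vertices with cluster $\{x\}$, which are internal.
\end{itemize}

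By construction $|W|+|W'|=\sum_C|m(C)-m'(C)|=|\mathfrak M_N\Delta\mathfrak M_{N'}|$.

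Next, Lemma~\ref{lem:network-ominus} says that $\ominus$ preserves the clusters of the remaining vertices, and shortcut removal does not change clusters either. Hence $(N\ominus W)^-$ has cluster multiset $\mathfrak M_N$ with the surplus removed, that is, each $C$ occurs $\min(m(C),m'(C))$ times. The same holds for $(N'\ominus W')^-$, so $\mathfrak M_{(N\ominus W)^-}=\mathfrak M_{(N'\ominus W')^-}$. Both networks lie in $\Gamma$ because $\Gamma$ is $\ominus$-shortcut-closed. Multicluster-shortcut-encodedness then gives $((N\ominus W)^-)^-\simeq((N'\ominus W')^-)^-$. Since a shortcut-free DAG equals its own shortcut-free version, this is exactly $(N\ominus W)^-\simeq (N'\ominus W')^-$. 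So $(W,W')\in\mathtt W^-(N,N')$, and equality holds.

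For the algorithmic part, compute all clusters by reachability. Group the vertices by cluster and compute the multiplicities. Select $W$ and $W'$ greedily as above, preferring internal vertices. Finally $|W|+|W'|$ is the distance. All of this takes polynomial time in $|V(N)|+|V(N')|$, and no isomorphism test is needed.

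The main subtlety is the choice of representatives. Any choice of surplus vertices works here because the class properties hide all structural detail. The only real care needed is ensuring the chosen vertices are internal, which is handled by the root and leaf arguments above.
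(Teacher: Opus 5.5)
Your proof is correct and follows the paper's argument: the lower bound comes from Lemma~\ref{lm:ineqs}, and the upper bound deletes exactly the surplus occurrences of each cluster (the paper phrases this as keeping $\min(m(C),m'(C))$ vertices per cluster with the root and all leaves among the kept ones), then applies $\ominus$-shortcut-closedness, multicluster-shortcut-encodedness, and the fact that a shortcut-free network is its own shortcut-free version. One slip: when $|X|=1$, your claim that every non-root vertex with cluster $X$ is ``trivially'' internal is false, because the leaf also has cluster $X$; the fix is that the root and the leaf then occur in both networks, so the surplus is at most $m(X)-2$, which is exactly the number of internal vertices with cluster $X$.
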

\begin{proof}
Let $\Gamma$ be as stated.
By Lemma~\ref{lm:ineqs}, $|\mathfrak M_N \Delta \mathfrak M_{N'}| \leq \dist^-(N,N')$ holds for all
DAGs $N,N' \in \Gamma$. Hence, it remains to show that $\dist^-(N,N') \leq |\mathfrak M_N \Delta \mathfrak M_{N'}|$
holds for all $N,N'\in \Gamma$.

Let $N,N'\in \Gamma$. Note that the clusters $X$ and $\{x\}$, for all $x\in X$, occur at least once
in $\mathfrak M_N$ and in $\mathfrak M_{N'}$. They may, however, occur with multiplicity greater
than one. The root of $N$ contributes one occurrence of the cluster $X$ to $\mathfrak M_N$, while
each leaf $x\in X$ contributes one occurrence of the cluster $\{x\}$ to $\mathfrak M_N$; the same
holds for $N'$. For all cluster $C\in \mathfrak M_N\cap\mathfrak M_{N'}$, let $k_C$ denote the
multiplicity of $C$ in $\mathfrak M_N\cap\mathfrak M_{N'}$. We choose $k_C$
distinct vertices $v$ of $N$ with $\CC_N(v)=C$, and $k_C$ distinct vertices $v'$ of $N'$ with
$\CC_{N'}(v')=C$. In view of the preceding observation, this can be
done in such a way that the root and all leaves of $N$, (\emph{resp.} of $N'$) are
chosen. Let $W$ be the set of vertices of $N$ that were
not chosen and define $W'$ analogously for $N'$.  Since the roots and all leaves of $N$ and $N'$ were chosen, we have
$W\subseteq V^0(N)$ and $W'\subseteq V^0(N')$. Moreover, by construction, the chosen vertices in
$N$ and in $N'$ represent precisely the cluster occurrences in $\mathfrak M_N\cap\mathfrak M_{N'}$.
Hence the vertices in $W$ and $W'$ represent precisely the cluster occurrences in $\mathfrak
M_N\setminus\mathfrak M_{N'}$ and $\mathfrak M_{N'}\setminus\mathfrak M_N$, respectively. Hence, 
$|V(N) \setminus W|=|V(N') \setminus W'|=|\mathfrak M_N\cap\mathfrak M_{N'}|$ and,
therefore $|W|+|W'|=|\mathfrak M_N \Delta \mathfrak M_{N'}| $.

 Now, consider the networks $(N
\ominus W)^-$ and $(N' \ominus W')^-$. Since $\Gamma$ is $\ominus$-shortcut-closed and since $N,N' \in \Gamma$, we
can conclude that $(N \ominus W)^- \in \Gamma$ and $(N' \ominus W')^- \in \Gamma$.
By construction of $W$, the multiset $\{\CC_N(v)\mid v\in V(N)\setminus W\}$ is precisely
$\mathfrak M_N\cap\mathfrak M_{N'}$. Since $V(N\ominus W)=V(N)\setminus W$ and $\ominus$ preserves
the clusters of all remaining vertices (cf.\ Lemma~\ref{lem:network-ominus}), it follows that
$\mathfrak M_{N\ominus W}=\mathfrak M_N\cap\mathfrak M_{N'}$. By the same argument, 
$\mathfrak M_{N'\ominus W'}=\mathfrak M_N\cap\mathfrak M_{N'}$.

Since $\ominus$ and shortcut
removal preserve the clusters of all remaining vertices (cf.\ Lemma~\ref{lem:properties-SF-G_NEW}
and Lemma~\ref{lem:network-ominus}), both $(N\ominus W)^-$ and $(N'\ominus W')^-$ have
cluster multisets $\mathfrak M_N\cap\mathfrak M_{N'}$, i.e., $\mathfrak M_{(N\ominus W)^-} =
\mathfrak M_N\cap \mathfrak M_{N'} = \mathfrak M_{(N'\ominus W')^-}$.  Since
$\Gamma$ is 
multicluster-shortcut-encoded
and since $(N \ominus W)^-$ and
$(N' \ominus W')^-$ belong to $\Gamma$, it follows that $\bigl((N\ominus W)^-\bigr)^- \simeq
\bigl((N'\ominus W')^-\bigr)^-$. Since both networks $(N\ominus W)^-$ and $(N'\ominus W')^-$ are
already shortcut-free we can conclude that $(N \ominus W)^-\simeq (N' \ominus W')^-$. Hence,
$\dist^-(N,N') \leq |W|+|W'|$. Since $|W|+|W'|=|\mathfrak M_N \Delta \mathfrak M_{N'}|$, we can
conclude that $\dist^-(N,N') \leq |\mathfrak M_N \Delta \mathfrak M_{N'}|$.

We now show that $\dist^-(N,N')$ can be computed in polynomial time in $|V(N)|+|V(N')|$. To
this end, it suffices to show that $|\mathfrak M_N\Delta\mathfrak M_{N'}|$ can be computed in
polynomial time in the size of the input networks. Indeed, one can compute all clusters of $N$ in a
single bottom-up traversal of the DAG: starting at the leaves, assign the cluster $\{x\}$ to each
leaf $x\in X$, and then, in post order, assign to every inner vertex $v$ the union of the clusters
of its children. 
This constructs $\mathfrak M_N$ in polynomial time. The same procedure constructs $\mathfrak M_{N'}$. 
Finally, the multiset symmetric
difference of $\mathfrak M_N$ and $\mathfrak M_{N'}$ can be computed by sorting the obtained cluster
representations. Hence $|\mathfrak M_N\Delta\mathfrak M_{N'}|$, and therefore $\dist^-(N,N')$
can be computed in polynomial time. 
Now, for every cluster $C$, we can compare its multiplicities in $\mathfrak M_N$ and $\mathfrak
M_{N'}$. If $m_N(C)>m_{N'}(C)$, we choose $m_N(C)-m_{N'}(C)$ vertices $v\in V(N)$ with $\CC_N(v)=C$
and put them into $W$; if $m_{N'}(C)>m_N(C)$, we choose $m_{N'}(C)-m_N(C)$ vertices $v'\in V(N')$
with $\CC_{N'}(v')=C$ and put them into $W'$. As above, the occurrences of $X$ contributed by the
roots and the occurrences of ${x}$ contributed by the leaves are kept in both networks. Therefore
the chosen sets satisfy $W\subseteq V^0(N)$ and $W'\subseteq V^0(N')$. Thus the construction of $W$
and $W'$ described above can be carried out in polynomial time. Since $|W|+|W'|=|\mathfrak
M_N\Delta\mathfrak M_{N'}|=\dist^-(N,N')$ and, as argued above, $(N \ominus W)^-\simeq (N' \ominus W')^-$, 
optimal deletion sets $W,W'$ can be computed in polynomial time.
\end{proof}

We next record several classes to which Proposition~\ref{prop:symdif} applies, see Figure~\ref{fig:trees-regular} for
an example.
The first class, namely networks satisfying (PCC), will be particularly
useful, since every subclass of (PCC) inherits the equality 
$\dist^-(N,N')=|\mathfrak M_N\Delta \mathfrak M_{N'}|$
obtained in Proposition~\ref{prop:symdif}.

\begin{proposition}\label{prop:Gamma-examples}
The following classes of networks are $\ominus$-shortcut-closed and multicluster-shortcut-encoded. 
\begin{enumerate}[label=\emph{(\roman*)},noitemsep]
    \item The class of networks  on $X$ satisfying (PCC).
    \item The class of semi-regular networks  on $X$.
    \item The class of regular networks  on $X$. 
\end{enumerate}
\end{proposition}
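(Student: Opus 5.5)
The plan is to prove (i) directly and then obtain (ii) and (iii) as consequences, using the fact that both properties pass to subclasses when the additional defining condition is preserved by $N\mapsto (N\ominus W)^-$.

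\emph{(i), $\ominus$-shortcut-closedness.} Let $N$ satisfy (PCC) and let $W\subseteq V^0(N)$. Lemma~\ref{lem:network-ominus} together with Lemma~\ref{lem:properties-SF-G_NEW} shows that $(N\ominus W)^-$ is a network on $X$ with root $\rho$. They also show that it has the same ancestor relation and the same clusters as $N$ on $V(N)\setminus W$. Since (PCC) is phrased purely in terms of clusters and $\preceq$-comparability, it is inherited immediately.

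\emph{(i), multicluster-shortcut-encodedness.} This is the main step. I first show that in a (PCC) network $N$ the order $\preceq_N$ is determined by clusters together with an order inside each cluster class. If $u\preceq_N v$, then $\CC_N(u)\subseteq\CC_N(v)$. Conversely, if $\CC_N(u)\subsetneq\CC_N(v)$, then (PCC) gives comparability, and $v\preceq_N u$ is impossible, so $u\prec_N v$. If $\CC_N(u)=\CC_N(v)$, then (PCC) makes $u,v$ comparable, so each class $\{v:\CC_N(v)=C\}$ is a chain $v_1\prec_N\dots\prec_N v_k$. The leaf $x$ is the bottom element of the chain for $\{x\}$, since it has no children, and the root is the top of the chain for $X$.

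Now let $\mathfrak M_N=\mathfrak M_{N'}$. For each cluster $C$, the chains in $N$ and in $N'$ have equal length $k_C$. I define $\varphi\colon V(N)\to V(N')$ by sending the $i$-th element of the $C$-chain in $N$ to the $i$-th element of the $C$-chain in $N'$. By the characterisation above, $u\preceq_N v$ holds if and only if either $\CC(u)\subsetneq \CC(v)$, or $\CC(u)=\CC(v)$ and the rank of $u$ is at most that of $v$. This description is the same on both sides, so $\varphi$ is ancestor-preserving. It also fixes every leaf, because each leaf is rank $1$ in its singleton chain. By Lemma~\ref{lem:properties-SF-G_NEW}, $\varphi$ is then ancestor-preserving between the shortcut-free networks $N^-$ and $(N')^-$. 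Lemma~\ref{lm:isom} then makes $\varphi$ a graph isomorphism that fixes $X$, so $N^-\simeq (N')^-$. The delicate point, and the one I expect to need the most care, is the claim that strict cluster inclusion forces $u\prec_N v$ rather than just comparability, together with the ordering within chains.

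\emph{(ii).} A semi-regular network is shortcut-free with (PCC). For $W\subseteq V^0(N)$, the network $(N\ominus W)^-$ is shortcut-free by construction and satisfies (PCC) by (i), so the class is closed. It is multicluster-shortcut-encoded because it is a subclass of the class in (i).

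\emph{(iii).} Let $N$ be regular and $W\subseteq V^0(N)$. Then $N$ is distinct-cluster and $u\preceq_N v$ holds if and only if $\CC_N(u)\subseteq\CC_N(v)$, since $N\approx\Hasse(\mathfrak C_N)$. Both facts persist in $(N\ominus W)^-$, because clusters and ancestry are preserved there. In a shortcut-free DAG the arcs are exactly the cover relations of $\preceq$ (as observed in the proof of Lemma~\ref{lm:sfsf}). Hence $(N\ominus W)^-$ has an arc $u\to v$ exactly when $\CC(v)\subsetneq\CC(u)$ with no cluster of the remaining vertices strictly in between. Therefore $(N\ominus W)^-\approx\Hasse(\mathfrak C_{(N\ominus W)^-})$ via $v\mapsto\CC(v)$, which means $(N\ominus W)^-$ is regular. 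For encodedness, regular networks are distinct-cluster, so $\mathfrak M_N=\mathfrak M_{N'}$ gives $\mathfrak C_N=\mathfrak C_{N'}$. The remark after Definition~\ref{def:regular-N} then yields $N\simeq N'$, and since regular networks are shortcut-free, $N^-=N\simeq N'=(N')^-$.
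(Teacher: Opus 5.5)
Your proof is correct. For the closedness parts of (i) and (ii), and for encodedness in (ii) and (iii), it matches the paper. It differs from the paper in the two substantive steps.

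For encodedness in (i), the paper notes that $N^-$ and $(N')^-$ are shortcut-free and satisfy (PCC), hence are semi-regular. It then cites \cite[Thm~5]{Hellmuth2023}: a semi-regular network is determined up to isomorphism by its cluster multiset. You instead prove this directly. You show that (PCC) forces $\preceq_N$ to be ``strict cluster inclusion, or equal cluster and lower rank within that cluster's chain.'' Matching chains by rank then gives an explicit leaf-fixing ancestor-preserving bijection, which Lemma~\ref{lm:isom} upgrades to an isomorphism of the shortcut-free versions.

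For closedness in (iii), the paper uses the characterization \cite[Thm~2]{Hellmuth2023}: regular networks are exactly the semi-regular networks without outdegree-one vertices. It then shows that an outdegree-one vertex in $(N\ominus W)^-$ would give two vertices of $N$ with the same cluster. You avoid that characterization and verify $(N\ominus W)^-\approx\Hasse(\mathfrak C_{(N\ominus W)^-})$ directly. This uses two facts: arcs of a shortcut-free DAG are exactly the cover relations, and in a regular network ancestry coincides with cluster inclusion.

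The paper's route is shorter because it leans on these two external theorems. Yours is self-contained, using only lemmas proved in the paper. It also exhibits the isomorphism explicitly and makes visible why (PCC) lets the cluster multiset encode the ancestor order.
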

\begin{proof}
We first consider the class $\Gamma$ of networks on $X$ satisfying (PCC). Let $N\in \Gamma$, and let $W\subseteq V^0(N)$. By
Lemma~\ref{lem:properties-SF-G_NEW}  and Lemma~\ref{lem:network-ominus}, the $\ominus$-operation
as well as shortcut-removal preserves both clusters and the
ancestor relation among the vertices that remain. Hence, 
if $C_{(N \ominus W)^-}(u) \subseteq C_{(N \ominus W)^-}(v)$, then $C_N(u) \subseteq C_N(v)$, which together with the fact that $N$ satisfies (PCC) 
implies that $u$ and $v$ are $\preceq_N$-comparable, and therefore $\preceq_{(N \ominus W)^-}$-comparable as well.  We deduce that  $(N\ominus W)^-$
satisfies (PCC) and it follows that $(N\ominus W)^-\in \Gamma$.
Hence, $\Gamma$ is $\ominus$-shortcut-closed. 
It remains to show that $\Gamma$ is multicluster-shortcut-encoded. Let $N,N'\in  \Gamma$ and
assume that $\mathfrak M_N=\mathfrak M_{N'}$. 
Since by Lemma~\ref{lem:properties-SF-G_NEW}, shotcut-removal preserves ancestors
relationsship among the vertices and thus preserves the clusters, we can conclude that 
$\mathfrak M_{N^-}=\mathfrak M_N=\mathfrak M_{N'}=\mathfrak M_{(N')^-}$.  
Moreover, $N^-$ and $(N')^-$ are shortcut-free and satisfy (PCC), hence
they are semi-regular. By \cite[Thm~5]{Hellmuth2023}, a semi-regular
network is uniquely determined, up to isomorphism, by its cluster
multiset. Consequently, $N^-\simeq (N')^-$ 
and we can conclude that $\Gamma$ is  multicluster-shortcut-encoded.

The assertion for semi-regular networks follows immediately from the preceding paragraph. Indeed,
semi-regular networks are precisely the shortcut-free networks satisfying (PCC). If $N$ is
semi-regular and $W\subseteq V^0(N)$, then the preceding paragraph shows that $(N\ominus W)^-$
satisfies (PCC); by construction, it is shortcut-free. Thus $(N\ominus W)^-$ is semi-regular. Hence
the class of semi-regular networks is $\ominus$-shortcut-closed. By the same arguments 
used in the previous paragraph, the class of semi-regular networks
is  multicluster-shortcut-encoded.

Finally, consider the class $\Gamma$ of regular networks on $X$. By \cite[Thm.~2]{Hellmuth2023}, a network is
regular if and only if it is semi-regular and has no vertex of outdegree one. Let $N\in \Gamma$ and
let $W\subseteq V^0(N)$. Since regular networks are semi-regular, the semi-regular case implies that
$(N\ominus W)^-$ is semi-regular. It remains to show that $(N\ominus W)^-$ has no vertex of
outdegree one. Suppose, for contradiction, that $(N\ominus W)^-$ has a vertex $u$ with outdegree
one, and let $v$ be its unique child. Then $ \CC_{(N\ominus W)^-}(u) = \CC_{(N\ominus W)^-}(v)$.
Since $u$ and $v$ are vertices of $(N\ominus W)^-$, they are also vertices of $N$. Moreover,
$\ominus$ and shortcut removal preserve, by Lemma~\ref{lem:properties-SF-G_NEW} and
Lemma~\ref{lem:network-ominus}, clusters of the vertices that remain. Hence $ \CC_N(u) =
\CC_{(N\ominus W)^-}(u) = \CC_{(N\ominus W)^-}(v) = \CC_N(v)$. But $u\neq v$, since $(u,v)$ is an
arc of $(N\ominus W)^-$. Thus, there are distinct vertices of $N$ having the same clusters; a
contradiction to $N$ being regular. Therefore $(N\ominus W)^-$ has no vertex of outdegree one. By
\cite[Thm.~2]{Hellmuth2023}, it is regular. Thus $\Gamma$ is $\ominus$-shortcut-closed. It remains
to show that $\Gamma$ is multicluster-shortcut-encoded. Let $N,N'$ be regular and assume that
$\mathfrak M_N=\mathfrak M_{N'}$. In a regular network, distinct vertices have distinct clusters.
Hence $\mathfrak M_N=\mathfrak C_N$ and $\mathfrak M_{N'}=\mathfrak C_{N'}$, when cluster sets are
regarded as multisets. Therefore $\mathfrak C_N=\mathfrak C_{N'}$. Since regular networks are
uniquely determined by their clustering systems, it follows that $N\simeq N'$. Regular networks are
shortcut-free, and hence $N^- =N \simeq N' =(N')^-$. Thus, $\Gamma$ is multicluster-shortcut-encoded. 	
\end{proof}

\begin{theorem}\label{thm:dist-minus-easy}
Let $\Gamma$ be either of the following classes of networks with the same leaf set $X$:
\begin{enumerate}[label=\emph{(\roman*)},noitemsep]
	\item The class of networks satisfying (PCC)
		\item The class of rooted phylogenetic trees.
		\item The class of phylogenetic level-1 networks
		\item The class of binary level-1 networks. 
		\item The class of tree-child networks.
		\item The class of normal networks.
		\item The class of semi-regular networks;
    \item The class of regular networks.
\end{enumerate}
Then, for all $N,N'\in \Gamma$ it holds that 
\[\dist^-(N,N')  = |\mathfrak M_N \Delta \mathfrak M_{N'}|.\] 
In particular, for all $N,N'\in \Gamma$, 
$\dist^-(N,N')$ and sets $W\subseteq V^0(N)$ and $W'\subseteq V^0(N')$ with
$(N\ominus W)^-\simeq (N'\ominus W')^-$
and $|W|+|W'|=\dist^-(N,N')$ 
can be computed in polynomial time in the size of the networks $N$ and $N'$.

\end{theorem}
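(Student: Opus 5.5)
The plan is to reduce every item to Proposition~\ref{prop:symdif}, via Proposition~\ref{prop:Gamma-examples}. Items (i), (vii) and (viii) are exactly the classes covered by Proposition~\ref{prop:Gamma-examples}, so for them Proposition~\ref{prop:symdif} gives the equality $\dist^-(N,N')=|\mathfrak M_N\Delta\mathfrak M_{N'}|$. It also gives the polynomial-time computation of the distance and of optimal deletion sets $W,W'$.

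For items (ii)--(vi), I will not show that these classes are themselves $\ominus$-shortcut-closed; in general they are not, since removing a vertex from a tree-child network can destroy the tree-child property. Instead, I will show that each of these classes is contained in the class of networks satisfying (PCC). This suffices because Proposition~\ref{prop:symdif} applied to the (PCC) class gives the identity and the algorithm for \emph{all} pairs of (PCC)-networks on $X$, hence in particular for pairs from any subclass. The algorithm produces the sets $W\subseteq V^0(N)$ and $W'\subseteq V^0(N')$ with $(N\ominus W)^-\simeq(N'\ominus W')^-$. It does not matter that the reduced networks may leave the subclass: they remain (PCC), which is all that was used.

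The containments are as follows.
\begin{itemize}
\item \emph{Trees (ii).} If $\CC_N(v)\subseteq\CC_N(u)$ with both clusters non-empty, pick a leaf $x\in\CC_N(v)$. Then $u$ and $v$ both lie on the unique root-to-$x$ path, so they are comparable.
\item \emph{Tree-child (v) and normal (vi).} Normal networks are tree-child by definition, so it suffices to treat tree-child networks. Every vertex $v$ of a tree-child network has a tree-path down to some leaf $x_v$: follow tree-vertex children, so that every vertex on the path except possibly $v$ has indegree one. Hence any ancestor $u$ of $x_v$ must be an ancestor of $v$ or lie on this path below $v$. If $\CC_N(v)\subseteq\CC_N(u)$, then $x_v\in\CC_N(u)$, so $u$ is an ancestor of $x_v$ and therefore comparable with $v$.
\item \emph{Level-1 (iii) and (iv).} Here I would cite the known result from \cite{Hellmuth2023} that phylogenetic level-1 networks satisfy (PCC). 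Binary level-1 networks are phylogenetic (binary tree vertices have outdegree $2$, reticulations have indegree $2$), so (iv) follows from (iii).
\end{itemize}

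The main obstacle is the level-1 case if the citation is not available in the needed form. A direct proof would go through the blocks. Suppose $\CC_N(v)\subseteq\CC_N(u)$ with $u,v$ incomparable. Choose a leaf $x\in\CC_N(v)$, and consider the paths from $u$ and from $v$ to $x$. Because the network is phylogenetic, one can choose $x$ so that the path from $v$ avoids the descendants of $u$ until it merges. The two paths must then merge at a reticulation, which forces $u$ and $v$ to lie on the two sides of a single cycle of a level-1 block. The phylogenetic condition (no outdegree-$1$ tree vertices) then supplies a leaf below $v$ that is not below $u$, which is a contradiction. I expect the careful handling of this last step to be the delicate point; everything else is direct invocation of the earlier propositions.
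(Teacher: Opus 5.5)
Your proposal is correct and follows the paper's own proof. Items (i), (vii), (viii) come from Proposition~\ref{prop:Gamma-examples} combined with Proposition~\ref{prop:symdif}, and items (ii)--(vi) are handled as subclasses of the (PCC) class. The only difference is that the paper cites \cite{Hellmuth2023} for all of these containments, whereas you give correct direct arguments for trees and for tree-child/normal networks.

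One harmless slip: your aside that $\ominus$ can destroy the tree-child property is false. If $v\in V^0(N)$ is the tree child of $w$, then any tree child of $v$ becomes a tree child of $w$ in $N\ominus v$, and shortcuts never end in tree vertices, so tree-child and normal networks are in fact $\ominus$-shortcut-closed; it is the level-1 classes that fail this. Nothing in your argument depends on the aside, so the proof stands.
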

\begin{proof}
By Proposition~\ref{prop:Gamma-examples}, the class of networks satisfying (PCC), the class of
semi-regular networks and of the class regular networks are all $\ominus$-shortcut-closed and
multicluster-shortcut-encoded. This together with Proposition~\ref{prop:symdif} implies that, for all
networks $N$ and $N'$ in these classes it holds that $\dist^-(N,N') = |\mathfrak M_N \Delta
\mathfrak M_{N'}|$. 

For the remaining classes, the assertion follows because they are subclasses of the class of
networks satisfying (PCC) \cite{Hellmuth2023}. 
Thus, if $N,N'\in\Gamma$ for any of the classes \emph{(ii)--(vi)}, then
$N$ and $N'$ satisfy (PCC), and the equality follows from the case \emph{(i)}.

The last statement, concerning polynomial-time constructions follows from Proposition~\ref{prop:symdif}.
\end{proof}

\begin{figure}
	\centering
	\includegraphics[width = 0.9\textwidth]{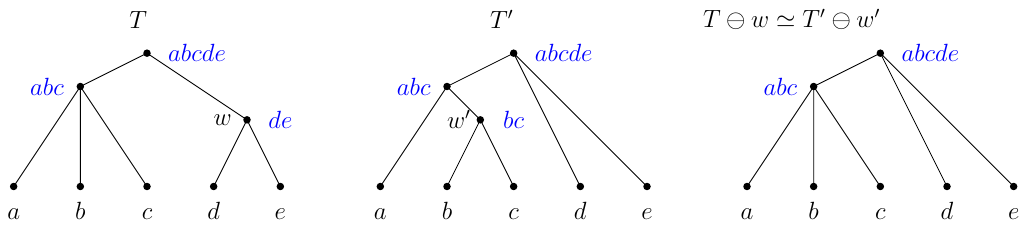}
	\caption{A phylogenetic tree $T$ and $T'$ together with the phylogenetic
	tree $T\ominus w \simeq T\ominus w'$. 
	 Both $T$ and $T'$ are regular. 
	The cluster of non-leaf vertices are indicated next to the respective vertex. 
	Following the construction used in Proposition~\ref{prop:symdif}, 
	the vertices corresponding to the clusters in
	$\mathfrak M_N \Delta \mathfrak M_{N'} = \{\{bc\},\{de\}\}$ are 
	$w\in V^0(T)$ and 	$w'\in V^0(T)$. In particular, 
	$\dist^-(T,T') = |\{w\} + |\{w'\}|=2$ which is in accordance
	with Theorem~\ref{thm:dist-minus-easy}, that states that 
	$\dist^-(T,T')  = |\mathfrak M_N \Delta \mathfrak M_{N'}|$. 
	Corollary~\ref{cor:dist-RF} implies 
	$\dist^-(T,T')  = |\mathfrak C_N \Delta \mathfrak C_{N'}|$. 
    By Corollary~\ref{cor:dist-RF-tree} even	$\dist^-(T,T')=	\dist(T,T')$ holds. }
	
\label{fig:trees-regular}
\end{figure}

As a direct consequence of Theorem~\ref{thm:dist-minus-easy} and since 
two networks $N$ and $N'$ satisfy $N^-\simeq (N')^-$ precisely if $\dist^-(N,N')=0$, 
we obtain
\begin{theorem}\label{thm:sf-iso-easy}
	For every class $\Gamma$ listed in Theorem~\ref{thm:dist-minus-easy} and all $N,N'\in\Gamma$, 
	it can be decided in polynomial time in the size of $N$ and $N'$ whether 
	their shortcut-free versions are isomorphic, that is, if  $N^-\simeq (N')^-$ . 
\end{theorem}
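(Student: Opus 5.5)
The plan is a direct reduction to Theorem~\ref{thm:dist-minus-easy} via the characterization of distance zero in Theorem~\ref{thm:dist-minus-metric}. Let $\Gamma$ be any of the classes listed in Theorem~\ref{thm:dist-minus-easy} and let $N,N'\in\Gamma$. Item (2) of Theorem~\ref{thm:dist-minus-metric} states that $\dist^-(N,N')=0$ holds if and only if $N\simeq^- N'$, that is, if and only if $N^-\simeq (N')^-$. Deciding whether the shortcut-free versions are isomorphic is therefore equivalent to deciding whether $\dist^-(N,N')=0$.

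Next, Theorem~\ref{thm:dist-minus-easy} gives $\dist^-(N,N')=|\mathfrak M_N\Delta\mathfrak M_{N'}|$ for all $N,N'\in\Gamma$. Thus the decision procedure is as follows. First compute the cluster multisets $\mathfrak M_N$ and $\mathfrak M_{N'}$ by the bottom-up traversal described in the proof of Proposition~\ref{prop:symdif}. Then compare them, for instance by sorting the cluster representations, and answer ``yes'' precisely when $\mathfrak M_N=\mathfrak M_{N'}$. Each step is polynomial in $|V(N)|+|V(N')|$, since every cluster is a subset of $X$ and the number of clusters is bounded by the number of vertices.

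The only point needing care is that the input networks must actually lie in $\Gamma$; the procedure is not claimed to decide $N^-\simeq(N')^-$ for arbitrary networks. For the classes (ii)--(vi), correctness goes through the fact, used in the proof of Theorem~\ref{thm:dist-minus-easy}, that they are subclasses of (PCC). Hence no separate verification is needed, and no real obstacle is expected.

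If an explicit isomorphism is also desired, it can be obtained constructively. In the case $\dist^-(N,N')=0$, the optimal deletion sets produced in Proposition~\ref{prop:symdif} are empty. The isomorphism then comes from the semi-regular uniqueness result \cite[Thm~5]{Hellmuth2023}. This extra step is not required for the decision version.
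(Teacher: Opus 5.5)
Your proposal is correct and follows the paper's argument: the paper obtains this result directly from Theorem~\ref{thm:dist-minus-easy} together with the equivalence $\dist^-(N,N')=0 \iff N^-\simeq (N')^-$, which for these classes amounts to testing $\mathfrak M_N=\mathfrak M_{N'}$. Your final paragraph on constructing an explicit isomorphism goes beyond what the statement requires and is not needed for the decision question.
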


A further simple consequence of Theorem~\ref{thm:dist-minus-easy} is as follows. 
\begin{corollary}\label{cor:dist-RF}
Let $\Gamma$ be either of the following classes:
\begin{itemize}[noitemsep]
	\item The class of regular networks.
	\item The class of phylogenetic shortcut-free level-1 networks without retriculation vertices having out-degree one.
	\item The class of phylogenetic normal networks without retriculation vertices having out-degree one.
	\item The class of phylogenetic trees.
\end{itemize}
Then, $\dist^-(N,N')  = d_{RF}(N,N')$ for all $N,N'\in \Gamma$. 
\end{corollary}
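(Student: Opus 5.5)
Each listed class $\Gamma$ is contained in one of the classes of Theorem~\ref{thm:dist-minus-easy}: regular networks are case (viii), phylogenetic trees case (ii), phylogenetic shortcut-free level-1 networks are level-1 (case (iii)), and phylogenetic normal networks are normal (case (vi)). Hence $\dist^-(N,N')=|\mathfrak M_N\Delta\mathfrak M_{N'}|$ for all $N,N'\in\Gamma$. So it suffices to show that every network in $\Gamma$ is distinct-cluster. Then $\mathfrak M_N$ and $\mathfrak M_{N'}$ are sets, namely $\mathfrak C_N$ and $\mathfrak C_{N'}$, and therefore $|\mathfrak M_N\Delta\mathfrak M_{N'}|=|\mathfrak C_N\Delta\mathfrak C_{N'}|=d_{RF}(N,N')$.

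For regular networks, distinct-cluster holds by definition, since $v\mapsto\CC_N(v)$ is a bijection onto $V(\Hasse(\mathfrak C_N))$. For the other three classes the plan is the following. The networks are shortcut-free (trees trivially) and satisfy (PCC) by \cite{Hellmuth2023}, so they are semi-regular. By \cite[Thm~2]{Hellmuth2023}, a semi-regular network with no vertex of outdegree one is regular, hence distinct-cluster. It therefore remains to check that no vertex has outdegree one. A phylogenetic network has no tree-vertex of outdegree one, by the definition of phylogenetic. The additional hypothesis excludes reticulation vertices of outdegree one. Trees contain no reticulations at all. So no vertex has outdegree one, and each such network is regular.

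Should one prefer to avoid invoking the characterization theorem, the step I would check directly is the following. Suppose $u\neq v$ have $\CC_N(u)=\CC_N(v)$. By (PCC) they are comparable, say $v\prec_N u$. Take a child $w$ of $u$ on a $uv$-path. Since $u$ has outdegree at least two, it has another child $w'$. By (PCC), $w'$ is comparable with $v$, because $\CC_N(w')\subseteq \CC_N(u)=\CC_N(v)$. Shortcut-freeness rules out $v\preceq w'$ combined with the path via $w$ forming a shortcut. The remaining case $w'\prec v$ would give $\CC(w')\subseteq\CC(v)$ compatibly; one then iterates down the chain, which is finite, to derive a contradiction. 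This direct argument is the main obstacle, and I would rely on \cite[Thm~2]{Hellmuth2023} instead. In conclusion, each network in $\Gamma$ is distinct-cluster, the multiset and set symmetric differences coincide, and the corollary follows from Theorem~\ref{thm:dist-minus-easy}.
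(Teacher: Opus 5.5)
Your proof is correct, but it shows regularity of the three non-regular classes by a different route than the paper. The paper handles the regular case exactly as you do: $\mathfrak M_N=\mathfrak C_N$ for distinct-cluster networks, then Theorem~\ref{thm:dist-minus-easy}. For the other classes it simply cites three separate results placing them inside the regular networks: \cite[Cor.~9]{Hellmuth2023} for phylogenetic trees, \cite[Prop.~15]{Hellmuth2023} for the level-1 class, and \cite[Thm.~4.11]{HLM:26-RegNormArxiv} for strongly normal networks.

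You instead give one uniform argument. Each class is shortcut-free and satisfies (PCC), which is the same citation the paper already uses in the proof of Theorem~\ref{thm:dist-minus-easy}, so each class is semi-regular. ``Phylogenetic'' excludes outdegree-one tree-vertices, and the extra hypothesis excludes outdegree-one reticulations. Hence \cite[Thm~2]{Hellmuth2023} gives regularity. This shows why exactly these hypotheses appear. The cost is that you rely on (PCC) for phylogenetic level-1 and tree-child networks rather than on class-specific regularity results. Once regularity is established, case (viii) alone suffices, so your initial appeal to cases (ii), (iii), (vi) is redundant on this route.

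Your fallback direct argument is wrong as written. Two children $w,w'$ of $u$ that both lie above $v$ create no shortcut; think of $u\to w\to v$ and $u\to w'\to v$. No iteration is needed if you compare $w'$ with $w$ instead of with $v$. Since $v\preceq_N w\prec_N u$, you get $\CC_N(w)=\CC_N(u)\supseteq\CC_N(w')$, so (PCC) makes $w$ and $w'$ comparable. Then one of the arcs $(u,w)$, $(u,w')$ is a shortcut, a contradiction. This one-step argument shows directly that shortcut-free (PCC) networks without outdegree-one vertices are distinct-cluster. That is all your reduction needs, so on this route you could drop \cite[Thm~2]{Hellmuth2023} entirely, and your first step via cases (ii), (iii), (vi) becomes essential.
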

\begin{proof}
We first consider the class of regular networks. Since regular networks are distinct-cluster
networks, every cluster of a regular network $N$ appears with multiplicity one in
$\mathfrak M_N$. Hence, for regular networks $N$ and $N'$, we have
$d_{RF}(N,N')=|\mathfrak C_N\Delta\mathfrak C_{N'}|=|\mathfrak M_N\Delta\mathfrak M_{N'}|$
This and Theorem~\ref{thm:dist-minus-easy} implies that 
$d_{RF}(N,N') = \dist^-(N,N')$.

It remains to observe that the other listed classes are subclasses of the class of regular
networks. Indeed, phylogenetic trees are regular by \cite[Cor.~9]{Hellmuth2023}, and
phylogenetic shortcut-free level-1 networks without reticulation vertices of out-degree one are
regular by \cite[Prop.~15]{Hellmuth2023}. Moreover,
\cite[Thm.~4.11]{HLM:26-RegNormArxiv} implies that phylogenetic normal networks without
reticulation vertices of out-degree one, called strongly normal networks in
\cite{HLM:26-RegNormArxiv}, are regular. Thus the assertion follows from the regular-network case.
\end{proof}

\begin{corollary}\label{cor:dist-RF-tree}
Let $\Gamma$ be the class 
of phylogenetic trees.
Then,
$
\dist(T,T') = \dist^-(T,T') = d_{RF}(T,T')
$
for all $T,T'\in \Gamma$. 
\end{corollary}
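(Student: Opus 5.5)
By Lemma~\ref{lm:ineqs} and Corollary~\ref{cor:dist-RF}, we already have $d_{RF}(T,T')=\dist^-(T,T')\leq \dist(T,T')$ for phylogenetic trees $T,T'$ on $X$. It therefore suffices to prove $\dist(T,T')\leq |\mathfrak C_T\Delta\mathfrak C_{T'}|$. To do this, I will exhibit deletion sets of that size whose $\ominus$-reductions are isomorphic as networks, and not merely after removing shortcuts.

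\emph{Choice of deletion sets.} Phylogenetic trees are regular by \cite[Cor.~9]{Hellmuth2023}, hence distinct-cluster. Take $W=D_T$ and $W'=D_{T'}$ as in Lemma~\ref{lem:dc-to-dc-similar}. That lemma gives $|W|+|W'|=|\mathfrak C_T\Delta\mathfrak C_{T'}|$, and it shows that $T\ominus W$ and $T'\ominus W'$ are DC-similar with common cluster set $\mathfrak C_T\cap\mathfrak C_{T'}$. Equivalently, by Proposition~\ref{prop:dc-to-dc-similar}, it suffices to show $\dist(T\ominus D_T,T'\ominus D_{T'})=0$, i.e.\ $T\ominus D_T\simeq T'\ominus D_{T'}$.

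\emph{Key step: $\ominus$ keeps trees as trees without shortcuts.} I will show that if $T$ is a rooted tree and $v\in V^0(T)$, then $T\ominus v$ is again a rooted tree. Indeed, $v$ has a unique parent $p$. Each child $c$ of $v$ has $v$ as its unique parent in $T$, so in $T\ominus v$ it receives exactly the single new arc $p\to c$. No other in-degrees change, so every vertex keeps in-degree at most one. Moreover, $p\to c$ was not already an arc, since $c$ had only the parent $v$, so no parallel arcs arise. Combined with Lemma~\ref{lem:network-ominus}, which guarantees a unique root, $T\ominus v$ is a tree, and by induction so is $T\ominus W$. A tree has no shortcuts: a second $uw$-path would give $w$ two in-arcs or produce a cycle. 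Hence $T\ominus W=(T\ominus W)^-$, and likewise for $T'$.

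\emph{Conclusion.} The networks $T\ominus W$ and $T'\ominus W'$ are shortcut-free, distinct-cluster and have the same cluster set. The canonical bijection sending vertices with equal clusters to each other fixes all leaves, and it preserves the ancestor relation, because in a distinct-cluster tree $u\preceq v$ holds iff $\CC(u)\subseteq\CC(v)$. This equivalence is where PCC for trees is used, together with the fact that ancestors in a tree have nested clusters. Lemma~\ref{lm:isom} then shows that this bijection is a graph isomorphism, so $T\ominus W\simeq T'\ominus W'$. An alternative route is to invoke regularity: both reductions are then Hasse diagrams of the same cluster set, and the note after Definition~\ref{def:regular-N} gives the isomorphism directly.

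The main obstacle I expect is the claim that removing internal tree vertices never creates a shortcut. This is exactly what separates $\dist$ from $\dist^-$ in general networks (compare Figure~\ref{fig:TN1}), so the in-degree argument above must be stated carefully.
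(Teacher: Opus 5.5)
Your proof is correct, but it uses the key observation differently from the paper. Both arguments rest on the same fact: for every $W\subseteq V^0(T)$, $T\ominus W$ is again a rooted tree and hence shortcut-free.

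\textbf{The paper's route.} The paper applies this fact to all admissible pairs at once. Since $(T\ominus W)^-=T\ominus W$ and $(T'\ominus W')^-=T'\ominus W'$ for all $W,W'$, the feasible sets $\mathtt W(T,T')$ and $\mathtt W^-(T,T')$ coincide. This gives $\dist(T,T')=\dist^-(T,T')$ immediately, and Corollary~\ref{cor:dist-RF} supplies $\dist^-=d_{RF}$.

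\textbf{Your route.} You apply the fact only to the specific pair $(D_T,D_{T'})$. You then need an extra step to certify $T\ominus D_T\simeq T'\ominus D_{T'}$: the equivalence $u\preceq v\iff \CC(u)\subseteq\CC(v)$ in distinct-cluster trees, combined with Lemma~\ref{lm:isom}.

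\textbf{What each buys.} The paper's argument is shorter. It also yields a stronger statement: every $\dist^-$-witness is already a $\dist$-witness. Your argument is constructive and exhibits an explicit optimal pair.

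Your construction also makes the appeal to Corollary~\ref{cor:dist-RF} unnecessary. Lemma~\ref{lm:ineqs} already gives $d_{RF}\leq\dist^-\leq\dist$, and your bound $\dist\leq |\mathfrak C_T\Delta\mathfrak C_{T'}|=d_{RF}$ closes the chain. Your argument therefore does not depend on Proposition~\ref{prop:symdif} and the cited uniqueness theorem for semi-regular networks behind it. Once you had shown that all $\ominus$-reductions of trees are shortcut-free, you could also have stopped there and taken the paper's one-line conclusion.
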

\begin{proof}
By Corollary~\ref{cor:dist-RF}, we have $\dist^-(T,T') = d_{RF}(T,T')$. It remains to show that, for
phylogenetic trees, $\dist$ and $\dist^-$ coincide. Let  $T$ be a  phylogenetic tree
on $X$. For every $W\subseteq V^0(T)$, the graph $T\ominus W$ is again a phylogenetic tree on $X$, 
since deleting a vertex $v\in V^0(T)$ and reconnecting its unique parent with all children of $v$
preserves acyclicity, preserves the unique root, and does not create any vertex of indegree greater
than one. Thus no reticulation vertex can arise. Moreover, the leaf set remains $X$, since no leaf
is removed. Hence $T\ominus W$ is a rooted tree on $X$.
In particular, $T\ominus W$ contains no shortcuts since otherwise $T\ominus W$ would contain 
reticulation vertices. Hence, 
$(T\ominus W)^- = T\ominus W$ for all phylogenetic trees $T$. Hence, for phylogenetic trees
$T$ and $T'$ it holds that  $\dist(T,T') = \dist^-(T,T')$.
\end{proof}

\section{Hardness results for computing $\dist$}
\label{sec:dist-hard}

We now turn to further computational aspects. First observe that, for general
networks $N$ and $N'$, deciding whether $\dist(N,N')=0$ is precisely the
problem of deciding whether $N\simeq N'$. 
Testing whether two networks are graph isomorphic is a difficult
problem in general; in particular, Cardona et al.~\cite{cardona2014comparison}
showed that deciding isomorphism for tree-sibling time-consistent networks is
as hard as the general graph isomorphism problem.
The analogous issue also arises for $\dist^-$. 
Indeed, deciding whether
$\dist^-(N,N')=0$ is equivalent to deciding whether $N^-\simeq (N')^-$. Hence,
even for shortcut-free networks, where $N=N^-$ and $N'=(N')^-$, this contains
the network-isomorphism problem as a special case.
We are not aware of a
reference that states graph-isomorphism hardness explicitly for the restricted
class of shortcut-free networks, but the above observation shows that computing
$\dist^-$ cannot be expected to avoid isomorphism-type difficulties in general.

Nevertheless, the previous section shows that $\dist^-$ can be computed
in polynomial time for a broad range of classical network classes.
This naturally raises the question of how difficult it is to compute the
``unrelaxed'' distance $\dist$. We show below that computing $\dist$ is
NP-complete, even for a highly restricted class of tree-child networks for
which $\dist^-$ is polynomial-time computable.

To recall, 
a network $N$ is distinct-cluster if, for any $u, v \in V(N)$ it holds that $C_N(u) =
C_N(v)$ if and only if $u = v$ and two  networks $N, N'$ are DC-similar 
if they are both distinct-cluster and $\mathfrak C(N) = \mathfrak C(N')$.

For the following result we will employ the NP-complete \textsc{Set Cover} problem \cite{GareyJohnson1979}.
To recall, \textsc{Set Cover} takes as input a collection of sets $\S = \{S_1, \ldots, S_m\}$ whose elements are from a universe $U = \{u_1, \ldots, u_n\}$, 
along with an integer $k$. 
The question is whether there exists a subcollection $\S^*\subseteq \S$ with $|\S^*|\leq k$ such that every
element $u\in U$ is contained in at least one set of $\S^*$. 
The problem \textsc{Set Cover} is W[2]-hard when parameterized by the
solution size $k$ \cite{DowneyFellows2013}. Hence, unless
FPT = W[2], there is no algorithm for \textsc{Set Cover}
with running time $f(k)\cdot |I|^{O(1)}$ for any computable function $f$,
where $|I|$ denotes the input size. Moreover, W[2]-hardness is preserved
under parameterized reductions: if a problem $A$ parameterized by $k$
parameterized-reduces to a problem $B$ parameterized by $k'$, and
$k'\leq g(k)$ for some computable function $g$, then $B$ is W[2]-hard.

\begin{theorem}\label{thm:ominus-hard-similar-normal}
Let $N$ and $N'$ be DC-similar tree-child networks with canonically identified vertex sets, and let
$\delta\geq 0$ be an integer. Then the problem of deciding, given $N$, $N'$, and $\delta$, whether $
d_{\ominus}(N,N')\leq \delta $ is NP-complete and W[2]-hard when parameterized by $\delta$.
\end{theorem}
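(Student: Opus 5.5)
By Lemma~\ref{lem:simw} and Corollary~\ref{cor:dist-DC-equality}, for DC-similar networks $N,N'$ with canonically identified vertex sets, every admissible pair has the form $(W,W)$, and the isomorphism condition becomes equality of arc sets. Hence
\[\dist(N,N')=2\min\{|W|: W\subseteq V^0(N),\ N\ominus W=N'\ominus W\}.\]
We will use the following description of $N\ominus W$: the arc $(p,q)$ lies in $N\ominus W$ if and only if $N$ contains a $pq$-path all of whose internal vertices lie in $W$. This follows by induction over the definition of $\ominus$.

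Membership in NP is then immediate. A certificate is a set $W$ with $2|W|\le\delta$, and one checks $N\ominus W=N'\ominus W$ in polynomial time.

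For hardness we reduce from \textsc{Set Cover}. We may assume that every element lies in some set and that $m\ge 2$. The construction is as follows.
\begin{itemize}[noitemsep]
\item Take a root $\rho$, a vertex $s_i$ for every set $S_i$, and a vertex $e_j$ for every element $u_j$.
\item Add a private leaf $y_i$ with arc $s_i\to y_i$, and two leaves $x_j,x_j'$ with arcs $e_j\to x_j$ and $e_j\to x_j'$.
\item Add arcs $\rho\to s_i$ for all $i$, and $s_i\to e_j$ whenever $u_j\in S_i$.
\end{itemize}
Let $N'$ be this DAG, and let $N$ be $N'$ together with the additional arcs $\rho\to e_j$ for all $j$. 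These extra arcs are shortcuts in $N$, since $\rho\to s_i\to e_j$ exists.

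Both networks have the same ancestor relation, and hence the same clusters:
\begin{itemize}[noitemsep]
\item $\CC(e_j)=\{x_j,x_j'\}$;
\item $\CC(s_i)=\{y_i\}\cup\bigcup_{u_j\in S_i}\{x_j,x_j'\}$;
\item $\CC(\rho)=X$.
\end{itemize}
These clusters are pairwise distinct: $y_i$ is private to $s_i$, each $e_j$ has two leaves, and $m\ge2$ separates $\CC(s_i)$ from $X$. So $N$ and $N'$ are DC-similar.

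Both networks are also tree-child:
\begin{itemize}[noitemsep]
\item $\rho$ has the tree-vertex child $s_1$;
\item each $s_i$ has the leaf child $y_i$;
\item each $e_j$ has leaf children.
\end{itemize}
The construction is clearly polynomial.

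The key step is the following claim: for $W\subseteq V^0=\{s_i\}\cup\{e_j\}$, we have $N\ominus W=N'\ominus W$ if and only if every $j$ satisfies $e_j\in W$ or some $s_i\in W$ has $u_j\in S_i$. To prove it, use the path characterization. Paths in $N$ and $N'$ differ only by the arcs $\rho\to e_j$. These arcs can only contribute to arcs of $N\ominus W$ that start at $\rho$ and either end at $e_j$ or, when $e_j\in W$, end at the children $x_j,x_j'$. The children $x_j,x_j'$ are reached in $N'\ominus W$ as well, via $\rho\to s_i\to e_j$ with $s_i,e_j$ internal, or they are already present.

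The only genuinely differing arc is $(\rho,e_j)$ when $e_j\notin W$. It is always present in $N\ominus W$, and it is present in $N'\ominus W$ exactly when some $s_i\in W$ contains $u_j$. This case check is the main obstacle, in particular making sure that deleting $e_j$ itself creates no asymmetric arcs.

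Given the claim, any feasible $W$ can be converted into a set cover of size at most $|W|$: replace each $e_j\in W$ by some set containing $u_j$. Conversely, a cover $\S^*$ yields the feasible set $W=\{s_i: S_i\in\S^*\}$. Hence $\dist(N,N')\le 2k$ if and only if there is a set cover of size at most $k$. With $\delta=2k$ this is a polynomial parameterized reduction, which gives NP-hardness and W[2]-hardness.
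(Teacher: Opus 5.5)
Your overall strategy is the paper's: a \textsc{Set Cover} reduction where the second network adds root-to-element shortcut arcs, and Lemma~\ref{lem:simw} forces the same deletion set on both sides. However, your key claim is false as stated.

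Its ``if'' direction fails when $e_j\in W$ and no $s_i\in W$ contains $u_j$. In $N$, the path $\rho\to e_j\to x_j$ has its only internal vertex in $W$, so $(\rho,x_j)$ and $(\rho,x_j')$ are arcs of $N\ominus W$. In $N'$, every $\rho x_j$-path has the form $\rho\to s_i\to e_j\to x_j$ with $u_j\in S_i$, and it yields an arc only if $s_i\in W$. For instance, $W=\{e_1,\dots,e_n\}$ satisfies your condition, yet $N\ominus W\neq N'\ominus W$. Deleting $e_j$ therefore does not repair the disagreement; it only moves it from $(\rho,e_j)$ to $(\rho,x_j)$ and $(\rho,x_j')$. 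Your sentence ``or they are already present'' is wrong. The correct statement is: $N\ominus W=N'\ominus W$ if and only if $\{S_i : s_i\in W\}$ covers $U$.

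The reduction nevertheless survives. Your argument via the arc $(\rho,e_j)$ correctly proves the ``only if'' direction, and that is all you need to turn a feasible $W$ into a cover of size at most $|W|$. You use the ``if'' direction only for $W\subseteq\{s_1,\dots,s_m\}$, where no $e_j$ is deleted and your case analysis is right. With the corrected claim, the replacement step is unnecessary, since $W\cap\{s_1,\dots,s_m\}$ is already a cover.

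The paper avoids this case entirely by making the elements of $U$ leaves, which may be reticulations in a tree-child network. Then $V^0(N)=\mathcal S$, and the only differing arcs are the $(r,u)$. Your extra layer $e_j$, with its two-leaf gadget to keep the networks distinct-cluster, gains nothing and is exactly what produced the faulty case.

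One minor point affects both constructions: they tacitly need nonempty sets. If $S_i=\emptyset$, then $\CC(s_i)=\{y_i\}$ coincides with the cluster of the leaf $y_i$, so discard empty sets beforehand.
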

\begin{proof}
Let $N$ and $N'$ be DC-similar tree-child networks with canonically identified vertex sets, and let
$\delta\geq 0$ be an integer.

Membership in NP is immediate. A certificate is a set
$W\subseteq V^0(N)=V^0(N')$ such that $|W|+|W|\leq \delta$, that is,
$2|W|\leq \delta$. Given such a set $W$, we can construct
$N\ominus W$ and $N'\ominus W$ in polynomial time. Since $N$ and $N'$ have
canonically identified vertex sets, it remains only to check whether the
identity map on $V(N)\setminus W$ is a leaf-fixing isomorphism between
$N\ominus W$ and $N'\ominus W$. This can be verified in polynomial time.

For the NP-hardness proof, we reduce from \textsc{Set Cover}. Let $(\S,U,k)$ be an instance of
\textsc{Set Cover}. We may assume without loss of generality that $U\neq\emptyset$ and that $
\cup_{S\in\S} S = U$.
Indeed, if some element of $U$ is not contained in any set of $\S$, then the
instance is a trivial \textsc{No}-instance.
We make a copy of $\S$ and put  $\S'=\{S'\mid S\in\S\}.$
Moreover, let $r$ and $z$ be two distinct objects that do not occur in
$\S\cup U\cup \S'$.

We first construct a network $N$ on the leaf set $X$ with vertex set
\[
    V(N)=\{r\}\cup \S \cup U\cup \S' \cup \{z\}
    \quad \text{and} \quad
    X=U\cup \S' \cup \{z\},
\]
where $r$ will serve as the root of $N$. 
We add arcs from $r$ to every vertex in $\S\cup \{z\}$. For each
$S\in\S$, we add the arc $(S,S')$, where $S'\in\S'$ denotes the copy of
$S$. Moreover, for each $S\in\S$ and every $u\in S$, we add the arc
$(S,u)$. Since $\bigcup_{S\in\S} S=U$, every $u\in U$ is reachable from
$r$, that is, $u\preceq_N r$. Hence it is easy to verify that $N$ is a
network on $X$ with root $r$ and that $V^0(N)=\S$.
Moreover, each vertex $S\in\S$ has the ``private'' leaf child $S'$, and no
vertex $S\in\S$ is adjacent to $z$. In particular, the leaf $z$ is only
adjacent to the root $r$. Moreover, $\CC_N(r)=X$. 
It is now straightforward to verify that $N$ is distinct-cluster and tree-child (the latter because
every non-leaf vertex of $N$ has a child that is a leaf of indegree 1). Finally, $N$ is
shortcut-free since the only arcs are of the form $(r,S)$, $(r,z)$, $(S,S')$, and $(S,u)$ with $u\in S$,
and none of these arcs admits an alternative directed path with the same endpoints.

Now let $\widehat N$ be obtained from $N$ by adding the arcs $(r,u)$ for all $u\in U$. 
The networks $N$ and $\widehat N$ are illustrated in Figure~\ref{fig:NP}.
One easily observes that,  $V(N)=V(\widehat N)$ and $V^0(\widehat N)=\S$. 
Since
$\cup_{S\in\S}S=U$, for every $u\in U$ there is some $S\in\S$ with $u\in S$
and therefore some $ru$-path $r\to S\to u$ in $\widehat N$. Thus, each added arc $(r,u)$ is a shortcut in $\widehat N$. Since $N$ is
shortcut-free and these are the only arcs added to $N$, it follows that $ \widehat N^- = N $. In
particular, by Lemma~\ref{lem:properties-SF-G_NEW}, the added shortcuts do not change any ancestor
relation and therefore do not change any cluster. Hence $N$ and $\widehat N$ have the same
clustering system. In particular, $\CC_N(v)=\CC_{\widehat N}(v)$ for all $v\in V(N)=V(\widehat N)$. 
Hence, since $N$ is distinct-cluster, $\widehat N$ is
distinct-cluster as well. Moreover, $\widehat N$ remains tree-child, because no vertex or arc of $N$
was removed and since no arc that contains $S'\in \S'$ or $z$ was added. In summary, $N$ and $\widehat N$ are 
DC-similar tree-child networks with canonically identified vertex sets.

\begin{figure}[h]
    \centering
    \includegraphics[width=0.6\textwidth]{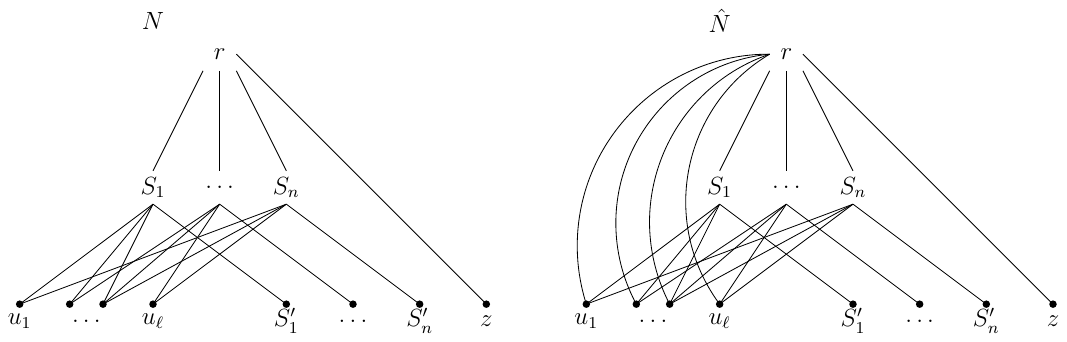}
    \caption{Shown are two networks $N$ and $\widehat N$ that serve as generic instances for which the problem
    				of deciding as whether $d_{\ominus}(N,N')\leq \delta $  for some integer $\delta \geq 0$
    				is NP-hard, see Theorem~\ref{thm:ominus-hard-similar-normal}.  }
    \label{fig:NP}
\end{figure}

We claim that the \textsc{Set Cover} instance has a solution of size at most $k$ if and only if $
d_{\ominus}(N,\widehat N)\leq 2k$.

Suppose first that $\S^*\subseteq \S$ is a set cover of $U$ with $|\S^*|\leq k$. We show that $
N\ominus \S^*=\widehat N\ominus \S^* $. By construction, $V(N)=V(\widehat N)$ holds and thus,
$V(N)\setminus \S^* = V(\widehat N)\setminus \S^*$. Thus, it remains to compare their arc sets. In
$N$, every vertex $S\in\S^*$ has the unique parent $r$ and children $S'$ and $u$ for all $u\in S$.
Hence applying $\ominus$ to the vertices  $S\in \S^*$ removes $S$ and its incident arcs
and adds the arc $(r,S')$ as well as the arcs $(r,u)$ for all $u\in S$. Since $\S^*$
covers $U$, the arc $(r,u)$ is present in $N\ominus \S^*$ for every $u\in U$. These arcs $(r,u)$,
$u\in U$, are precisely the additional arcs of $\widehat N$ compared with $N$, and they remain
present in $\widehat N\ominus \S^*$. The arcs $(r,S')$, $S\in\S^*$, are created in both $N\ominus
\S^*$ and $\widehat N\ominus \S^*$. Therefore, $ N\ominus \S^*=\widehat N\ominus \S^* $.
Consequently, $ d_{\ominus}(N,\widehat N) \leq |\S^*|+|\S^*| \leq 2k $.

Conversely, suppose that $d_{\ominus}(N,\widehat N)\leq 2k$. Since $N$ and $\widehat N$ are DC-similar
networks with canonically identified vertex sets, Lemma~\ref{lem:simw} implies that the same set of
vertices must be deleted from both networks. Hence there is a set $W\subseteq V^0(N)=\S$ such that $
|W|+|W|\leq 2k $ and $ N\ominus W\simeq \widehat N\ominus W$. In particular, $|W|\leq k$.
 
We show that $W$ is a set cover of $U$. Let $u\in U$. The arc $(r,u)$ is present in $\widehat N$ by
construction. Since $W\subseteq \S =V^0(\widehat N)$, neither $r$ nor $u$ is removed, and hence $(r,u)$ remains
present in $\widehat N\ominus W$. Since isomorphisms are leaf-fixing and preserve the unique root,
and since the vertices are canonically identified, the arc $(r,u)$ must also be present in $N\ominus
W$. However, the arc $(r,u)$ is not present in $N$. The only way it can be created by applying
$\ominus$ is by deleting some vertex $S\in W\subseteq \S$ with arcs $ r\to S $ and $ S\to u$ in $N$.
By construction, the latter is equivalent to $u\in S$. Hence there exists some $S\in W$ with $u\in
S$. Since $u\in U$ was arbitrary, $W$ covers $U$. Together with $|W|\leq k$, this gives a solution
of size at most $k$ for the \textsc{Set Cover} instance.

We have therefore shown that the \textsc{Set Cover} instance has a
solution of size at most $k$ if and only if
$
    d_{\ominus}(N,\widehat N)\leq 2k.
$
This proves NP-hardness. Since the parameter value $\delta=2k$ depends
only on $k$, and \textsc{Set Cover} is W[2]-hard when parameterized by
$k$, the same reduction proves W[2]-hardness when parameterized by
$\delta$.
\end{proof}

The hardness proof above does not carry over to $\dist^-$. Indeed,
the construction exploits precisely the shortcut arcs added to $N$ in order
to obtain $\widehat N$. Since these arcs disappear under shortcut removal,
we have $\widehat N^-=N$ and therefore $d^-_{\ominus}(N,\widehat N)=0$ 
for the corresponding instances. Thus a hardness result for
$\dist^-$ would require a different construction.


\section{Computing $\dist^-$ via Bad Ancestry Graphs and Vertex Covers}
\label{sec:distMinus-VC}

In the previous section, we showed that computing $\dist$ is a difficult computational problem and
that its NP-hardness leaves little room for efficient exact or approximation algorithms. This
motivates us to focus on $\dist^-$. Although we show in
Section~\ref{sec:NPHard-dist-minus} that computing $\dist^-$ is NP-hard as well, the algorithmic
situation for $\dist^-$ is considerably more favorable.
In particular, for DC-similar networks, $\dist^-$ admits an elegant reformulation in terms of
\emph{bypassing sets}, that is, sets of vertices whose removal resolves all disagreements between
the ancestor relations of the two networks. These disagreements are captured by the
\emph{bad ancestry graph}, and minimum bypassing sets correspond precisely to minimum vertex covers
of this graph. This connection yields a structural characterization of $\dist^-$ and allows us to
transfer well-known fixed-parameter and approximation algorithms for \textsc{Vertex Cover} directly
to the computation of $\dist^-$.

\begin{definition}
Let $N$ and $N'$ be two networks on the same leaf set $X$
Then, $(W,W')$ is a \emph{bypassing pair} for $N$ and $N'$
if $W \subseteq V^0(N)$ and  $W' \subseteq V^0(N')$ and there exists an
ancestor-preserving map $\phi\colon V(N) \setminus W \to V(N') \setminus W'$.
If $(W,W')$ is a bypassing pair for $N$ and $N'$ and $W=W'$, then $W$ is
called a \emph{bypassing set} for $N$ and $N'$.

A bypassing pair $(W,W')$ for $N$ and $N'$ is called \emph{minimum} if
$|W|+|W'|$ is minimum among all bypassing pairs for $N$ and $N'$. Similarly,
a bypassing set $W$ for $N$ and $N'$ is called \emph{minimum} if $|W|$ is
minimum among all bypassing sets for $N$ and $N'$.
\end{definition}

We emphasize that a bypassing pair $(W,W')$ for
all networks $N$ and $N'$ on the same leaf-set always exists. 
To see this, let $N$ and $N'$ be two networks on the same set of leaf set $X$
and with roots $\rho_N$ and $\rho_{N'}$, respectively.
Observe that any ancestor-preserving map $\phi$ must satisfy
$\phi(\rho_N)=\rho_{N'}$ and $\phi(x)\in X$ for all $x\in X$. 
In particular, $x\preceq_N \rho_N$ and thus, $\phi(x) \preceq_{N'} \phi(\rho_N)$
holds for all $x\in X$. 
It is now easy to see that $(V^0(N),V^0(N'))$ is a bypassing pair for $N$ and $N'$
since $V(N) \setminus V^0(N) = \{\rho_N\}\cup X$ and $V(N') \setminus V^0(N') = \{\rho_{N'}\}\cup X$.

Interestingly, bypassing pairs are exactly the vertex sets
whose removal makes the corresponding shortcut-free $\ominus$-reductions isomorphic.

\begin{proposition}\label{pr:bpset}
Let $N$ and $N'$ be two networks on the same leaf set, and let $W \subseteq V^0(N)$ and $W' \subseteq
V^0(N')$. Then, $(N \ominus W)^- \simeq (N' \ominus W')^-$ if and only if 
$(W,W')$ is a bypassing pair
for $N$ and $N'$.
\end{proposition}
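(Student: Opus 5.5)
Both directions go through the ancestor relation. We use Lemma~\ref{lem:network-ominus} and Lemma~\ref{lem:properties-SF-G_NEW} to translate between $\preceq_N$ restricted to $V(N)\setminus W$ and $\preceq_{(N\ominus W)^-}$. Then Lemma~\ref{lm:isom} turns ancestor-preserving bijections into graph isomorphisms.

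\emph{($\Rightarrow$)} Suppose $\varphi$ is an isomorphism between $(N\ominus W)^-$ and $(N'\ominus W')^-$ that fixes $X$. Its domain is $V(N)\setminus W$ and its image is $V(N')\setminus W'$. Graph isomorphisms map directed paths to directed paths, so $u\preceq_{(N\ominus W)^-} v$ holds if and only if $\varphi(u)\preceq_{(N'\ominus W')^-}\varphi(v)$. For $u,v\in V(N)\setminus W$, Lemma~\ref{lem:properties-SF-G_NEW} gives $u\preceq_{(N\ominus W)^-}v \iff u\preceq_{N\ominus W}v$, and Lemma~\ref{lem:network-ominus} gives $u\preceq_{N\ominus W}v \iff u\preceq_N v$ (this uses $W\subseteq V^0(N)$). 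The same holds for $N'$ and $W'$. Combining these equivalences shows that $\varphi$ is ancestor-preserving as a map $V(N)\setminus W\to V(N')\setminus W'$. Hence $(W,W')$ is a bypassing pair.

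\emph{($\Leftarrow$)} Let $\phi\colon V(N)\setminus W\to V(N')\setminus W'$ be ancestor-preserving. Running the chain of equivalences above in reverse shows that $\phi$ is an ancestor-preserving map between the networks $(N\ominus W)^-$ and $(N'\ominus W')^-$. Both are shortcut-free networks on $X$ by Lemma~\ref{lem:network-ominus} and Lemma~\ref{lem:properties-SF-G_NEW}. Lemma~\ref{lm:isom} then says $\phi$ is a graph isomorphism, so $(N\ominus W)^-\approx(N'\ominus W')^-$.

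The main obstacle is upgrading $\approx$ to $\simeq$, i.e.\ showing $\phi(x)=x$ for all $x\in X$. An ancestor-preserving map is not obviously leaf-fixing: the definition only requires it to respect $\preceq$. As observed before the proposition, $\phi$ sends leaves to leaves and the root to the root, because these are exactly the $\preceq$-minimal and the $\preceq$-maximal elements. Beyond that, a permutation of $X$ seems possible.

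My plan is to read the bypassing-pair definition as requiring $\phi$ to fix $X$, which is how it is used with the canonical identification for DC-similar networks. Alternatively, one can observe that $\phi$ preserves clusters up to the induced permutation of $X$. If the intended notion is leaf-fixing, the argument closes immediately. Otherwise I would note that, under the canonical identification, composing $\phi$ with the inverse leaf permutation is not available in general. So I would state explicitly that ancestor-preserving maps between networks on $X$ are taken to be the identity on $X$. With that, $\phi$ is a leaf-fixing graph isomorphism and $(N\ominus W)^-\simeq(N'\ominus W')^-$.
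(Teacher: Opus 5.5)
Your proof is correct and is essentially the paper's own argument. The forward direction carries the ancestor relation through the isomorphism using Lemma~\ref{lem:network-ominus} and Lemma~\ref{lem:properties-SF-G_NEW}, and the converse upgrades the ancestor-preserving bijection to a graph isomorphism of the shortcut-free reductions via Lemma~\ref{lm:isom}.

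The paper then simply asserts $\simeq$ at that point, so your leaf-fixing caveat is justified, and the leaf-fixing reading is required, not merely convenient. Take the trees on $X=\{a,b,c\}$ with cherries $\{a,b\}$ and $\{b,c\}$, and $W=W'=\emptyset$. The bijection sending root to root, cherry vertex to cherry vertex, and $a\mapsto b$, $b\mapsto c$, $c\mapsto a$ is ancestor-preserving in the literal sense, yet the trees are not $\simeq$. So ancestor-preserving maps must be read as fixing $X$, which is also what the later use with canonically identified DC-similar networks presupposes.
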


\begin{proof}
Let $N,N'$ be two networks on the same leaf set, and let $W \subseteq V^0(N)$ and $W' \subseteq
V^0(N')$.
Suppose first that $(N \ominus W)^- \simeq (N' \ominus W')^-$ and thus, that there is a graph
isomorphism $\phi\colon V((N \ominus W)^-) \to V((N' \ominus W')^-)$. Note that $V((N \ominus W)^-)=V(N)
\setminus W$ and $V((N' \ominus W')^-)=V(N') \setminus W'$, so $\phi$ is a bijection between $V(N)
\setminus W$ and $V(N') \setminus W'$. Now, let $u,v \in V(N) \setminus W$. 
 It holds that 
$u \preceq_N v$ if and only if $u \preceq_{(N \ominus W)^-} v$ (cf.\ Lemma~\ref{lem:network-ominus}),
which is, if and only if $\phi(u) \preceq_{(N' \ominus W')^-} \phi(v)$
which, by Lemma~\ref{lem:network-ominus}, is equivalent to
$\phi(u)\preceq_{N'}\phi(v)$ since 
$\phi$ is a graph isomorphism between $(N \ominus W)^-$ and $(N' \ominus W')^-$.
 
 Since
this holds for any two $u,v \in V(N) \setminus W$ it follows that $\phi$ is an ancestor-preserving map. 
Hence, $(W,W')$ is a bypassing pair for $N$ and $N'$.

Conversely, suppose that $(W,W')$ is a bypassing pair for $N$ and $N'$.
By definition, there exists an ancestor-preserving map $\phi\colon V(N) \setminus W \to V(N') \setminus
W'$.  Since $V(N) \setminus W=V((N \ominus W)^-)$ and $V(N') \setminus W'=V((N' \ominus W')^-)$
it follows that  
$\phi \colon V((N \ominus W)^-) \to V((N' \ominus W')^-)$ is  an ancestor-preserving map. 
This together with the fact that $(N \ominus W)^-$ and $(N' \ominus W')^-$  
are shortcut-free implies together with  Lemma~\ref{lm:isom} 
that $\phi$ is a graph-isomorphism between $(N \ominus W)^-$ and $(N' \ominus W')^-$. Hence, we have
$(N \ominus W)^- \simeq (N' \ominus W')^-$ as desired.
\end{proof}

\begin{corollary}
Let $N$ and $N'$ be two networks on the same leaf set, and let $W \subseteq V^0(N)$ and $W' \subseteq
V^0(N')$. If $N \ominus W \simeq N' \ominus W'$, then $(W,W')$ is a bypassing pair for $N$ and
$N'$.
\end{corollary}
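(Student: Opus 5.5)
The plan is to reduce the statement to Proposition~\ref{pr:bpset}, using the observation that isomorphism of the full $\ominus$-reductions implies isomorphism of their shortcut-free versions.

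First, assume $N\ominus W\simeq N'\ominus W'$ and fix a leaf-fixing graph isomorphism $\varphi\colon V(N)\setminus W\to V(N')\setminus W'$ between $N\ominus W$ and $N'\ominus W'$. Then I will argue that the same $\varphi$ is an isomorphism between $(N\ominus W)^-$ and $(N'\ominus W')^-$. Being a shortcut is defined purely in graph terms: an arc $(u,w)$ is a shortcut if there is a directed $uw$-path avoiding that arc. A graph isomorphism maps arcs to arcs and directed paths to directed paths bijectively. Hence $(u,w)$ is a shortcut in $N\ominus W$ if and only if $(\varphi(u),\varphi(w))$ is a shortcut in $N'\ominus W'$.

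Consequently, $\varphi$ maps the non-shortcut arcs of one graph bijectively onto the non-shortcut arcs of the other. These non-shortcut arcs are exactly the arc sets of the shortcut-free versions, and the vertex sets are unchanged by shortcut removal. So $\varphi$ is a graph isomorphism of the shortcut-free versions. It still fixes the leaves, and by Lemma~\ref{lem:properties-SF-G_NEW} both shortcut-free versions are DAGs on $X$. Thus $(N\ominus W)^-\simeq(N'\ominus W')^-$. The same step was already used informally in the proof of Lemma~\ref{lm:ineqs}.

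Finally, since $W\subseteq V^0(N)$ and $W'\subseteq V^0(N')$, Proposition~\ref{pr:bpset} yields that $(W,W')$ is a bypassing pair for $N$ and $N'$.

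An alternative route avoids shortcuts altogether. By Lemma~\ref{lem:network-ominus}, $\preceq_N$ restricted to $V(N)\setminus W$ coincides with $\preceq_{N\ominus W}$, and likewise for $N'$. A graph isomorphism preserves reachability, so $\varphi$ is directly ancestor-preserving, which is exactly the definition of a bypassing pair.

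No step is a real obstacle. The only point needing care is that shortcut status is invariant under graph isomorphism.
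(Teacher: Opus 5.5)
Your main route is correct and matches the paper's proof: pass from $N\ominus W\simeq N'\ominus W'$ to $(N\ominus W)^-\simeq(N'\ominus W')^-$ and invoke Proposition~\ref{pr:bpset}, with the added care of checking that shortcut status is preserved by graph isomorphisms, a step the paper leaves implicit. Your alternative route via Lemma~\ref{lem:network-ominus} is also valid; it is the forward direction of the proof of Proposition~\ref{pr:bpset}, carried out without passing through shortcut removal.
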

\begin{proof}
If $N \ominus W \simeq N' \ominus W'$, then $(N \ominus W)^- \simeq (N' \ominus W')^-$ 
also holds. 
The assertion follows now from Proposition~\ref{pr:bpset}. 
\end{proof}

A further important consequence of Proposition~\ref{pr:bpset} and
Lemma~\ref{lem:simw} is the following.

\begin{corollary}\label{cor:minbypassing-SET}
Let $N$ and $N'$ be  DC-similar networks with canonically identified
vertex sets $V$. Then a pair $(W,W')$ is a bypassing pair for $N$ and $N'$ if
and only if $W=W'$ and $W$ is a bypassing set for $N$ and $N'$.
\end{corollary}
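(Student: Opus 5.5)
The claim is an equivalence, and both directions come from combining Proposition~\ref{pr:bpset} with Lemma~\ref{lem:simw}. Throughout, $N$ and $N'$ are DC-similar with canonically identified vertex sets $V$, so $V^0(N)=V^0(N')$. This holds because the root cluster $X$ and the singleton leaf clusters are identified with each other.

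For the forward direction, let $(W,W')$ be a bypassing pair for $N$ and $N'$. Proposition~\ref{pr:bpset} gives $(N\ominus W)^-\simeq (N'\ominus W')^-$. The second statement of Lemma~\ref{lem:simw} then gives $W=W'$. Since $(W,W)$ is a bypassing pair with equal components, $W$ is by definition a bypassing set for $N$ and $N'$.

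For the backward direction, suppose $W=W'$ and $W$ is a bypassing set for $N$ and $N'$. The definition of a bypassing set says exactly that $(W,W)$ is a bypassing pair, with $W\subseteq V^0(N)$ and $W\subseteq V^0(N')$. Hence $(W,W')=(W,W)$ is a bypassing pair, and this direction is essentially tautological.

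The only point that needs care is checking that the hypotheses of Lemma~\ref{lem:simw} are satisfied. It requires $W\subseteq V^0(N)$ and $W'\subseteq V^0(N')$, which is built into the definition of a bypassing pair. It also requires an isomorphism of the shortcut-free reductions, which Proposition~\ref{pr:bpset} supplies. I do not expect any real obstacle here: the cluster-counting argument that forces $W=W'$ is already contained in Lemma~\ref{lem:simw}.
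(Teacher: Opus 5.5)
Your proof is correct and is essentially the paper's own argument: the forward direction goes through Proposition~\ref{pr:bpset} to obtain $(N\ominus W)^-\simeq(N'\ominus W')^-$ and then applies the second part of Lemma~\ref{lem:simw}, while the converse is immediate from the definition of a bypassing set. One caveat, which the paper shares, is that Lemma~\ref{lem:simw} needs a leaf-fixing isomorphism, so this step tacitly requires the ancestor-preserving map in the definition of a bypassing pair to fix every leaf (otherwise, for $N=N'$ the tree with arcs $\rho\to u$, $\rho\to v$, $u\to a$, $u\to b$, $v\to c$, $v\to d$, the pair $(\{u\},\{v\})$ would be bypassing via a leaf-swapping map although $W\neq W'$).
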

\begin{proof}
Let $(W,W')$ be a bypassing pair for $N$ and $N'$. By
Proposition~\ref{pr:bpset}, we have
$(N\ominus W)^-\simeq (N'\ominus W')^-$.
Since $N$ and $N'$ are DC-similar with canonically identified vertex sets,
Lemma~\ref{lem:simw} implies $W=W'$. Hence $W$ is a bypassing set for
$N$ and $N'$.
Conversely, if $W$ is a bypassing set for $N$ and $N'$, then, by
definition, $(W,W)$ is a bypassing pair for $N$ and $N'$.
\end{proof}

A direct consequence of Corollary~\ref{cor:minbypassing-SET} is
\begin{corollary}\label{cor:min-bypassing-set}
Let $N$ and $N'$ be DC-similar networks with canonically identified vertex
sets, and let $(W,W')\in \mathtt W^-(N,N')$. Then $(W,W')$ attains the
minimum in the definition of $\dist^-(N,N')$, that is,
$
    \dist^-(N,N')=|W|+|W'|$
if and only if $W=W'$ and $W$ is a minimum bypassing set for $N$ and $N'$.
\end{corollary}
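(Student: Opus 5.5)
The plan is to reduce the statement to Proposition~\ref{pr:bpset} together with Corollary~\ref{cor:minbypassing-SET}. By Proposition~\ref{pr:bpset}, $\mathtt W^-(N,N')$ is exactly the set of bypassing pairs for $N$ and $N'$. Hence $\dist^-(N,N')$ is the minimum of $|W|+|W'|$ over all bypassing pairs. By Corollary~\ref{cor:minbypassing-SET}, every bypassing pair has the form $(W,W)$ with $W$ a bypassing set, and conversely every bypassing set $W$ yields the bypassing pair $(W,W)$. Therefore
\[
\dist^-(N,N')=\min\{2|W| \mid W \text{ is a bypassing set for } N \text{ and } N'\}=2\,m,
\]
where $m$ is the size of a minimum bypassing set. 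This minimum exists because $V^0(N)$ is a bypassing set.

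For the forward direction, let $(W,W')\in\mathtt W^-(N,N')$ with $|W|+|W'|=\dist^-(N,N')$. Proposition~\ref{pr:bpset} shows that $(W,W')$ is a bypassing pair. Corollary~\ref{cor:minbypassing-SET} then gives $W=W'$, so $W$ is a bypassing set and $2|W|=\dist^-(N,N')=2m$. Hence $|W|=m$, which means $W$ is a minimum bypassing set.

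For the converse, suppose $W=W'$ and $W$ is a minimum bypassing set. Then $(W,W)$ is a bypassing pair, and by Proposition~\ref{pr:bpset} it lies in $\mathtt W^-(N,N')$, which is the given hypothesis anyway. Moreover $|W|+|W'|=2m=\dist^-(N,N')$, so the pair attains the minimum.

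There is no real obstacle here. The only point needing care is the identification of the two minima: one has to use both directions of Corollary~\ref{cor:minbypassing-SET} so that minimizing over bypassing pairs and minimizing $2|W|$ over bypassing sets give the same value.
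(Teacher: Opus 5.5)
Your proposal is correct and follows the paper's route. Proposition~\ref{pr:bpset} identifies $\mathtt W^-(N,N')$ with the set of bypassing pairs, and Corollary~\ref{cor:minbypassing-SET} turns minimizing $|W|+|W'|$ into minimizing $2|W|$ over bypassing sets; you only state the identity $\dist^-(N,N')=2m$ and the two directions more explicitly than the paper's short proof does.
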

\begin{proof}
Since $(W,W')\in\mathtt W^-(N,N')$, Proposition~\ref{pr:bpset} implies
that $(W,W')$ is a bypassing pair for $N$ and $N'$. By
Corollary~\ref{cor:minbypassing-SET}, this is equivalent to $W=W'$ and
$W$ being a bypassing set. Moreover, minimizing $|W|+|W'|$ over
$\mathtt W^-(N,N')$ is, by Corollary~\ref{cor:minbypassing-SET}, equivalent to minimizing
$2|W|$ over all bypassing sets. Hence $(W,W')$ attains the minimum in the
definition of $\dist^-(N,N')$ if and only if $W=W'$ and $W$ is a minimum
bypassing set.
\end{proof}

By Corollary~\ref{cor:min-bypassing-set}, computing $\dist^-$ for two
DC-similar networks $N$ and $N'$ amounts to finding a minimum-size
bypassing set for $N$ and $N'$. We now give an equivalent
graph-theoretical formulation in terms of vertex covers.

Let $N$ and $N'$ be DC-similar networks with canonically identified vertex
sets. A pair of distinct vertices $(u,v)$ is called a \emph{bad ancestry
pair}  of $N$ and $N'$, if $v$ is a descendant of $u$ in exactly one of
$N$ and $N'$. In
other words, $(u, v)$ is a bad ancestry pair if $N$ and $N'$ disagree on their ancestor relationship
between $u$ and $v$.

A simple consequence of the definition of bypassing sets is that they correspond to destroying each bad ancestry pair.

\begin{lemma}\label{lem:bypass-at-least-one}
	Let $N, N'$ be DC-similar networks with canonically identified vertex sets. Then $W \subseteq
	V^0(N)$ is a bypassing set for $N$ and $N'$ if and only if $W$ contains at 
	least one vertex of every bad ancestry pair of $N$ and $N'$.
\end{lemma}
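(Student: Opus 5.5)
The plan is to unfold the definition of a bypassing set and use the fact that, for DC-similar networks with canonically identified vertex sets, the only candidate ancestor-preserving map is the identity. Recall that $W$ is a bypassing set precisely when $W\subseteq V^0(N)$ and there is an ancestor-preserving bijection $\phi\colon V(N)\setminus W\to V(N')\setminus W$.

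The first step is to show that any such $\phi$ must be the identity. By Proposition~\ref{pr:bpset}, $\phi$ exists if and only if $(N\ominus W)^-\simeq (N'\ominus W)^-$. Any leaf-fixing isomorphism preserves clusters. By Lemma~\ref{lem:network-ominus} and Lemma~\ref{lem:properties-SF-G_NEW}, both the $\ominus$-operation and shortcut removal preserve the clusters of the remaining vertices, so $\CC_N(v)=\CC_{N'}(\phi(v))$ for every remaining $v$. Since $\CC_{N'}(v)=\CC_N(v)$ and $N'$ is distinct-cluster, this forces $\phi(v)=v$. Alternatively, one can argue directly from the definition: an ancestor-preserving map sends leaves to leaves and preserves the leaf sets below each vertex. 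The cleanest route, though, is via Proposition~\ref{pr:bpset}, where leaf-fixing is built into $\simeq$. Consequently, $W$ is a bypassing set if and only if the identity on $V\setminus W$ is ancestor-preserving, that is, if and only if $u\preceq_N v \Leftrightarrow u\preceq_{N'} v$ for all $u,v\in V\setminus W$.

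The second step is to observe that this last condition says exactly that no bad ancestry pair has both endpoints in $V\setminus W$. Pairs with $u=v$ never disagree, and a pair $(u,v)$ of distinct vertices violates the equivalence precisely when it is a bad ancestry pair. Hence the identity is ancestor-preserving if and only if every bad ancestry pair meets $W$.

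For the converse direction, if $W\subseteq V^0(N)$ hits every bad pair, then the identity on $V\setminus W$ is itself an ancestor-preserving bijection, so $W$ is bypassing directly from the definition. The only subtle point is the forward direction's claim that $\phi$ must be the identity, which is handled by the cluster argument in the first step.
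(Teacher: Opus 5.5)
Your skeleton matches the paper's: $W$ is bypassing iff the identity on $V\setminus W$ is ancestor-preserving iff no bad ancestry pair has both ends in $V\setminus W$. Your converse direction is the same as the paper's. The difference is the forward direction. There the paper just says a surviving bad pair ``contradicts'' bypassing, tacitly taking the witnessing map to be the identity, while you try to prove that it must be.

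That step is exactly where care is needed, and as written it does not go through. It requires ancestor-preserving maps to fix the leaves, which the paper's definition does not state. Without leaf-fixing, your claim is false, and so is the lemma. Take $X=\{a,b,c,d\}$, a root $r$, and internal vertices $u,w,v$.
Let $N$ have arcs $r\to u$, $r\to w$, $u\to v$, $u\to c$, $w\to a$, $w\to b$, $w\to d$, $v\to a$, $v\to b$.
Let $N'$ have arcs $r\to u$, $r\to w$, $w\to v$, $w\to d$, $u\to a$, $u\to b$, $u\to c$, $v\to a$, $v\to b$.
In both networks $u,w,v$ have clusters $\{a,b,c\},\{a,b,d\},\{a,b\}$. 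So the networks are DC-similar and canonically identified, and $(u,v)$ is a bad ancestry pair. Yet the map swapping $u\leftrightarrow w$ and $c\leftrightarrow d$ is a graph isomorphism $N\to N'$, hence ancestor-preserving in the literal sense. So $W=\emptyset$ would be a bypassing set.

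Routing through Proposition~\ref{pr:bpset} does not repair this. Its ``if'' direction only obtains a graph isomorphism from Lemma~\ref{lm:isom}. The same example violates it, since both networks are shortcut-free and $N\not\simeq N'$. Even granting the proposition, it gives you \emph{some} leaf-fixing isomorphism $\psi$, not information about $\phi$. Your step ``so $\CC_N(v)=\CC_{N'}(\phi(v))$'' conflates the two. What legitimately follows is that $\psi$, hence the identity, is ancestor-preserving, which is all you need. Your alternative remark that $\phi$ ``preserves the leaf sets below each vertex'' likewise holds only up to the permutation $\phi|_X$.

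The clean fix is to read ``ancestor-preserving'' as including $\phi(x)=x$ for all $x\in X$, which is evidently intended, since otherwise Proposition~\ref{pr:bpset} fails. Then directly $\CC_{N'}(\phi(v))=\{x\in X: \phi(x)\preceq_{N'}\phi(v)\}=\{x\in X: x\preceq_N v\}=\CC_N(v)=\CC_{N'}(v)$. Hence $\phi(v)=v$ because $N'$ is distinct-cluster, and the rest of your argument goes through without Proposition~\ref{pr:bpset}. With that reading, your proof is a correct and more complete version of the paper's.
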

\begin{proof} 
Let $N$ and $N'$ be as stated and suppose that $W$ is a bypassing set for $N$ and $N'$. Let
$(u, v)$ be a bad ancestry pair of $N$ and $N'$. If $W$ does not contain $u$ nor $v$, then
 $u, v \in V(N) \setminus W$. 
In particular, $v \prec_N u$ but $v \not \prec_{N'} u$, or
$v \prec_{N'} u$ but $v \not \prec_{N} u$. Either way, this would contradict the fact that
$W$ is a bypassing set. Therefore $W$ must contain at least one of $u$ or $v$. Conversely, assume 
that $W$ contains at least one vertex of every bad ancestry pair. Let 
$u,v \in V(N) \setminus W$ be distinct vertices. 
Since $u, v \notin W$,  none of $(u, v)$ or $(v, u)$ is a bad ancestry pair. In particular, 
 $v \prec_N u$ if and only if $v \prec_{N'} u$. 
Since this holds for all distinct $u,v\in V(N)\setminus W$, and $v\preceq_N v$ if and only
if $v\preceq_{N'} v$ as $V(N)=V(N')$,  we obtain
   $v\preceq_N u$
if and only if
   $ v\preceq_{N'} u$.
 It follows that $W$ is a bypassing set for $N$ and $N'$.
\end{proof}

We now consider also undirected graphs $G=(V,E)$ with vertex set $V(G)\coloneqq V$ and 
where the edge set $E(G)$ is a subset of 2-elementary subsets of $V$. 
Note that, for technical reasons, we allow ``empty'' undirected graphs $(\emptyset,\emptyset)$.
In undirected graphs, edges are unordered pairs and we write $uv$ as shorthand
for the edge $\{u,v\}$.

Given two DC-similar networks $N$ and $N'$ with canonically identified
vertex sets, we define the \emph{bad ancestry graph} $B_{N,N'}$ as the
undirected graph as follows:
\begin{itemize}
	\item 
	$V(B_{N, N'}) = \{ v \in V(N) \colon $ there exists $u \in V(N)$ such that $(u, v)$ or $(v, u)$ is a bad ancestry pair for $N$ and $N'\}$.

	\item 
	$E(B_{N, N'}) = \{uv \colon (u, v) $ or $(v, u)$ is a bad ancestry pair for $N$ and $N'\}$.
\end{itemize}
Thus, $B_{N,N'}$ is either the empty graph or has at least two vertices. In particular, it  has no isolated vertices by definition.

We can now reduce the computation of $\dist^-$ on DC-similar networks to \textsc{Vertex Cover}.
Recall that, given an undirected graph $G$, a \emph{vertex cover} of $G$
is a subset $C\subseteq V(G)$ such that, for every edge
$uv\in E(G)$, at least one of $u$ and $v$ belongs to $C$. The
\textsc{Vertex Cover} problem asks for a vertex cover of minimum size.

The following result is almost immediate from
Lemma~\ref{lem:bypass-at-least-one}: bypassing sets are precisely vertex
covers of the bad ancestry graph. The only minor point is that a bypassing
set is required to be contained in $V^0$, whereas a vertex cover of
$B_{N,N'}$ is a priori just a subset of $V(B_{N,N'})$.

\begin{lemma}\label{lm:bypass-is-vc}
	Let $N, N'$ be DC-similar networks  with canonically identified vertex sets. Then we have $V(B_{N,N'}) \subseteq V^0(N)$. 
	Moreover,	 $W$ is a bypassing set for $N$ and $N'$ if and only if $W$ is a vertex-cover of $B_{N, N'}$. 
\end{lemma}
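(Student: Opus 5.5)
The proof has two parts: show that no root or leaf lies in a bad ancestry pair, and then read off the equivalence from Lemma~\ref{lem:bypass-at-least-one}.

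\textbf{Step 1: $V(B_{N,N'})\subseteq V^0(N)$.} Let $v\in V(B_{N,N'})$, so $v$ lies in some bad ancestry pair with a partner $u$. We show $v$ is neither the root nor a leaf; the case where $v$ is the second coordinate is symmetric.

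Since $N$ and $N'$ are DC-similar with canonically identified vertex sets, they have the same root. Indeed, the root is the unique vertex with cluster $X$ in each network, because both are distinct-cluster. Likewise, each leaf $x$ is the unique vertex with cluster $\{x\}$ in both networks. Hence $V^0(N)=V^0(N')$.

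If $v=\rho$ is the root, then every vertex is a descendant of $\rho$ in both $N$ and $N'$. So for every $u$ we have $u\prec_N\rho$ and $u\prec_{N'}\rho$, while $\rho\not\prec u$ holds in both networks. Thus $N$ and $N'$ agree on the pair, and neither $(u,\rho)$ nor $(\rho,u)$ is bad.

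If $v=x$ is a leaf, then $x$ has no proper descendants in either network, so $(x,u)$ is never bad. For the other orientation, $x\prec_N u$ holds iff $x\in\CC_N(u)$. Likewise $x\prec_{N'} u$ holds iff $x\in\CC_{N'}(u)$, using $u\neq x$. Since $\CC_N(u)=\CC_{N'}(u)$, the two networks agree and $(u,x)$ is not bad.

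Hence every vertex of $B_{N,N'}$ lies in $V^0(N)$.

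\textbf{Step 2: the equivalence.} First let $W$ be a bypassing set. By definition $W\subseteq V^0(N)$, and there is an ancestor-preserving map $\phi$ on $V(N)\setminus W$. The only delicate point is that Lemma~\ref{lem:bypass-at-least-one} implicitly uses the identity as the map.

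To handle this, note that $\phi$ preserves clusters. It fixes each leaf, since an ancestor-preserving bijection sends leaves (minimal elements) to leaves, and leaves are fixed by Lemma~\ref{lem:simw}'s identification. So $\CC_N(w)=\CC_{N'}(\phi(w))$ for every remaining $w$. Distinct-clusterness then forces $\phi=\mathrm{id}$. Alternatively, one can invoke Proposition~\ref{pr:bpset} together with Lemma~\ref{lem:simw} and the leaf-fixing isomorphism.

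With $\phi=\mathrm{id}$, Lemma~\ref{lem:bypass-at-least-one} gives that $W$ meets every bad ancestry pair. This is exactly the statement that $W\cap V(B_{N,N'})$ is a vertex cover of $B_{N,N'}$. (If one insists that a vertex cover be a subset of $V(B_{N,N'})$, then a bypassing set is a vertex cover up to extra vertices in $V^0$ that do not matter; I would state the equivalence for $W\subseteq V(B_{N,N'})$, or read ``vertex cover'' as meeting every edge.)

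Conversely, let $W\subseteq V(B_{N,N'})$ be a vertex cover. By Step~1, $W\subseteq V^0(N)$. Since $W$ meets every edge, it contains a vertex of every bad ancestry pair. Lemma~\ref{lem:bypass-at-least-one} then shows that $W$ is a bypassing set.

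The main obstacle is the set-theoretic mismatch between the two notions. A vertex cover lives inside $V(B_{N,N'})$, whereas a bypassing set lives inside $V^0(N)$. Step~1 removes this mismatch in the direction that matters, namely that every vertex cover is admissible as a bypassing set.
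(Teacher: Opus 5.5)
Your plan matches the paper's proof. Step~1 is exactly the paper's root/leaf argument. Step~2 reads the equivalence off Lemma~\ref{lem:bypass-at-least-one}, with the same convention that a bypassing set $W$ covers $B_{N,N'}$ via $W\cap V(B_{N,N'})$. You are also right that Lemma~\ref{lem:bypass-at-least-one}, and hence the paper's proof, silently takes the ancestor-preserving map to be the identity.

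Your repair, however, does not work. An ancestor-preserving bijection on $V(N)\setminus W$ does map leaves onto leaves, since they are the minimal elements when $W\subseteq V^0(N)$. But it may permute them. The canonical identification only says how $V(N)$ and $V(N')$ are matched, and neither it nor Lemma~\ref{lem:simw} constrains $\phi$. Your alternative via Proposition~\ref{pr:bpset} has the same hole: its converse direction only obtains a graph isomorphism from Lemma~\ref{lm:isom}, not a leaf-fixing one.

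The claim that $\phi$ fixes the leaves is in fact false. Take $X=\{a,b,c,d\}$ and let $N$ have arcs $\rho\to v$, $\rho\to u_2$, $\rho\to d$, $v\to u_1$, $v\to b$, $u_1\to a$, $u_1\to c$, $u_2\to b$, $u_2\to c$. Let $N'$ be the image of $N$ under the permutation $\pi$ exchanging $a\leftrightarrow b$ and $u_1\leftrightarrow u_2$. In both networks $\CC(u_1)=\{a,c\}$, $\CC(u_2)=\{b,c\}$ and $\CC(v)=\{a,b,c\}$. So $N,N'$ are DC-similar with canonically identified vertex sets, and $\pi$ is an ancestor-preserving map $V(N)\to V(N')$. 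Read literally, $\emptyset$ is therefore a bypassing set, although $(v,u_1)$ is a bad ancestry pair ($u_1\prec_N v$ but $u_1\not\prec_{N'} v$).

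Hence leaf-fixing cannot be derived; it has to be built into the definition of an ancestor-preserving map. This is also what Proposition~\ref{pr:bpset} needs in order to conclude $\simeq$ rather than $\approx$. Under that reading your cluster argument is sound. From $x\preceq_N w\iff x\preceq_{N'}\phi(w)$ you get $\CC_{N'}(\phi(w))=\CC_N(w)=\CC_{N'}(w)$, so $\phi(w)=w$ by distinct-clusterness, and the rest of your proof goes through.
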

\begin{proof}
We first show that $V(B_{N,N'}) \subseteq V^0(N)$. 
To this end, we show
that neither the root nor any leaf is  part of a bad ancestry pair. 

Let $r$ be the root in $N$ and $r'$ be the root of $N'$. 
Since $N$ and $N'$ are distinct-cluster networks on the same leaf set, 
it holds that $\CC_N(r)=X=\CC_{N'}(r)$. 
Since $N$ and $N'$ have canonically identified vertex, it follows that
$r=r'$. Since $r$ is the root of both $N$ and $N'$, $r$ is an ancestor of every vertex in both $N$ and $N'$, and thus $r$ is in no
bad ancestry pair. By definition, $r \notin V(B_{N, N'})$.

Now let $x\in X$ be a leaf. For every vertex $u$, we have $ x\preceq_N u \iff x\in \CC_N(u) \iff
x\in \CC_{N'}(u) \iff x\preceq_{N'} u, $
where we used that $N$ and $N'$ are DC-similar with canonically identified
vertex sets. Moreover, since $x$ is a leaf in both networks, no distinct
vertex lies below $x$ in either network. Hence $x$ is not contained in any
bad ancestry pair. Therefore $V(B_{N,N'})\subseteq V^0(N)$.

We now prove the stated equivalence. By
Lemma~\ref{lem:bypass-at-least-one}, $W$ is a bypassing set for $N$ and $N'$ if and only if $W$
contains at least one vertex of every bad ancestry pair. By definition of $B_{N,N'}$, this is
equivalent to saying that $W$ contains at least one endpoint of every edge of $B_{N,N'}$, that is,
that $W\cap V(B_{N,N'})$ is a vertex cover of $B_{N,N'}$. 

Conversely, let $W$ be a vertex cover of $B_{N, N'}$. Since $V(B_{N,N'}) \subseteq V^0(N)$, we have
$W \subseteq V^0(N)$. Moreover, since edges of $B_{N, N'}$ are in 1-to-1 correspondence with bad
ancestry pairs, $W$ contains at least one vertex of each bad ancestry pair $(u, v)$. By
Lemma~\ref{lem:bypass-at-least-one} it is a bypassing set for $N$ and $N'$.  
\end{proof}

A direct consequence of Lemma~\ref{lm:bypass-is-vc} is that the minimum
size of a bypassing set for $N$ and $N'$ is equal to the minimum size of a
vertex cover of $B_{N,N'}$, see Figure~\ref{fig:DC-VC} for an illustrative
example.

\begin{theorem}\label{thm:2VC=dist-}
Let $N, N'$ be DC-similar networks  with canonically identified vertex sets. Then, 
\[
    \dist^-(N,N') = 2\tau(B_{N,N'}),
\]
where $\tau(B_{N,N'})$ denotes the size of a minimum vertex-cover of $B_{N,N'}$.
\end{theorem}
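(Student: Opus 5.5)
The plan is to chain the earlier characterizations: optimal pairs in $\mathtt W^-(N,N')$ correspond to minimum bypassing sets, and bypassing sets are exactly the vertex covers of $B_{N,N'}$. Concretely, by Corollary~\ref{cor:min-bypassing-set}, if $(W,W')\in\mathtt W^-(N,N')$ attains the minimum in the definition of $\dist^-(N,N')$, then $W=W'$ and $W$ is a minimum bypassing set. Hence $\dist^-(N,N')=2|W|=2\beta$, where $\beta$ denotes the minimum size of a bypassing set for $N$ and $N'$. Such an optimal pair exists because $\dist^-$ is well-defined, i.e., $\mathtt W^-(N,N')$ is non-empty and finite.

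It then remains to show $\beta=\tau(B_{N,N'})$, and for this I use Lemma~\ref{lm:bypass-is-vc}. Two small points need care. First, a bypassing set $W\subseteq V^0(N)$ need not lie inside $V(B_{N,N'})$. However, $W\cap V(B_{N,N'})$ still contains an endpoint of every edge, so it is a vertex cover of size at most $|W|$. Conversely, it is again a bypassing set by Lemma~\ref{lem:bypass-at-least-one}. Second, every vertex cover $C\subseteq V(B_{N,N'})\subseteq V^0(N)$ is a bypassing set by Lemma~\ref{lm:bypass-is-vc}. These two observations give $\beta\le\tau$ and $\tau\le\beta$.

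The degenerate case where $B_{N,N'}$ is the empty graph needs a separate check. Then there are no bad ancestry pairs, $\emptyset$ is a bypassing set, and $\tau=0=\beta$, which is consistent with $\dist^-(N,N')=0$.

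Combining the two steps gives $\dist^-(N,N')=2\beta=2\tau(B_{N,N'})$. I expect no real obstacle. The only delicate point is the mismatch between "vertex cover of $B_{N,N'}$" (a subset of $V(B_{N,N'})$) and "bypassing set" (a subset of $V^0(N)$), and intersecting with $V(B_{N,N'})$ resolves it.
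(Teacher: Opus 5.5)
Your proposal is correct and follows the paper's route: Corollary~\ref{cor:min-bypassing-set} reduces $\dist^-(N,N')$ to twice the minimum size of a bypassing set, and Lemma~\ref{lm:bypass-is-vc} identifies that minimum with $\tau(B_{N,N'})$. Your handling of the mismatch, by intersecting a bypassing set with $V(B_{N,N'})$, matches what the paper does inside the proof of Lemma~\ref{lm:bypass-is-vc}, and your empty-graph case is already covered by the general argument.
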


\begin{figure}
	\centering
	\includegraphics[width = 0.9\textwidth]{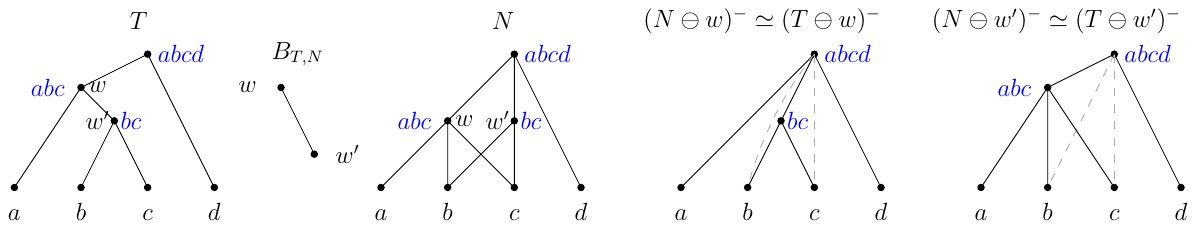}
	\caption{A phylogenetic tree $T$ and a network $N$. Here, 
	$T$ and $N$ are DC-similar	and have canonically identified vertex sets.
	In this example, $(w,w')$ is the unique bad ancestry pair of $T$ and $N$. 
	The bad ancestry graph $B_{T,N}$ consists of the vertices $w$ and $w'$
	together with the undirected edge $\{w,w'\}$. An optimal vertex
	cover of $B_{T,N}$ is either $C= \{w\}$ or $C'= \{w'\}$. 
	Thus, $\tau(B_{T,N})= 1$ and, 
	by Theorem~\ref{thm:2VC=dist-}, $\dist^-(T,N) = 2$.
	In fact, for either choice of the optimal vertex cover, we obtain, by
	Lemma~\ref{lm:bypass-is-vc},
	an optimal $\ominus$-reduction up to shortcut-removal, that is, 
	$(N\ominus w)^- \simeq (T\ominus w)^-$ and 
		$(N\ominus w')^- \simeq (T\ominus w')^-$.
		Dashed arcs indicate shortcuts in the networks
		$N\ominus w $ and $N\ominus w'$.}
	
\label{fig:DC-VC}
\end{figure}

Thus the computation of $\dist^-$ reduces to \textsc{Vertex Cover}. This
does not immediately yield a polynomial-time algorithm, since
\textsc{Vertex Cover} is NP-hard \cite{GareyJohnson1979}. However,
\textsc{Vertex Cover} is fixed-parameter tractable with respect to the size
$k$ of the vertex cover. 
In particular, it can be solved in time
$O(1.25284^k \cdot n^{O(1)}))$, 
with $n$ being the number of
vertices of the input graph \cite{HarrisNarayanaswamy2024}. This improves
the earlier $O(1.2738^k + kn)$ bound of Chen, Kanj and Xia
\cite{ChenKanjXia2010}.

\begin{theorem}\label{thm:dist-minus-FPT}
Let $N$ and $N'$ be DC-similar networks with canonically identified vertex
sets. Put $d=\dist^-(N,N')$ and $n=|V(N)|$. Then computing $\dist^-(N,N')$ is
fixed-parameter tractable with respect to $d$ and
can, in particular, be
computed in time $O(1.11931^d \cdot n^{O(1)})$. 
\end{theorem}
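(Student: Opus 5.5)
The plan is to reduce the computation to \textsc{Vertex Cover} on the bad ancestry graph and then run the known FPT algorithm with parameter $k=d/2$. By Theorem~\ref{thm:2VC=dist-}, $\dist^-(N,N')=2\tau(B_{N,N'})$, so $d=2\tau(B_{N,N'})$ and $\tau(B_{N,N'})=d/2$. Moreover, by Lemma~\ref{lm:bypass-is-vc}, any minimum vertex cover of $B_{N,N'}$ is directly a minimum bypassing set. By Proposition~\ref{pr:bpset}, it then yields optimal deletion sets $W=W'$.

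The first step is to construct $B_{N,N'}$ in polynomial time. I would compute the ancestor relations $\preceq_N$ and $\preceq_{N'}$, for example via transitive closure by a DFS or BFS from every vertex in $O(n\cdot|E|)$ time. The vertex sets are canonically identified. I would then list all pairs $(u,v)$ for which exactly one of $v\prec_N u$ and $v\prec_{N'} u$ holds. This gives $E(B_{N,N'})$ and, by taking endpoints, $V(B_{N,N'})$. Since $V(B_{N,N'})\subseteq V^0(N)$ by Lemma~\ref{lm:bypass-is-vc}, the graph has at most $n$ vertices and $O(n^2)$ edges.

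The second step is to apply the \textsc{Vertex Cover} algorithm of \cite{HarrisNarayanaswamy2024} with running time $O(1.25284^k\cdot n^{O(1)})$. Since the target value $\tau$ is unknown, I would call the decision version for $k=0,1,2,\dots$ until the first success. The total time is then $\sum_{k\le \tau} 1.25284^k n^{O(1)} = O(1.25284^{\tau} n^{O(1)})$, because the sum is a geometric series. Substituting $\tau=d/2$ gives $1.25284^{d/2}=(\sqrt{1.25284})^d$. Since $\sqrt{1.25284}\approx 1.11931$, this yields the claimed bound $O(1.11931^d\cdot n^{O(1)})$, and hence fixed-parameter tractability in $d$.

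The only delicate points are bookkeeping. First, the parameter of the vertex cover algorithm must be $\tau=d/2$ and not $d$; this is where the factor $2$ from Theorem~\ref{thm:2VC=dist-} produces the square-root base. Second, the iterative search over $k$ must not add more than a polynomial factor, which the geometric sum ensures. Third, if one also wants the witness sets, the cited algorithm should be one that returns a cover and not just a yes/no answer; otherwise I would recover a cover by standard self-reduction, which costs only a polynomial overhead.
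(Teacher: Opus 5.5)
Your proposal is correct and follows the paper's proof: build $B_{N,N'}$ in polynomial time, use $\dist^-(N,N')=2\tau(B_{N,N'})$, and run the Harris--Narayanaswamy \textsc{Vertex Cover} algorithm with $k=\tau(B_{N,N'})=d/2$, giving $1.25284^{d/2}\le 1.11931^{d}$. Your additional bookkeeping goes beyond what the paper writes out. The iterative search over $k$ with a geometric-sum bound handles the fact that $d$ is not known in advance, and self-reduction recovers a witness cover; the paper leaves both implicit.
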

\begin{proof}
Construct the bad ancestry graph $B_{N,N'}$ by comparing the ancestor
relations of all pairs of vertices in $N$ and $N'$. This can be done in
polynomial time. By Lemma~\ref{lm:bypass-is-vc}, minimum bypassing sets
for $N$ and $N'$ correspond to minimum vertex covers of $B_{N,N'}$.
Moreover, by Corollary~\ref{cor:min-bypassing-set},
$ \dist^-(N,N') = 2\tau(B_{N,N'})$
where $\tau(B_{N,N'})$ denotes the minimum size of a vertex cover of
$B_{N,N'}$. Hence it suffices to compute a minimum vertex cover of
$B_{N,N'}$. Applying the algorithm of \cite{HarrisNarayanaswamy2024} with
$k=\tau(B_{N,N'})=d/2$ gives a running time of $O(1.25284^{\frac{d}{2}} \cdot  n^{O(1)}) 
=O(1.11931^{d} \cdot  n^{O(1)})$, which is the claimed running time.
\end{proof}

Note that Theorem~\ref{thm:dist-minus-FPT} contrasts sharply with the complexity of $\dist$. Indeed,
unless $\mathrm{FPT}=\mathrm{W[2]}$, the W[2]-hardness result above rules out an algorithm for
computing $\dist$ with running time $ f(d)\cdot n^{O(1)}$ where $d=\dist(N,N')$, $n$ is the input
size, and $f$ is an arbitrary computable function.

Approximation algorithms for \textsc{Vertex Cover} transfer directly to
$\dist^-$, since $\dist^-(N,N')=2\tau(B_{N,N'})$ for DC-similar networks.
\begin{corollary}\label{cor:dist-minus-approx}
Let $N$ and $N'$ be DC-similar networks with canonically identified vertex sets. If \textsc{Vertex
Cover} admits a polynomial-time $\alpha$-approximation on $B_{N,N'}$, then $\dist^-(N,N')$ admits a
polynomial-time $\alpha$-approximation.
In particular, $\dist^-(N,N')$ admits a polynomial-time $2$-approximation.
\end{corollary}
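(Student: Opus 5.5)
The idea is to transfer an approximate vertex cover of $B_{N,N'}$ back to a deletion pair via the correspondences already established. Given $N$ and $N'$, we first construct $B_{N,N'}$ in polynomial time. To do this, we compute the ancestor relations $\preceq_N$ and $\preceq_{N'}$, for instance by a reachability search from every vertex. We then list all ordered pairs $(u,v)$ of distinct vertices on which the two relations disagree. Next, we run the assumed $\alpha$-approximation algorithm on $B_{N,N'}$ and obtain a vertex cover $C$ with $|C|\leq \alpha\,\tau(B_{N,N'})$.

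We then show that $C$ yields a feasible solution for $\dist^-$. By Lemma~\ref{lm:bypass-is-vc}, we have $C\subseteq V(B_{N,N'})\subseteq V^0(N)=V^0(N')$, and $C$ is a bypassing set for $N$ and $N'$. Hence $(C,C)$ is a bypassing pair. Proposition~\ref{pr:bpset} then gives $(N\ominus C)^-\simeq (N'\ominus C)^-$, that is, $(C,C)\in\mathtt W^-(N,N')$. The algorithm outputs the value $2|C|$ together with the witness $(C,C)$.

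For the approximation guarantee, Theorem~\ref{thm:2VC=dist-} gives
\[
\dist^-(N,N')\leq 2|C|\leq 2\alpha\,\tau(B_{N,N'})=\alpha\,\dist^-(N,N').
\]
So $2|C|$ is an $\alpha$-approximation. The degenerate case in which $B_{N,N'}$ is the empty graph needs separate mention: then $C=\emptyset$ and $\dist^-=0$, so the answer is exact.

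For the final claim, we use the classical maximal-matching algorithm. It takes the endpoints of a maximal matching of $B_{N,N'}$, which is a polynomial-time $2$-approximation for \textsc{Vertex Cover}; plugging in $\alpha=2$ gives the result. The only step needing any care is making sure the cover lies inside $V^0$, so that it is an admissible deletion set, and Lemma~\ref{lm:bypass-is-vc} already provides exactly this. No real obstacle is expected.
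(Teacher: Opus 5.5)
Your proposal is correct and follows the paper's proof essentially step for step. Lemma~\ref{lm:bypass-is-vc} places the approximate cover $C$ inside $V^0(N)$ as a bypassing set, Proposition~\ref{pr:bpset} makes $(C,C)$ feasible, the identity $\dist^-(N,N')=2\tau(B_{N,N'})$ gives the ratio chain, and the maximal-matching algorithm supplies $\alpha=2$. The only differences are cosmetic: the paper cites Theorem~\ref{thm:dist-minus-FPT} for the identity where you cite Theorem~\ref{thm:2VC=dist-}, and your remarks on building $B_{N,N'}$ in polynomial time and on the empty-graph case are harmless additions the paper leaves implicit.
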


\begin{proof}
Let $C$ be an $\alpha$-approximate vertex cover of $B_{N,N'}$, that is,
$|C|\leq \alpha\cdot \tau(B_{N,N'})$. By Lemma~\ref{lm:bypass-is-vc}, $C$ is a bypassing set for
$N$ and $N'$ and $C\subseteq V^0(N)$ Hence, by Proposition~\ref{pr:bpset},
$(N\ominus C)^-\simeq (N'\ominus C)^-$. The latter two arguments 
together with Theorem~\ref{thm:dist-minus-FPT} 
imply that $(C,C)\in\mathtt W^-(N,N')$ and that 
\[\dist^-(N,N') \leq 2|C|  \leq 2\alpha\cdot \tau(B_{N,N'}) = \alpha\cdot \dist^-(N,N').\]
Therefore $2|C|$ is an $\alpha$-approximation of $\dist^-(N,N')$.
The final statement follows from the standard polynomial-time $2$-approximation for
\textsc{Vertex Cover}, obtained by computing a maximal matching and taking all endpoints of its
edges, cf.~\cite[Thm~31.1]{Cormen2009}. 
\end{proof}

The standard maximal-matching algorithm for \textsc{Vertex Cover} gives a
polynomial-time $2$-approximation. Slightly better asymptotic approximation
ratios are known for general graphs; in particular, Karakostas~\cite{Kara:09}
gave a polynomial-time approximation ratio of $2-\Theta(1/\sqrt{\log n})$.

This contrasts with the situation for $\dist$. The set-cover reduction used in the proof of
Theorem~\ref{thm:ominus-hard-similar-normal} is approximation-preserving up to the constant factor
$2$. Indeed, every set cover $\mathcal S^\ast$ of size $k$ yields feasible deletion sets of total size
$2k$, while every feasible pair of deletion sets of total size at most $2k$ yields a set cover of
size at most $k$. Thus a polynomial-time constant-factor approximation for $\dist$ would imply a
polynomial-time constant-factor approximation for \textsc{Set Cover}. Since \textsc{Set Cover}
admits no polynomial-time constant-factor approximation unless $\mathrm{P}=\mathrm{NP}$ \cite{dinur2014analytical}, 
the same holds for $\dist$. 
This gives another algorithmic distinction
between $\dist$ and $\dist^-$.

Moreover \textsc{Vertex Cover} is one of the most studied NP-hard
problems, and modern exact solvers perform well on many large practical
instances; see, for example,
\cite{AkibaIwata2016,HespeLammSchulzStrash2019}. Since the bad ancestry
graph $B_{N,N'}$ can be constructed from $N$ and $N'$ in polynomial time,
such solvers can be applied directly to $B_{N,N'}$ in order to determine
$d=\dist^-(N,N')$.

Finally, Proposition~\ref{prop:dc-to-dc-similar} allows us to extend these algorithmic results from
DC-similar networks to arbitrary distinct-cluster networks.

\begin{theorem}\label{thm:distinct-cluster-FPTapprox}
    Let $N$ and $N'$ be distinct-cluster networks on $X$.  Let $ D_N \coloneqq \{v\in V^0(N)\mid \CC_N(v)\notin
\mathfrak C_{N'}\} $ and $ D_{N'}\coloneqq\{v'\in V^0(N')\mid \CC_{N'}(v')\notin \mathfrak C_N\}$.  
    Put $\tilde{d} = \dist^-(N \ominus D_N, N' \ominus D_{N'})$.  

	Then computing $\dist^-(N,N')$ is fixed-parameter tractable with respect to $\tilde d$. In
	particular, it can be computed in time $ O(1.11931^{\tilde d}\cdot n^{O(1)}), $ where $
	n=|V(N)|+|V(N')|. $ Moreover, $\dist^-(N,N')$ admits a polynomial-time $2$-approximation.
\end{theorem}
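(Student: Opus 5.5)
The plan is to reduce to the DC-similar case via Proposition~\ref{prop:dc-to-dc-similar} and then apply Theorem~\ref{thm:dist-minus-FPT} and Corollary~\ref{cor:dist-minus-approx} to the reduced pair. First, compute all clusters of $N$ and $N'$ by the bottom-up traversal described in the proof of Proposition~\ref{prop:symdif}; this takes polynomial time. Comparing the two cluster sets then yields $D_N$, $D_{N'}$ and $|\mathfrak C_N\Delta\mathfrak C_{N'}|=|D_N|+|D_{N'}|$ in polynomial time. Next, build $\widehat N\coloneqq N\ominus D_N$ and $\widehat N'\coloneqq N'\ominus D_{N'}$ by applying the $\ominus$-operator vertex by vertex, which is again polynomial. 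By Lemma~\ref{lem:dc-to-dc-similar}, $\widehat N$ and $\widehat N'$ are DC-similar. Their canonical identification (matching equal clusters) can also be computed in polynomial time, so Remark~\ref{rem:similar} applies.

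Proposition~\ref{prop:dc-to-dc-similar} gives $\dist^-(N,N')=|\mathfrak C_N\Delta\mathfrak C_{N'}|+\tilde d$. For the FPT claim, run the algorithm of Theorem~\ref{thm:dist-minus-FPT} on $\widehat N,\widehat N'$. It computes $\tilde d$ in time $O(1.11931^{\tilde d}\cdot |V(\widehat N)|^{O(1)})$, and since $|V(\widehat N)|\le n$, this is within $O(1.11931^{\tilde d}\cdot n^{O(1)})$. Adding the polynomially computed term $|\mathfrak C_N\Delta\mathfrak C_{N'}|$ gives $\dist^-(N,N')$.

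For the approximation, apply Corollary~\ref{cor:dist-minus-approx} to $\widehat N,\widehat N'$ to obtain a value $a$ with $\tilde d\le a\le 2\tilde d$. Output $a+|\mathfrak C_N\Delta\mathfrak C_{N'}|$. Writing $c\coloneqq|\mathfrak C_N\Delta\mathfrak C_{N'}|$, the output is at least $c+\tilde d=\dist^-(N,N')$ and at most $c+2\tilde d\le 2(c+\tilde d)=2\dist^-(N,N')$, using $c\ge 0$. The output is also achievable: if $C$ is the vertex cover produced by the corollary, then Lemma~\ref{lm:bypass-is-vc} and Proposition~\ref{pr:bpset} show that $(D_N\cup C, D_{N'}\cup C)$ is a feasible deletion pair of exactly this total size. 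This uses order-independence of $\ominus$ and that $C\subseteq V^0(\widehat N)=V^0(N)\setminus D_N$, with $C$ translated to $N'$ via the canonical identification.

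The main step requiring care is the bookkeeping between the two vertex sets. The canonical identification of $\widehat N$ and $\widehat N'$ must be used to translate $C$ into a subset of $V^0(N')\setminus D_{N'}$. In addition, $n$ bounds the sizes of the reduced networks, so the polynomial factors and the exponent in $\tilde d$ transfer unchanged. Everything else follows directly from the cited results.
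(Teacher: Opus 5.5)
Your proposal is correct and follows the paper's proof essentially step for step. You reduce via Proposition~\ref{prop:dc-to-dc-similar} to the DC-similar pair $N\ominus D_N$, $N'\ominus D_{N'}$, apply Theorem~\ref{thm:dist-minus-FPT} for the FPT bound, and apply Corollary~\ref{cor:dist-minus-approx} with the additive offset $|D_N|+|D_{N'}|=|\mathfrak C_N\Delta\mathfrak C_{N'}|$ for the $2$-approximation. Your extra remark, that the approximate value is witnessed by the feasible deletion pair $(D_N\cup C, D_{N'}\cup C)$ with $C$ transferred via the canonical identification, is correct; the paper does not spell it out.
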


\begin{proof}
  Let $N$ and $N'$ be distinct-cluster networks on $X$. The cluster sets $\mathfrak C_N$ and
  $\mathfrak C_{N'}$, and hence the sets $D_N$ and $D_{N'}$, can be computed in polynomial time. The
  networks $N\ominus D_N$ and $N'\ominus D_{N'}$ can likewise be constructed in polynomial time. By
  Lemma~\ref{lem:dc-to-dc-similar}, the networks $N\ominus D_N$ and $N'\ominus D_{N'}$ are
  DC-similar. Their canonical identification can be obtained in polynomial time by matching the
  unique vertices having the same clusters.
	
	By Proposition~\ref{prop:dc-to-dc-similar}, $\dist^-(N, N') = |D_N| + |D_{N'}| + \dist^-(N \ominus
	D_N, N' \ominus D_{N'})$. By the definition of $\tilde d$, this becomes $ \dist^-(N,N') =
	|D_N|+|D_{N'}|+\tilde d$. Theorem~\ref{thm:dist-minus-FPT} implies that $\tilde d$ can be
	computed in time $ O(1.11931^{\tilde d}\cdot n^{O(1)}). $ Since $|D_N|+|D_{N'}|$ can be
	computed in polynomial time, the same asymptotic running-time bound applies to the computation of
	$\dist^-(N,N')$.

	It remains to prove the approximation statement. By Corollary~\ref{cor:dist-minus-approx}, there
	is a polynomial-time $2$-approximation for $ \tilde d$
	Let $d_{\mathrm{APP}}$ denote the value returned by this approximation
	algorithm. Then,	$\tilde d	\leq	d_{\mathrm{APP}}	\leq	2\tilde d$. 
	We return $	d^*_{\mathrm{APP}}	\coloneqq	|D_N|+|D_{N'}|+d_{\mathrm{APP}}$.
	Using Proposition~\ref{prop:dc-to-dc-similar}, we obtain	
	$\dist^-(N,N')=|D_N|+|D_{N'}|+\tilde d\leq d^*_{\mathrm{APP}}$.
Moreover,
\[ d^*_{\mathrm{APP}} = |D_N|+|D_{N'}|+d_{\mathrm{APP}} \leq |D_N|+|D_{N'}|+2\tilde d \leq 2(|D_N|+|D_{N'}|+\tilde d)= 2\dist^-(N,N'). \]
Hence $d^*_{\mathrm{APP}}$ is a polynomial-time $2$-approximation of
$\dist^-(N,N')$.
\end{proof}

Finally, the reduction to \textsc{Vertex Cover} yields a direct integer linear programming approach
for computing $\dist^-$ on distinct-cluster networks, using the standard minimum-vertex-cover ILP
formulation, see e.g.~\cite{Hochbaum1997VertexCover}. For DC-similar networks $N$ and $N'$, this
formulation can be applied directly to the bad ancestry graph $B_{N,N'}$. If
$\operatorname{OPT}_{\mathrm{ILP}}$ denotes its optimal value, then $
\dist^-(N,N')=2\operatorname{OPT}_{\mathrm{ILP}}$. For arbitrary distinct-cluster networks, let $D_N$
and $D_{N'}$ be as in Theorem~\ref{thm:distinct-cluster-FPTapprox}. Then, $
N\ominus D_N $ and $ N'\ominus D_{N'} $ 
are DC-similar, applying the same ILP to $B_{N\ominus D_{N},N'\ominus D_{N'}}$ yields $ \dist^-(N,N') =
|D_N|+|D_{N'}| + 2\operatorname{OPT}_{\mathrm{ILP}}$. An ILP formulation for $\dist$ also appears
possible, although it is considerably more involved. In addition to variables encoding the deleted
and identified vertices, it requires auxiliary variables describing the arcs created by the
$\ominus$-operations. Indeed, an arc $u\to v$ occurs in $N\ominus W$ precisely when $N$ contains a
directed $uv$-path whose internal vertices all belong to $W$. Developing such a formulation is
beyond the scope of this work.


\section{NP-hardness result for $\dist^-$}
\label{sec:NPHard-dist-minus}

The preceding results show that, for DC-similar networks $N$ and $N'$,
the complexity of computing $\dist^-(N,N')$ is governed by the structure
of the bad ancestry graph $B_{N,N'}$. In particular, if
$B_{N,N'}$ belongs to a graph class on which \textsc{Vertex Cover} can be
solved in polynomial time, then $\dist^-(N,N')$ can be computed in
polynomial time as well. This applies, for instance, whenever
$B_{N,N'}$ is bipartite, or more generally perfect.

This naturally raises the question which graphs can occur as bad ancestry graphs. 
Equivalently, for which graphs $G$ do there exist networks $N$ and $N'$ such that
$G=B_{N,N'}$? We provide a sufficient
condition for a graph to be realizable as a bad ancestry graph. To this end, we need the following
notation.

An \emph{orientation} of an undirected graph $G=(V,E)$ is a directed graph
$\overrightarrow{G}=(V,A)$ obtained by replacing each edge $uv\in E$ by
exactly one of the two arcs $(u,v)$ or $(v,u)$.
We say that an undirected graph $G$ is \emph{shortcut-free orientable} if there exists an orientation $\overrightarrow{G}$ of $G$ that is a shortcut-free DAG.
\begin{proposition}\label{prop:sf-orientable-to-bnn}
Let $G$ be a connected undirected graph, and suppose that a shortcut-free
orientation $\overrightarrow{G}$ of $G$ is given. Then one can construct,
in polynomial time from $\overrightarrow{G}$, DC-similar shortcut-free
networks $N$ and $N'$ such that
$   G = B_{N,N'}$.
\end{proposition}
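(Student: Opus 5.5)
The plan is to put the vertices of $G$ in as internal vertices of both networks and attach private leaves to them. $N'$ will realize the full reachability relation of $\overrightarrow{G}$. $N$ will realize that relation with exactly the arcs of $\overrightarrow{G}$ removed. Leaves are added so that the two cluster sets coincide.

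\emph{The key order.} Let $\le^*$ be the reachability relation of $\overrightarrow{G}$, and let $P$ be the set of pairs $(u,w)$ with a directed $uw$-path of length at least $2$. Since $\overrightarrow{G}$ is shortcut-free, no arc $(u,w)$ has an alternative $uw$-path. Hence $P$ is exactly the strict reachability relation with the arcs of $\overrightarrow{G}$ removed. $P$ is transitive, since concatenating two paths of length $\geq 2$ gives one of length $\geq 4$. It is also acyclic, so $P$ is a strict partial order.

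\emph{Construction.} For each $v\in V(G)$ create a leaf $x_v$, and add one further leaf $z$ and a root $\rho$. Put $X=\{x_v\mid v\in V(G)\}\cup\{z\}$.
\begin{itemize}
\item $N'$ consists of $\overrightarrow{G}$, the arcs $(v,x_v)$, the arc $(\rho,z)$, and arcs from $\rho$ to every source of $\overrightarrow{G}$.
\item $N$ consists of the Hasse diagram (cover relation) of $P$ on $V(G)$, the arcs $(v,x_v)$, and, for every arc $(v,w)$ of $\overrightarrow{G}$, an extra arc $(v,x_w)$. It also has $(\rho,z)$ and arcs from $\rho$ to every $P$-maximal vertex.
\end{itemize}
Both networks are clearly constructible in polynomial time, and both have root $\rho$ and leaf set $X$.

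\emph{Checks, in order.}
\begin{enumerate}
\item Clusters coincide. In both networks $\CC(v)=\{x_w\mid w\le^* v\}$. In $N$, the direct arcs $(v,x_w)$ supply the arc-children of $v$, and all other descendants are reached through $P$.
\item Both networks are distinct-cluster. If $\CC(u)=\CC(v)$, then $x_u\in\CC(v)$ and $x_v\in\CC(u)$, so $u\le^* v\le^* u$ and hence $u=v$ by acyclicity. The root cluster $X$ contains $z$ and therefore differs from every $\CC(v)$. Thus $N$ and $N'$ are DC-similar with canonically identified vertex sets.
\item Both networks are shortcut-free. For $N'$ this is inherited from $\overrightarrow{G}$; the new arcs enter leaves with a unique parent, or leave $\rho$ towards sources. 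For $N$, Hasse diagrams are shortcut-free. An arc $(v,x_w)$ could only have a detour $v\leadsto u\to x_w$ with $u\in\{w\}\cup\{\text{arc-parents of } w\}$ and $(v,u)\in P$. Either choice yields a $vw$-path of length $\geq 2$ in $\overrightarrow{G}$, contradicting shortcut-freeness at the arc $(v,w)$.
\item Bad ancestry pairs are exactly the arcs of $\overrightarrow{G}$. Among vertices of $V(G)$, $N'$ realizes $\le^*$ and $N$ realizes $P$, and these differ precisely on the arcs. Lemma~\ref{lm:bypass-is-vc} already shows that the root and leaves lie in no bad pair. Connectedness of $G$ ensures every vertex of $G$ is incident to an edge, so $V(B_{N,N'})=V(G)$ and $E(B_{N,N'})=E(G)$.
\end{enumerate}

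The main obstacle is reconciling two requirements at once: the two networks must disagree on ancestry exactly along the edges, yet have identical clusters and both stay shortcut-free. The direct arcs $(v,x_w)$ in $N$ restore the clusters. The most delicate step is the shortcut-freeness check for these arcs, which uses the shortcut-freeness of $\overrightarrow{G}$ essentially.
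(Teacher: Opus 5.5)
There is a genuine gap in your check~2. You compare vertices of $V(G)$ with one another and with $\rho$, but you never compare an internal vertex with a leaf, and that comparison fails.

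Every DAG has a sink. For a sink $v$ of $\overrightarrow{G}$, the leaf $x_v$ is the only child of $v$ in both networks: in $N'$ the vertex $v$ has no out-neighbours, and in $N$ it has no $P$-successors and no extra arcs $(v,x_w)$. Hence $\CC_N(v)=\CC_{N'}(v)=\{x_v\}$, which is also the cluster of the leaf $x_v$. So neither network is distinct-cluster and they are not DC-similar. The canonical identification, and with it $B_{N,N'}$, is then undefined. Your appeal to Lemma~\ref{lm:bypass-is-vc} in check~4 also presupposes DC-similarity, so it is unavailable. This is exactly why the paper adds a leaf $\ell^*$ as a child of every vertex of $\overrightarrow{G}$: it forces $|\CC(u)|\geq 2$ for all $u\in V(\overrightarrow{G})$.

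The repair fits your framework. Add a new leaf $\ell^*$ with arcs from the sinks of $\overrightarrow{G}$ in $N'$ and from the $P$-minimal vertices in $N$. Then:
\begin{itemize}
\item $\ell^*\in\CC(v)$ for every $v\in V(G)$ in both networks, so the clusters still agree and now have size at least $2$.
\item The new arcs are not shortcuts, because their tails have no proper descendants in $V(G)$.
\item Nothing else in checks~1, 3 and~4 changes.
\end{itemize}

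With this fix your proof is correct, and it is essentially the paper's construction already in reduced form. The paper uses the transitive closure of $\overrightarrow{G}$ for one network and the distance-$\geq 2$ relation for the other. It joins every $u$ to all $\ell_v$ with $v\preceq u$ and to $\ell^*$, and only at the end passes to $N^-$ and $(N')^-$. The arcs that survive are precisely your $\overrightarrow{G}$, your Hasse diagram of $P$, and your leaf arcs $(v,x_v)$ and $(v,x_w)$ for arcs $(v,w)$, with the roles of $N$ and $N'$ interchanged.

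Your direct route avoids the final reduction step but requires an explicit shortcut check. In that check you should also treat the arcs $(v,x_v)$ in $N$. The other parents of $x_v$ are the in-neighbours of $v$ in $\overrightarrow{G}$, so a detour through one of them would create a directed cycle; acyclicity rules this out.
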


\begin{proof}
Let $\overrightarrow{G}$ be a shortcut-free orientation of the connected undirected $G$.  
For $u, v \in V(\overrightarrow{G})$, we denote by $d_{\overrightarrow{G}}(u, v)$ the number of arcs on a shortest directed path from $u$ to $v$ in $\overrightarrow{G}$, and define $d_{\overrightarrow{G}}(u, v) = -1$ if no such path exits.

We construct two auxiliary networks $N$ and $N'$ from
$\overrightarrow{G}$ as follows. During the construction, these networks
may contain shortcuts; these shortcuts will be removed at the end of the
proof. 
We start with $V(N) = V(N') = V(\overrightarrow{G})$.  Then, we apply the following steps.
\begin{enumerate}[noitemsep]
\item 
For each $u,v\in V(\overrightarrow G)$ with
$d_{\overrightarrow G}(u,v)\geq 1$, add the arc $(u,v)$ to $N$.

\item 
For each $u,v\in V(\overrightarrow G)$ with
$d_{\overrightarrow G}(u,v)\geq 2$, add the arc $(u,v)$ to $N'$.

\item 
For each vertex $u \in V(\overrightarrow{G})$, 
add a new leaf $\ell_u$ to both $N$ and $N'$.
Then, add the arc $(u, \ell_u)$ to both $N$ and $N'$. 
Moreover, for each $v \in V(\overrightarrow{G})$ such that $v \prec_{\overrightarrow{G}} u$, add the arc $(u, \ell_v)$ to both $N$ and $N'$.  

\item
Add a new vertex $r$ to both $N$ and $N'$,  which will be the root. 
Also add a new leaf $\ell_r$ and the arc $(r,\ell_r)$ to both networks. Then, in
each of the two networks, add an arc from $r$ to every vertex different
from $r$ that currently has indegree zero.

\item 
Add a new leaf $\ell^*$ to both $N$ and $N'$. 
For each $u \in V(\overrightarrow{G})$, add the arc $(u, \ell^*)$ to both $N$ and $N'$.  
\end{enumerate}

The construction can clearly be carried out in polynomial time.

We first observe that $N$ and $N'$ are networks on the same leaf set. Since
$\overrightarrow G$ is a DAG, its vertices admit a topological ordering.
All arcs added in Steps~(1) and~(2) go from an earlier vertex to a later
vertex with respect to this ordering. The arcs added in Steps~(3) and~(5)
end in leaves, and the arcs added in Step~(4) start at the new vertex
$r$. Hence no directed cycle is created, and both $N$ and $N'$ are DAGs.
Moreover, by Step~(4), every vertex different from $r$ has indegree at
least one, while $r$ has indegree zero. Thus $r$ is the unique root of both
networks. By Step~(5), every vertex $u\in V(\overrightarrow G)$ has the
child $\ell^*$ and is therefore not a leaf. Consequently, the leaves of
both $N$ and $N'$ are precisely
$
    X=\{\ell_u\mid u\in V(\overrightarrow G)\}\cup\{\ell_r,\ell^*\}$.
Thus $N$ and $N'$ are networks on the same leaf set $X$.

Notice that for all $u, v \in V(\overrightarrow{G})$ it holds that 
$v \prec_{\overrightarrow{G}} u$ if and only if $v \prec_N u$.
Indeed,  after Step (1) of the construction, 
the restriction of $N$ to
$V(\overrightarrow G)$ contains precisely the arcs $(u,v)$ corresponding to a
directed $uv$-path $u\leadsto v$ in $\overrightarrow G$. The subsequent steps only add leaves
with incoming arcs and a root with outgoing arcs, and therefore cannot
change ancestor relations between vertices of $V(\overrightarrow G)$.
For $N'$, it holds for all $u, v \in V(\overrightarrow{G})$ that 
$    v\prec_{N'} u
\implies 
    v\prec_{\overrightarrow G} u$. 
Indeed, every arc of $N'$ between vertices of $V(\overrightarrow G)$
corresponds to $uv$-path $u\leadsto v$ containing at least two arcs. 
The converse need not hold, since the arcs of $\overrightarrow G$ themselves were omitted
from $N'$ in Step~(2).

Next we show that $N$ and $N'$ are DC-similar.  
By construction, $V(N)=V(N')$: both networks start with vertex set $V(\overrightarrow G)$, and the
same root and leaves are added to both.

We claim that, for every $u\in V(\overrightarrow G)$, 
\begin{equation}
 \CC_N(u)=\CC_{N'}(u) = \mathcal L_u,  \label{eq:claim1}
\end{equation} 
with $\mathcal L_u \coloneqq \{\ell_v\colon v\preceq_{\overrightarrow G} u\}\cup\{\ell^*\}$.
The inclusion  $\mathcal L_u\subseteq \CC_N(u), \CC_{N'}(u)$ 
follows directly
from Step~(3) and Step~(5), since we explicitly add arcs from $u$ to all leaves $\ell_v$ with
$v\preceq_{\overrightarrow G}u$, and also the arc $(u,\ell^*)$.

Conversely, let $\ell\in \CC_N(u)$. Since the only leaves are  $\ell_r$ and $\ell^*$ and 
the vertices $\ell_v$, $v\in V(\overrightarrow G)$ and since $\ell_r$ is reachable only
from $r$, we have either $\ell=\ell^*$ or $\ell=\ell_v$ for some $v\in V(\overrightarrow G)$.
Suppose $\ell=\ell_v$. If $(u,\ell_v)$ is an arc of $N$, then $v\preceq_{\overrightarrow G}u$ by
construction. Otherwise, there is a path in $N$ from $u$ to some parent $w$ of $\ell_v$. Then $w\in
V(\overrightarrow G)$, and the construction of $N$ implies that $w\preceq_{\overrightarrow G}u$ and
$v\preceq_{\overrightarrow G}w$. Hence $v\preceq_{\overrightarrow G}u$. Thus $ \CC_N(u)\subseteq
\mathcal L_u$. This proves the claimed equality in Equation~\ref{eq:claim1} for $N$.
The argument for $N'$ is similar. Indeed, if $\ell_v\in \CC_{N'}(u)$, then either $(u,\ell_v)$ is
an arc of $N'$, in which case $v\preceq_{\overrightarrow G}u$ by construction, or $u$ reaches a
parent $w$ of $\ell_v$ in $N'$. In the latter case, the construction of $N'$ implies
$w\preceq_{\overrightarrow G}u$ and $v\preceq_{\overrightarrow G}w$, and hence
$v\preceq_{\overrightarrow G}u$. Hence, claimed equality in Equation~\ref{eq:claim1} holds for $N'$. 
It follows that all vertices in $V(\overrightarrow G)$ have the same cluster in $N$ and $N'$.

We next argue that $N$ and $N'$ are distinct-cluster. First notice that $r$ is the only vertex of
$N$ and $N'$ whose cluster contains $\ell_r$ and at least one further leaf. Thus no other vertex has
the same cluster as the root. Consider $u,v\in V(\overrightarrow G)$ with $u\neq v$. Since
$\overrightarrow G$ is acyclic, at least one of $ u\not\preceq_{\overrightarrow G} v $ or $
v\not\preceq_{\overrightarrow G} u$ must hold. Without loss of generality, assume
$u\not\preceq_{\overrightarrow G} v$. Then, by Equation~\ref{eq:claim1}, $C_N(u)$ contains $\ell_u$,
whereas $C_N(v)$ does not. Hence $C_N(u)\neq C_N(v)$. Since, by Equation~\ref{eq:claim1}, $
C_{N'}(u)=C_N(u)$ and $C_{N'}(v)=C_N(v), $  we also have $C_{N'}(u)\neq C_{N'}(v)$.
Finally, every vertex of $N$ or $N'$ outside $\{r\}\cup V(\overrightarrow G)$ is a leaf. Distinct
leaves have distinct singleton clusters. Moreover, for every $u\in V(\overrightarrow G)$, we have $
|C_N(u)|=|C_{N'}(u)|\geq 2 $, since the cluster of $u$ contains both $\ell_u$ and $\ell^*$. Thus no
vertex in $V(\overrightarrow G)$ has the same cluster as a leaf. It follows that both $N$ and $N'$
are distinct-cluster. Since corresponding vertices have the same clusters in $N$ and $N'$, the two
networks are DC-similar.

To summarize, $N$ and $N'$ are DC-similar networks on $X$ with $V(N)=V(N')$. 
This together with Equation~\ref{eq:claim1} and the facts that $\CC_N(r)=\CC_{N'}(r)=X$
and  $\CC_N(x)=\CC_{N'}(x)=\{x\}$ for all $x\in X$
 implies that $N$ and $N'$ have canonically identified vertex sets.

We next focus on $B_{N,N'}$. We show that, for all $u,v\in V(\overrightarrow G)$, $(u,v)\in
E(\overrightarrow G)$ if and only if $ (u,v)$  is a bad ancestry pair of $N$ and $N'$.

First, let $(u,v)\in E(\overrightarrow G)$. Then
$v\prec_{\overrightarrow G}u$, and hence $v\prec_N u$. Thus it remains to
show that $v\not\prec_{N'}u$. Suppose, for contradiction, that there is a
directed $uv$-path
$  (w_1,w_2,\ldots,w_k)$ with $u=w_1$ and $v=w_k$ 
in $N'$. Since $u,v\in V(\overrightarrow G)$ and all added
leaves have outdegree zero, no leaf can occur on this path. Moreover, the
root $r$ cannot occur on this path. Hence all vertices $w_i$ belong to
$V(\overrightarrow G)$.
Since $d_{\overrightarrow G}(u,v)=1$, the arc $(u,v)$ was not added to
$N'$ in Step~(2). Thus the path has length at least two. For every
$i\in\{1,\ldots,k-1\}$, the arc $(w_i,w_{i+1})$ of $N'$ implies, by
construction, that there is a directed path from $w_i$ to $w_{i+1}$ in
$\overrightarrow G$ of length at least two. Hence,
$v= w_k \prec_{\overrightarrow G} w_{k-1} \cdots \prec_{\overrightarrow G} w_1=u$.
Therefore $\overrightarrow G$ contains a directed path from $u$ to $v$
different from the single arc $(u,v)$. 
This contradicts the fact that
$\overrightarrow G$ is shortcut-free. Hence $v\not\prec_{N'}u$, and so
$(u,v)$ is a bad ancestry pair of $N$ and $N'$.

Conversely, let $(u,v)$ be a bad ancestry pair of $N$ and $N'$. The root
and the leaves are not contained in any bad ancestry pair, and hence we may
assume that $u,v\in V(\overrightarrow G)$. 
If $(u,v)\in E(\overrightarrow G)$, then we are done. Thus assume $(u,v)\notin E(\overrightarrow G)$. 
We distinguish two cases.
First suppose that 
$v\prec_{\overrightarrow G}u$.  Since $(u,v)\notin E(\overrightarrow G)$, 
 $d_{\overrightarrow G}(u,v)\geq 2$, and therefore $(u,v)$ is an arc
of $N'$ by Step~(2). Hence $v\prec_{N'}u$. Moreover,
$v\prec_{\overrightarrow G}u$ implies $v\prec_Nu$. Thus $N$ and $N'$
agree on the ancestry relation between $u$ and $v$, and so $(u,v)$ is not
bad; a contradiction.
Now suppose that $v\not\prec_{\overrightarrow G}u$. 
Then $v\not\prec_Nu$. We show that also $v\not\prec_{N'}u$. Suppose, for contradiction, that
$v\prec_{N'}u$. Then there is a directed path from $u$ to $v$ in $N'$. As above, such a path cannot
use leaves or the root, and therefore all its vertices belong to $V(\overrightarrow G)$. 
Each arc of
this path corresponds to a directed path in $\overrightarrow G$. Consequently,
$v\prec_{\overrightarrow G}u$, a contradiction. Hence $v\not\prec_{N'}u$. Thus $N$ and $N'$ again
agree on the ancestry relation between $u$ and $v$, contradicting that $(u,v)$ is bad.

It follows that the bad ancestry pairs of $N$ and $N'$ are precisely the arcs of $\overrightarrow
G$. Therefore, for all $u,v\in V(G)$, $ uv\in E(G) $ if and only if $ uv\in E(B_{N,N'}). $
Since $G$ is connected, it has no isolated vertices. Hence every vertex of
$G$ is incident with some edge of $B_{N,N'}$. Moreover,
$B_{N,N'}$ contains only vertices that are contained in bad ancestry pairs,
and these are all in $V(\overrightarrow G)$. Thus
$   V(G)=V(B_{N,N'})$
Consequently,
$   G = B_{N,N'} $.

It remains to remove shortcuts. Since shortcut removal does, by Lemma~\ref{lem:properties-SF-G_NEW}, 
 not change any ancestor relation, a pair $(u,v)$ is a bad ancestry pair of $N$ and $N'$ if
and only if it is a bad ancestry pair of $N^-$ and $(N')^-$. Therefore,
$ G = B_{N^-,(N')^-}$.
In particular, shortcut removal preserves clusters and does not change the
vertex set. Hence $N^-$ and $(N')^-$ remain DC-similar networks on $X$ with canonically identified
vertex sets. By Lemma~\ref{lem:properties-SF-G_NEW}, 
they are shortcut-free. Replacing $N$ and $N'$ by $N^-$ and $(N')^-$ gives
the desired DC-similar shortcut-free networks.
\end{proof}

We can now reduce \textsc{Vertex Cover} to the computation of $\dist^-$. 
By Proposition~\ref{prop:sf-orientable-to-bnn}, it suffices to identify a
class of graphs on which \textsc{Vertex Cover} is NP-hard and whose members
admit shortcut-free orientations.
We use so-called 3-connected cubic graphs of girth greater than $3$. 
The problem \textsc{Vertex Cover} remains
NP-complete on this class \cite{Uehara1996}.  In particular, these undirected graphs are connected,
triangle-free (i.e.,there are no three distinct vertices that are pairwise adjacent), 
and cubic (i.e., every vertex has degree three). 
We now observe that every triangle-free cubic graph admits a shortcut-free orientation.

\begin{lemma}\label{lem:sf-free-orientable} 
Every connected triangle-free cubic graph $G$ is shortcut-free
orientable. Moreover, a shortcut-free orientation of $G$ can be found in polynomial time.
\end{lemma}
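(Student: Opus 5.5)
The plan is to derive a shortcut-free orientation from a proper vertex $3$-colouring of $G$. We orient every edge from the endpoint with the smaller colour to the endpoint with the larger colour, and then show that triangle-freeness rules out all shortcuts.

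\emph{Step 1 (colouring).} Since $G$ is connected and cubic, its maximum degree is $3$. Since $G$ is triangle-free, it is not the complete graph $K_4$, and it is not an odd cycle. Brooks' theorem therefore gives a proper colouring $c\colon V(G)\to\{1,2,3\}$. For the algorithmic claim we use one of the standard constructive proofs of Brooks' theorem, e.g.\ Lovász's argument, which finds such a colouring in polynomial (indeed linear) time. This step is the only non-elementary ingredient, and we will simply cite it; the only point to check is that the hypotheses (connected, not complete, not an odd cycle) hold here.

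\emph{Step 2 (orientation and acyclicity).} Let $\overrightarrow{G}$ be the orientation that replaces each edge $uv$ by $(u,v)$ if $c(u)<c(v)$, and by $(v,u)$ otherwise. Adjacent vertices receive different colours, so every edge is oriented. Colours strictly increase along every arc, so $\overrightarrow{G}$ is a DAG. Moreover, every directed path has at most two arcs, since it visits at most the three colours $1,2,3$ in increasing order.

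\emph{Step 3 (no shortcuts).} Suppose some arc $(u,w)$ were a shortcut. Then there would be a directed $uw$-path avoiding this arc, and such a path has length at least $2$. By Step 2 its length is exactly $2$, say $u\to x\to w$. But then $u,x,w$ are pairwise adjacent in $G$, which is a triangle, contradicting triangle-freeness. Hence $\overrightarrow{G}$ is a shortcut-free orientation of $G$, and it is computed in polynomial time. I expect no real obstacle beyond invoking the algorithmic version of Brooks' theorem correctly.
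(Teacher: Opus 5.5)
Your proof is correct and follows essentially the same route as the paper: Brooks' theorem gives a proper $3$-colouring, edges are oriented by increasing colour, so every directed path has at most two arcs and any shortcut would close a triangle. Your explicit check that a cubic graph is neither $K_4$ nor an odd cycle verifies the full hypotheses of Brooks' theorem, which the paper's proof only partly states.
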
 
\begin{proof} 
Let $G$ be a connected triangle-free cubic graph. By Brooks' theorem, 
$G$ is $3$-colorable, since $G$ cannot be graph isomorphic to 
a complete graph on $n\geq 3$ vertices by the triangle-free assumption (see~\cite[Chapter 8]{bondy1976graph}).
Moreover, such a $3$-coloring can be found in polynomial time
and yields a partition $ V(G)=X\dot\cup Y\dot\cup Z $ into three color classes,
    where vertices in the same class have the same color. Since the coloring is proper, each of $X$,
    $Y$, and $Z$ is an independent set; equivalently, no edge of $G$ has both endpoints in the same
    color class.
Orient every edge between $X$ and $Y$ from $X$ to $Y$,
every edge between $X$ and $Z$ from $X$ to $Z$, and every edge between $Y$ and $Z$ from $Y$ to $Z$
which results in $\overrightarrow G$.
It is easy to verify that this orientation is acyclic and thus that  $\overrightarrow G$ is a DAG.
It remains to show that this orientation is shortcut-free. Since directed paths can only move from $X$
to $Y$ to $Z$, every directed path has length at most two. Hence any shortcut would have to be an
arc $x\to z$ with $x\in X$ and $z\in Z$, together with a directed path $ x\to y\to z $ for some
$y\in Y$. But then $xy$, $yz$, and $xz$ are edges of $G$, and therefore $x,y,z$ form a triangle,
contradicting the assumption that $G$ is triangle-free. Thus the orientation is shortcut-free.
\end{proof}

\begin{theorem}\label{thm:dist-minus-NP-hard}
Computing $\dist^-$ is NP-hard, even for DC-similar shortcut-free
networks $N$ and $N'$.
\end{theorem}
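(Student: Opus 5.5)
We reduce from \textsc{Vertex Cover} restricted to $3$-connected cubic graphs of girth greater than $3$, which remains NP-complete~\cite{Uehara1996}. Let $(G,k)$ be such an instance, asking whether $\tau(G)\leq k$. Every graph in this class is connected, triangle-free and cubic. Hence Lemma~\ref{lem:sf-free-orientable} yields, in polynomial time, a shortcut-free orientation $\overrightarrow G$ of $G$.

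Next we feed $\overrightarrow G$ into Proposition~\ref{prop:sf-orientable-to-bnn}. This produces, in polynomial time, DC-similar shortcut-free networks $N$ and $N'$ with canonically identified vertex sets such that $G=B_{N,N'}$. It is worth checking that the canonical identification used by the proposition is the one assumed in Theorem~\ref{thm:2VC=dist-}. The proof of Proposition~\ref{prop:sf-orientable-to-bnn} states this explicitly, since $V(N)=V(N')$ and corresponding vertices have equal clusters. Moreover, the identification persists after shortcut removal, because shortcut removal changes neither vertices nor clusters.

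Theorem~\ref{thm:2VC=dist-} then gives $\dist^-(N,N')=2\tau(B_{N,N'})=2\tau(G)$. Consequently, $\tau(G)\leq k$ holds if and only if $\dist^-(N,N')\leq 2k$. This is a polynomial-time many-one reduction from the NP-complete decision version of \textsc{Vertex Cover} on this graph class to deciding $\dist^-(N,N')\leq\delta$ with $\delta=2k$. Any algorithm computing $\dist^-$ on DC-similar shortcut-free networks would therefore decide \textsc{Vertex Cover} on cubic triangle-free graphs, so computing $\dist^-$ is NP-hard on this restricted class.

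There is no real obstacle, since all the work sits in Proposition~\ref{prop:sf-orientable-to-bnn} and Lemma~\ref{lem:sf-free-orientable}. The main point to verify is that the hypotheses of those results are met. Connectivity is needed so that $B_{N,N'}$ has no isolated vertices and equals $G$ exactly; it holds because the graphs are $3$-connected. Triangle-freeness is needed for the shortcut-free orientation; it follows from girth greater than $3$. One should also confirm that the size of $N,N'$ is polynomial in $|V(G)|$, which is clear from the construction: there are $O(|V(G)|)$ vertices and $O(|V(G)|^2)$ arcs.
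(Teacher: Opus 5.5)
Your proposal is correct and follows the paper's proof: the same reduction from \textsc{Vertex Cover} on $3$-connected cubic graphs of girth greater than $3$, using the shortcut-free orientation from Lemma~\ref{lem:sf-free-orientable} and the realization $G=B_{N,N'}$ from Proposition~\ref{prop:sf-orientable-to-bnn}, to conclude that $\tau(G)\le k$ if and only if $\dist^-(N,N')\le 2k$. The only difference is cosmetic: you cite Theorem~\ref{thm:2VC=dist-} directly for $\dist^-(N,N')=2\tau(B_{N,N'})$, whereas the paper obtains the same identity from Lemma~\ref{lm:bypass-is-vc} and Corollary~\ref{cor:min-bypassing-set}.
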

\begin{proof}
We reduce from \textsc{Vertex Cover} restricted to 3-connected cubic
graphs of girth greater than $3$, which is NP-hard
\cite[Thm~3]{Uehara1996}. Let $(G,k)$ be such an instance. In particular,
$G$ is connected, triangle-free, and cubic.
By Lemma~\ref{lem:sf-free-orientable}, we can find in polynomial time a
shortcut-free orientation $\overrightarrow G$ of $G$. Applying
Proposition~\ref{prop:sf-orientable-to-bnn} to $\overrightarrow G$, we
obtain in polynomial time DC-similar shortcut-free networks $N$ and $N'$
such that $   G= B_{N,N'}$.
Therefore,    $\tau(G)=\tau(B_{N,N'})$,
where $\tau$ denotes the minimum size of a vertex cover.
By Lemma~\ref{lm:bypass-is-vc}, minimum bypassing sets for $N$ and $N'$
correspond to minimum vertex covers of $B_{N,N'}$. Moreover, by
Corollary~\ref{cor:min-bypassing-set},
$
    \dist^-(N,N')=2\tau(B_{N,N'})$.
Consequently,
$ G $ has a vertex cover of size at most  $k$ if and only if 
    $\dist^-(N,N')\leq 2k$.
Thus, deciding whether $\dist^-(N,N')\leq \delta$ is NP-hard, even for
DC-similar shortcut-free networks. Hence computing $\dist^-$ is NP-hard as
well.

\end{proof}

\section{Summary and outlook}
\label{sec:outlook}

We introduced two operational distances, $\dist$ and $\dist^-$, for comparing rooted phylogenetic
networks by means of the $\ominus$-operator. We established their metric properties and related them
to classical cluster-based dissimilarities. While both distances are computationally hard in
general, $\dist^-$ can be computed in polynomial time for several important network classes. For
distinct-cluster networks, its computation reduces to \textsc{Vertex Cover}, which yields
fixed-parameter, approximation, and integer-programming approaches. Beyond providing a numerical
measure of dissimilarity, optimal deletion sets explicitly localize the structural and ancestral
disagreements between the two networks and determine a largest common reduced ancestry structure

Several natural questions remain open. First, call a network $N$
\emph{$c$-distinct-cluster} if every cluster of $N$ has multiplicity at most $c$. The
distinct-cluster case corresponds to $c=1$, whereas increasing $c$ allows progressively more
vertices to induce the same cluster. Since isomorphism of two $c$-distinct-cluster networks is
fixed-parameter tractable with respect to $c$ (by using the FPT algorithm for \emph{colored isomorphism}, parameterized by the maximum size of a color class, see~\cite{Arvind2015-xf}), it is natural to ask whether
computing $\dist^-(N,N')$ is fixed-parameter tractable when parameterized jointly by $c$ and the
distance
\[
d\coloneqq \dist^-(N,N').
\]
A positive answer would extend the algorithmic results for distinct-cluster networks to a
substantially broader class.

A second direction concerns the structure of the bad ancestry graph. For which classes of
distinct-cluster networks does $B_{N,N'}$ belong to a graph class on which
\textsc{Vertex Cover} can be solved in polynomial time? Such a characterization could identify new
classes of phylogenetic networks for which $\dist^-$ is efficiently computable. More generally,
which graphs can arise as bad ancestry graphs? In particular, is every graph without isolated
vertices realizable as the bad ancestry graph of some pair of DC-similar networks?

It would also be interesting to understand the influence of standard structural parameters and
properties of phylogenetic networks. Is computing $\dist^-$ fixed-parameter tractable with respect
to the level of the input networks, that is, the maximum number of reticulation vertices contained
in a biconnected component? This question appears nontrivial because applying the
$\ominus$-operator may increase the level. Other potentially useful parameters include the total
number of reticulation vertices and the maximum indegree. More generally, structural restrictions
on the ancestor relations of certain network classes may impose useful properties on the associated
bad ancestry graphs and thereby lead to more efficient algorithms.

\section{Acknowledgements}

The authors thank Anna Lindeberg for her feedback on a previous version of the manuscript, and for pointing out helpful references. GES would like to thank Université de Sherbrooke, for a fully-funded extended stay during which major progress were made on this project. 
ML acknowledges financial support from the Natural Sciences and Engineering Research Council of Canada (NSERC) Discovery program, and from the Fonds de recherche du Québec – Nature et technologies (FRQNT) NOVA program.

\bibliographystyle{spbasic}
\bibliography{biblio}

\end{document}